\newcommand{\subchain}[2]{\ensuremath{C[#1, #2]}}
\newcommand{\subpolygon}[2]{\ensuremath{P[#1, #2]}}
\newcommand{\radrestricted}{\ensuremath{r_{ij}^*}}
\newcommand{\maxrad}[2]{\ensuremath{r_\mathrm{max}{(#1,#2)}}}
\newcommand{\radbd}[2]{\ensuremath{r({#1}, {#2})}}
\newcommand{\rad}[1]{\ensuremath{r(#1)}}
\newcommand{\spm}[1]{\mathsf{SPM}_{#1}} 
\newcommand{\pu}{\alpha}
\newcommand{\pw}{\beta} 
\newcommand{\sarcs}{\ensuremath{S}}
\newcommand{\interior}[1]{\ensuremath{\mathrm{int}(#1)}}
\newcommand{\bd}[1]{\ensuremath{\partial #1}}
\newcommand{\vcw}[1]{\ensuremath{v_{cw}(#1)}}
\newcommand{\vccw}[1]{\ensuremath{v_{ccw}(#1)}}
\newtheorem{theorem}{Theorem} 
\newtheorem{lemma}[theorem]{Lemma}
\newtheorem{corollary}[theorem]{Corollary}
\newcommand{\eps}{\varepsilon}
\newbox\ProofSym \setbox\ProofSym=\hbox{%
  \unitlength=0.18ex%
  \begin{picture}(10,10) \put(0,0){\framebox(9,9){}}
    \put(0,3){\framebox(6,6){}}
  \end{picture}}
\title{The geodesic $2$-center problem in a simple polygon\thanks{This
    work was supported by the NRF grant 2011-0030044 (SRC-GAIA) funded
    by the government of Korea.}} 
\author{Eunjin Oh\thanks{Pohang University of Science and Technology,
    Korea. Email: {\tt{\{jin9082, heekap\}@postech.ac.kr}}} \and
  Jean-Lou De Carufel\thanks{University of Ottawa, Canada.  Email:
    {\tt{jdecaruf@uottawa.ca}}} \and Hee-Kap Ahn\footnotemark[2]~\thanks{Corresponding author.} }
\date{}
\begin{document}
\maketitle
\begin{abstract}
  The \emph{geodesic $k$-center problem in a simple polygon} with $n$
  vertices consists in the following.  Find a set $S$ of $k$ points in
  the polygon that minimizes the maximum geodesic distance from any
  point of the polygon to its closest point in $S$.  In this paper, we
  focus on the case where $k=2$ and present an exact algorithm that
  returns a geodesic $2$-center in $O(n^2\log^2 n)$ time.
\end{abstract}

\section{Introduction}
The \emph{geodesic $k$-center problem in a simple polygon $P$} with
$n$ vertices consists in the following.  Find a set $S$ of $k$ points
in $P$ that minimizes
\begin{linenomath}
$$\max_{p \in P} \min_{s \in S} d(s,p) ,$$
\end{linenomath}
where $d(x,y)$ is the length of the shortest path between $x$ and $y$
lying in $P$ (also called \emph{geodesic distance}).  The set $S$ is
called a \emph{$k$-center of $P$}.  Geometrically, this is equivalent
to find $k$ smallest-radius geodesic disks with the same
radius whose union contains $P$.
 
The \emph{$2$-dimensional Euclidean $k$-center problem} is similar to
the geodesic $k$-center problem in a simple polygon $P$.  The only
difference is that in the Euclidean $k$-center problem, the distance
between two points $x$ and $y$ is their Euclidean distance, denoted by
$\|x-y\|$.  That is, given a set $\mathcal{P}$ of $n$ points in the
plane, find a set $S$ of $k$ points in $\mathbb{R}^2$ that minimizes
\begin{linenomath}
$$\max_{p \in \mathcal{P}} \min _{s \in S} \|p-s\| .$$
\end{linenomath}

Computing a $k$-center of points is a typical problem in clustering.
Clustering is the task of partitioning a given set into subsets subject
to various objective functions, which have applications
in pattern-analysis, decision-making and
machine-learning situations including data mining, document retrieval, and
pattern classification~\cite{clustering-reviw}.
The Euclidean $k$-center problem has been studied extensively.  The
$1$-center of $\mathcal{P}$ coincides with the center of the minimum enclosing
circle of $\mathcal{P}$, which can be computed in linear
time~\cite{DBLP:journals/siamcomp/Megiddo83a}.  Chan showed that the
$2$-center problem can be solved in $O(n \log^2 n \log^2 \log n)$
deterministic time~\cite{Euclidean-2-center}.  The $k$-center problem
can be solved in $O(n^{O(\sqrt{k})})$ time~\cite{Euclidean-k-center}.
It is NP-hard to approximate the Euclidean $k$-center problem
within an approximation factor smaller than
1.822~\cite{FederGreene88}.  Kim and Shin presented an $O(n \log^3 n
\log \log n)$-time algorithm for computing two congruent disks whose
union contains a convex $n$-gon~\cite{convex-center}.

The $1$-center problem has also been studied under the geodesic metric
inside a simple polygon.  Asano and Toussaint presented the first
algorithm for computing the geodesic $1$-center of a simple polygon
with $n$ vertices in $O(n^4 \log n)$ time~\cite{1-center-first}.  In
1989, the running time was improved to $O(n \log n)$ time by Pollack
et al.~\cite{1-center1989}.  Their technique can be described as
follows.  They first triangulate the polygon and find the triangle $T$
that contains the center in $O(n\log n)$ time.   Then they
  subdivide $T$ further and find a region containing the center
  such that the combinatorial structures of the geodesic paths from
  each vertex of $P$ to all points in that region are the same.  Finally,
the problem is reduced to find the lowest point of the upper envelope
of a family of distance functions in the region, which can be done in linear time
using a technique by Megiddo~\cite{megiddo-minmax}.  Recently, the
running time for computing the geodesic $1$-center was improved to linear
by Ahn et al.~\cite{1-center-SoCG,1-center}, which is optimal.  In their paper,
instead of triangulating the polygon, they construct a set of $O(n)$
chords.
Then they recursively subdivide the polygon into $O(1)$
cells by a constant number of chords and find the cell containing
the center.  Finally, they obtain a triangle
containing the center.  In this triangle, they find the lowest point of
the upper envelope of a family of functions, which is the geodesic 1-center
of the polygon, using an
algorithm similar to the one of Megiddo~\cite{megiddo-minmax}.

Surprisingly, there has been no result for the geodesic $k$-center
problem for $k > 1$, except the one by Vigan~\cite{Vigan2013}.  They
gave an exact algorithm for computing a geodesic $2$-center
in a simple polygon with $n$ vertices, which runs in $O(n^8 \log n)$
time.  The algorithm follows the framework of Kim and
Shin~\cite{convex-center}.
However, the algorithm does not seem to work as it is because of the
following reasons.  They claim that the decision version of the
geodesic $2$-center problem in a simple polygon can be solved using a
technique similar to the one by Kim and Shin~\cite{convex-center}
without providing any detailed argument.  
They apply parametric search using their decision algorithm, but
they do not describe how their parallel algorithm works.  The
parallel algorithm by Kim and Shin does not seem
to extend for this problem.

\subsection{Our results}
\label{section our results}
In this paper, we present an $O(n^2\log^2n)$-time algorithm that
solves the geodesic $2$-center problem in a simple polygon with $n$
vertices.  The main steps of our algorithm can be described as
follows.  We first observe that a simple polygon $P$ can always be
partitioned into two regions by a geodesic path $\pi(x,y)$ such that
\begin{itemize}
\item $x$ and $y$ are two points on the boundary of $P$, and
\item the set consisting of the geodesic 1-centers of the two regions
of $P$ defined by $\pi(x,y)$ is a geodesic 2-center of $P$.
\end{itemize}
Then we consider $O(n)$ candidate pairs of edges of $P$, one of which,
namely $(e,e')$, satisfies $x\in e$ and $y\in e'$.  We explain how to
find these candidate pairs of edges in $O(n^2\log n)$ time.  Finally, we
present an algorithm that computes a $2$-center restricted to
such a pair of edges in $O(n\log^2n)$ time using parametric
search~\cite{parametric} with a decision algorithm and a parallel
algorithm.

\section{Preliminary}
A polygon $P$ is said to be \emph{simple} if it is bounded by a closed
path, and every vertices are
distinct and edges intersect only at common endpoints.  The polygon
$P$ is \emph{weakly simple} if, for any $\eps > 0$, there is a simple
polygon $Q$ such that the Fr\'{e}chet distance between $P$ and $Q$ is
at most $\eps$~\cite{weakly}.  
The algorithms we use in this paper are designed for simple
polygons, but they also work for weakly simple polygons.

The vertices of a simple polygon $P$ with $n$ vertices are labeled
$v_1, \ldots, v_n$ in clockwise order along the boundary of $P$.  We
set $v_{n+k}=v_k$ for all $k \geq 1$.  An edge whose endpoints are
$v_i$ and $v_{i+1}$ is denoted by $e_i$.  For ease of presentation, we
make the following \emph{general position assumption}: no vertex of
$P$ is equidistant from two distinct vertices of $P$, which was also
assumed in~\cite{FVD}.  This assumption can be removed by applying
perturbation to the degenerate vertices~\cite{perturb}.

For any two points $x$ and $y$ lying inside a (weakly) simple polygon
$P$, the \emph{geodesic path} between $x$ and $y$, denoted by
$\pi(x,y)$, is the shortest path inside $P$ between $x$ and $y$.  The
length of $\pi(x,y)$ is called the \emph{geodesic distance} between
$x$ and $y$, denoted by $d(x,y)$.  The geodesic path between any two
points in $P$ is unique.  The geodesic distance and the geodesic path
between $x$ and $y$ can be computed in $O(\log n)$ and $O(\log n + k)$
time, respectively, after an $O(n)$-time preprocessing, where $k$ is the
number of vertices on the geodesic path~\cite{shortest-path-tree}.
The vertices of $\pi(x,y)$ excluding $x$ and $y$ are reflex
vertices of $P$ and they are called the \emph{anchors} of $\pi(x,y)$.
If $\pi(x,y)$ is a line segment, it has no anchor.   In this paper,
``distance'' refers to geodesic distance unless specified otherwise.
 
Given a set $X$ of points in $P$ (for instance a polygon or a disk), we use $\partial X$
to denote the boundary of $X$.  
A set $X \subseteq P$ is \emph{geodesically convex} if
$\pi(x,y) \subset X$ for any two points $x$ and $y$ in $X$.
For any two points $u$ and $w$ on $\partial
P$, let $\subchain{u}{w}$ be the part of $\partial P$ in clockwise
order from $u$ to $w$.  
For $u=w$, let $\subchain{u}{w}$ be the vertex $u$.
The subpolygon of $P$ bounded by
$\subchain{u}{w}$ and $\pi(u,w)$ is denoted by $\subpolygon{u}{w}$.
Note that $\subpolygon{u}{w}$ may not be simple, but it is always
weakly simple.  Indeed, consider the set of Euclidean disks centered
at points on $\pi(u,w)$ with radius $\eps > 0$.  There exists a simple polygonal
curve connecting $u$ and $w$ that lies in the union of these disks and
that does not intersect $\subchain{u}{w}$ except at $u$ and $w$.  
The region bounded by that
simple curve and $\subchain{u}{w}$ is a simple polygon whose
Fr\'{e}chet distance from $P$ is at most $\eps$.  

The \emph{radius} of
$P$, denoted by $r(P)$, is defined as $\max_{p \in P}d(c,p)$, where
$c$ is the geodesic $1$-center of $P$.  Given two points $\alpha,
\beta \in \partial P$, we set $\radbd{\alpha}{\beta} =
\rad{\subpolygon{\alpha}{\beta}}$.
Notice that $\radbd{\alpha}{x}$ is monotonically increasing as $x$ moves clockwise from $\alpha$ along $\partial P$. Similarly, $\radbd{x}{\alpha}$ is monotonically decreasing as $x$ moves clockwise from $\alpha$ along $\partial P$.

The geodesic disk centered at a point $p \in P$ with radius $r$,
denoted by $D_r(p)$, is the set of points whose geodesic distances
from $p$ are at most $r$.  The boundary of a geodesic disk inside $P$
consists of disjoint polygonal chains of $\partial P$ and $O(n)$
circular arcs~\cite{geodesic-disk}.  Given a center $p \in P$ and a
radius $r \in \mathbb{R}$, $D_r(p)$ can be computed in $O(n)$ time as
follows.  We first compute the shortest path map of $p$ in linear
time~\cite{shortest-path-tree}.  Each cell in the shortest
path map of $p$ is a triangle and every point $q$ in the same cell has the
same combinatorial structure of $\pi(p,q)$.  Thus, a cell in the
shortest path map of $p$ intersects at most two circular arcs of
$D_r(p)$.  Moreover, a circular arc intersecting a cell $C$ is a part
of the boundary of the Euclidean disk centered at $v$ with radius
$r-d(p,v)$, where $v$ is the (common) anchor of $\pi(p,q)$ closest to $q$ for
a point $q \in C$, if it exists, or $p$ itself, otherwise.  With this fact, we can compute $\partial D_r(p)$
by traversing the cells from a cell to its neighboring cell
 and computing the circular arcs of $\partial
D_r(p)$ in time linear in the number of cells and circular
arcs, which is $O(n)$.  

We call a set of two points $c_1,c_2 \in P$ a \emph{$2$-set}.  For instance,
a geodesic $2$-center of $P$ is a $2$-set.  We slightly abuse notation and
write $(c_1,c_2)$ (instead of the usual notation $\{c_1,c_2\}$ for a set) to
designate the $2$-set defined by $c_1$ and $c_2$.  The radius of a
$2$-set $(c_1,c_2)$ in $P$, denoted by $r_P(c_1,c_2)$, is defined as
\begin{linenomath}
$$r_P(c_1,c_2) = \max_{p \in P} \min\{d(c_1,p),d(c_2,p)\} .$$
\end{linenomath}
A geodesic $2$-center of $P$ is a $2$-set with minimum radius.  Note
that given any $2$-set $(c_1,c_2)$ and $r \geq r_P(c_1,c_2)$, 
it holds that $P\subseteq D_r(c_1)\cup D_r(c_2)$.

For any two points $x,y \in P$, the \emph{bisector} of $x$ and $y$ is
defined as the set of points in $P$ equidistant from $x$ and $y$.  The
bisector of two points may contain a two-dimensional region if there
is a vertex of $P$ equidistant from $x$ and $y$.
If we remove all two-dimensional regions from the bisector,
we are left with curves each of which is contained in $P$ with two
endpoints on $\partial P$.  Among such curves, we call the one
crossing $\pi(x,y)$ the \emph{bisecting curve} of $x$ and $y$,
denoted by $b(x,y)$.  See Figure~\ref{fig:bisector}(a).

\begin{figure}[t]
  \begin{center}
    \includegraphics[width=0.7\textwidth]{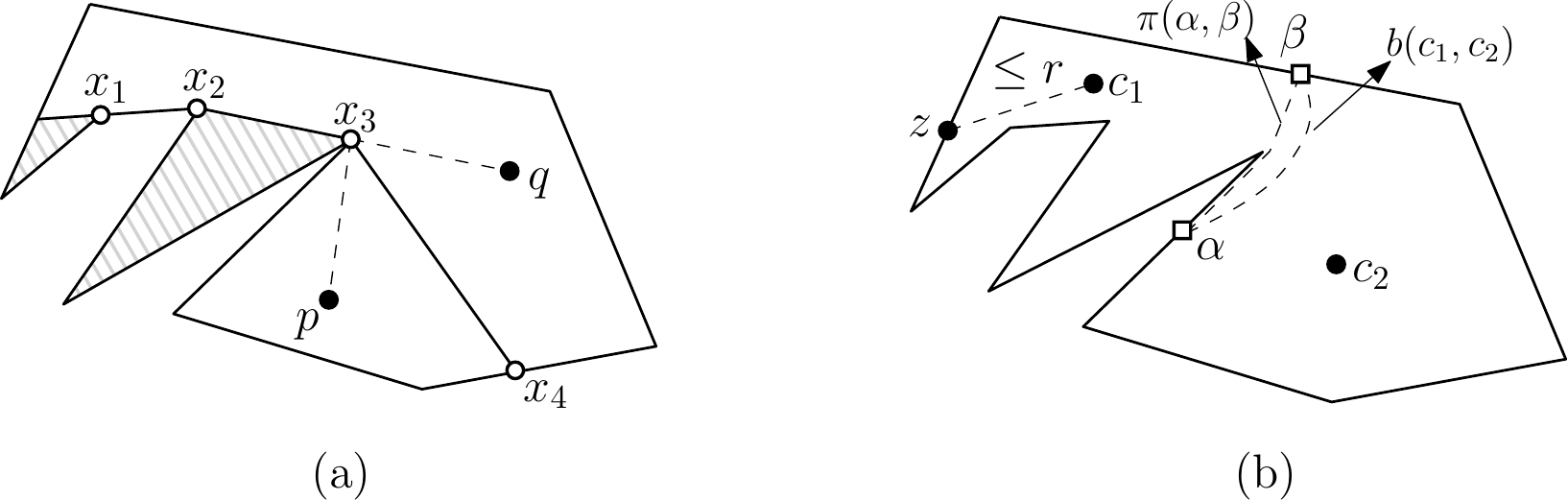}
    \caption{\small (a) The bisector of $p$ and $q$ contains
        two 2-dimensional regions (two dashed triangles). We have two
        line segments $x_1x_2$ and $x_3x_4$ after removing two
        2-dimensional regions. Here, $b(p,q)=x_3x_4$.  (b) For any
        point $z$ in $\subchain{\alpha}{\beta}$, we have $d(z,c_1)
        \leq r$ for any radius $r$ for the geodesic disks containing $P$ centered
        at $c_1$ and $c_2$, where $\alpha$ and $\beta$ are the endpoints of $b(c_1,c_2)$.}
    \label{fig:bisector}
  \end{center}
\end{figure}

\section{The partition by a \texorpdfstring{$2$}{2}-center}
Although there may exist more than one geodesic $2$-center of $P$, the radius
of any geodesic $2$-center is the same.  Let $(c_1^*, c_2^*)$ be a
geodesic $2$-center and $r^* = r_P(c_1^*,c_2^*)$.  For any two points
$\pu$ and $\pw$ on $\partial P$, let $\maxrad{\pu}{\pw} =
\max\{\radbd{\pu}{\pw},\radbd{\pw}{\pu}\}$.
We say that two geodesic disks \emph{cover} $P$ if the union of the two geodesic
disks coincides with $P$.

\begin{lemma}\textup{\cite[Lemma 1]{1-center1989}}
  \label{path-convex}
  Let $a, b$ and $c$ be points in $P$.  As $x$ varies along
  $\pi(b,c)$, $d(a,x)$ is a convex function of $d(b,x)$, and $d(a,x)
  \leq \max\{d(a,b),d(a,c)\}$.
\end{lemma}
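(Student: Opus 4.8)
The plan is to show that $f(t):=d(a,x(t))$ is convex on $[0,d(b,c)]$, where $x(t)$ denotes the point of $\pi(b,c)$ at arc length $t$ from $b$. Since a subpath of a shortest path is shortest, $d(b,x(t))=t$, so convexity of $f$ in $t$ is exactly the assertion that $d(a,x)$ is a convex function of $d(b,x)$. Granting this, the bound $d(a,x)\le\max\{d(a,b),d(a,c)\}$ is immediate, because a convex function on an interval attains its maximum at an endpoint and here $x(0)=b$, $x(d(b,c))=c$.

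First I would record the structure of $f$. It is continuous, and piecewise analytic with two kinds of breakpoints: (i) the values of $t$ at which $x(t)$ is an anchor of $\pi(b,c)$, and (ii) the values at which $x(t)$ crosses an edge of the shortest path map of $a$. Between consecutive breakpoints, $x(t)$ traces a single line segment at unit speed and the combinatorial type of $\pi(a,x(t))$ is fixed, so $f(t)=d(a,v)+\|v-x(t)\|$ for a fixed point $v$ — the last anchor of $\pi(a,x(t))$ before $x(t)$, or $a$ itself. Hence $f$ is analytic and convex on each piece, with $f'(t)=-\cos\alpha(t)$, where $\alpha(t)\in[0,\pi]$ is the angle at $x(t)$ between the forward tangent of $\pi(b,c)$ and the direction in which $\pi(a,x(t))$ leaves $x(t)$ toward $a$ (the first-variation formula for a distance function). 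The whole lemma therefore reduces to showing that $f'$ never drops at a breakpoint, equivalently that $\alpha$ is non-decreasing there; within a single piece $\alpha$ is already non-decreasing, this being nothing but the convexity of the Euclidean distance from the fixed point $v$ to a point moving along a line.

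A type-(ii) breakpoint $t_0$ is harmless: there the last anchor switches, say from $v$ to $v'$, but $x(t_0)$ lies on the window of $v'$ in the shortest path map of $a$, which is a segment lying along the ray from $v$ through $v'$; hence the directions from $x(t_0)$ to $v$ and to $v'$ coincide, and $f'$ is in fact continuous at $t_0$. The hard case — and I expect the main obstacle — is a type-(i) breakpoint, where $x(t_0)$ is an anchor $w$ of $\pi(b,c)$ (a reflex vertex of $P$): the forward tangent turns, and one must check $f'(t_0^-)\le f'(t_0^+)$. The plan is a short case analysis on whether $\pi(a,x(t))$ bends at $w$ for $x(t)$ just inside the incoming segment of $\pi(b,c)$ at $w$, and likewise for the outgoing segment. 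If it bends on the incoming side, the direction toward $a$ there coincides with the forward tangent, so $f'(t_0^-)=-1$, which cannot exceed $f'(t_0^+)$; symmetrically, bending on the outgoing side forces $f'(t_0^+)=+1$. In the remaining case $\pi(a,\cdot)$ bends at $w$ on neither side, so the direction toward $a$ has a common one-sided limit $\vec{a}$ pointing into the interior sector at $w$; using that $\pi(b,c)$ is a shortest path that hugs the reflex vertex $w$ — its two incident segments leave $w$ close to the two polygon edges at $w$, on the interior side — I would argue that turning the forward tangent from the incoming to the outgoing direction can only increase its angle with $\vec{a}$, i.e., $\alpha(t_0^-)\le\alpha(t_0^+)$. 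Pinning down this last inequality is where the care goes, but it amounts to elementary bookkeeping of the angular sectors at $w$.

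As a sanity check, and as an alternative route, I would note the conceptual reason the lemma is true: $P$ with the geodesic metric is compact, simply connected, and locally $\mathrm{CAT}(0)$ (flat in the interior, a Euclidean sector at each boundary vertex), hence globally $\mathrm{CAT}(0)$ by the Cartan--Hadamard theorem; in a $\mathrm{CAT}(0)$ space the distance from a fixed point is a convex function along every geodesic, and $d(a,x)\le\max\{d(a,b),d(a,c)\}$ then follows again from convexity. The hands-on argument above is closer in spirit to~\cite{1-center1989}, but either route establishes the statement.
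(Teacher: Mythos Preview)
The paper does not supply a proof; it simply cites the lemma from Pollack, Sharir, and Rote~\cite{1-center1989}. Your argument is correct and is, in outline, the proof given in that reference: write $f(t)=d(a,x(t))$ with $x(t)$ the unit-speed parametrization of $\pi(b,c)$, observe that $f'(t)=-\cos\alpha(t)$ on each analytic piece, check continuity of $f'$ across shortest-path-map windows, and handle the anchors of $\pi(b,c)$ by the three-way case split you describe. The ``neither bends'' case does work out as you anticipate: tautness of $\pi(b,c)$ at the reflex vertex $w$ means the interior angle of the path on the polygon-interior side is at least $\pi$, and the two no-bend conditions confine $\vec a$ to a sub-arc on which the angle to the outgoing tangent exceeds the angle to the incoming tangent by exactly the turn angle of $\pi(b,c)$ at $w$, so $\alpha(t_0^-)\le\alpha(t_0^+)$ as needed. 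Your CAT(0) remark is a legitimate and arguably cleaner alternative, since a simple polygon with its intrinsic metric is indeed a CAT(0) space; it trades the explicit angle bookkeeping for a single global metric fact, at the cost of invoking machinery outside the paper's toolkit.
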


\begin{lemma} \label{partition} If $P$ is covered by two geodesic
  disks centered at points in $P$ with radius $r$, then there are two
  points $x, y \in \partial P$ with $\maxrad{x}{y} \leq r$.
\end{lemma}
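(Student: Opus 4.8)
\myparagraph{Proof plan.}
If the two disk centers $c_1$ and $c_2$ coincide, then $P=D_r(c_1)$ and any $x=y\in\partial P$ works (then $\subpolygon{x}{y}$ is a single point, $\subpolygon{y}{x}=P$, and $\maxrad{x}{y}=\rad{P}\le r$); so assume $c_1\neq c_2$. The plan is to let $x$ and $y$ be the two endpoints of the bisecting curve $b(c_1,c_2)$ and to prove $\maxrad{x}{y}\le r$. We use the standard picture: $b(c_1,c_2)$ is a simple arc whose endpoints $x,y$ lie on $\partial P$ and which otherwise lies in $\interior{P}$, and, since it crosses $\pi(c_1,c_2)$, it separates $P$ into two regions $P_1\ni c_1$ and $P_2\ni c_2$; we fix the labelling so that $\subchain{x}{y}\subseteq\partial P_1$ and $\subchain{y}{x}\subseteq\partial P_2$.

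The first step is to bound $d(c_1,\cdot)$ at every vertex of $\subpolygon{x}{y}$. These vertices are the vertices of $P$ on $\subchain{x}{y}$, the points $x$ and $y$, and the anchors of $\pi(x,y)$. If $q$ lies on $\subchain{x}{y}$, then $q\in\ol{P_1}$, so $\pi(c_2,q)$ meets $b(c_1,c_2)$ at some point $w$; since $w$ is on the bisector, $d(c_1,w)=d(c_2,w)$, and since $w$ lies on $\pi(c_2,q)$, we get $d(c_1,q)\le d(c_1,w)+d(w,q)=d(c_2,w)+d(w,q)=d(c_2,q)$, and hence $d(c_1,q)\le r$ because $q\in D_r(c_1)\cup D_r(c_2)$. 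In particular $d(c_1,x),d(c_1,y)\le r$, so if $v$ is an anchor of $\pi(x,y)$, then Lemma~\ref{path-convex} (applied to $c_1$ and the geodesic $\pi(x,y)$) gives $d(c_1,v)\le\max\{d(c_1,x),d(c_1,y)\}\le r$. Thus every vertex of $\subpolygon{x}{y}$ is within distance $r$ of $c_1$; by the symmetric argument, every vertex of $\subpolygon{y}{x}$ is within distance $r$ of $c_2$.

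The second step is to pass from the vertices to the whole subpolygon. Since $\subpolygon{x}{y}$ is bounded by a chain of $\partial P$ together with the geodesic $\pi(x,y)$, it is geodesically convex in $P$: a geodesic between two of its points cannot leave it, for otherwise it would cross $\pi(x,y)$ twice, and the subarc of it between two consecutive crossings would, by uniqueness of geodesics, coincide with a subarc of $\pi(x,y)$ lying on the opposite side --- a contradiction. Consequently geodesic distances inside $\subpolygon{x}{y}$ coincide with those in $P$. Let $q_1$ be a point of $\subpolygon{x}{y}$ nearest to $c_1$ (thus $q_1=c_1$ whenever $c_1\in\subpolygon{x}{y}$). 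Because the nearest-point projection onto a geodesically convex set is non-expansive, $d(q_1,q)\le d(c_1,q)\le r$ for every vertex $q$ of $\subpolygon{x}{y}$. Finally, in any weakly simple polygon the distance from a fixed point attains its maximum at a vertex (which follows from Lemma~\ref{path-convex} applied along the straight edges), so $\rad{\subpolygon{x}{y}}\le\max_{p\in\subpolygon{x}{y}}d(q_1,p)\le r$, i.e.\ $\radbd{x}{y}\le r$. Running the same argument with $c_2$ on $\subpolygon{y}{x}$ yields $\radbd{y}{x}\le r$, so $\maxrad{x}{y}=\max\{\radbd{x}{y},\radbd{y}{x}\}\le r$.

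\myparagraph{Main obstacle.}
The delicate part is the second step. One cannot simply take $c_1$ itself as a sub-optimal center of $\subpolygon{x}{y}$, since $c_1$ need not lie in $\subpolygon{x}{y}$: the geodesic $\pi(x,y)$ may bulge past $c_1$ and cut it into $\subpolygon{y}{x}$. This is why we project $c_1$ to its nearest point $q_1$ in $\subpolygon{x}{y}$ and invoke non-expansiveness of the geodesic projection onto a geodesically convex set. The remaining work is topological: checking carefully that $b(c_1,c_2)$ is a simple separating arc with $\subchain{x}{y}$ on the side of $c_1$, that $\pi(c_2,q)$ must indeed meet $b(c_1,c_2)$ for $q\in\ol{P_1}$, and dealing with the degenerate cases where the bisector of $c_1$ and $c_2$ contains two-dimensional pieces.
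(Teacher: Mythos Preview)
Your proof is correct and follows the same route as the paper: take $x,y$ to be the endpoints of the bisecting curve $b(c_1,c_2)$, show that every boundary point of $\subpolygon{x}{y}$ lies in $D_r(c_1)$ (using the side-of-bisector inequality for points on $\subchain{x}{y}$ and Lemma~\ref{path-convex} for points on $\pi(x,y)$), and conclude $\radbd{x}{y}\le r$, symmetrically for the other side. The paper's write-up is terser---it passes directly from $\subpolygon{\alpha}{\beta}\subseteq D_r(c_1)$ to $\radbd{\alpha}{\beta}\le r$ without your explicit nearest-point projection of $c_1$ into $\subpolygon{\alpha}{\beta}$---so your version is in fact more careful on that one point, but the underlying argument is the same.
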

\begin{proof}
  Let $c_1$ and $c_2$ be the centers of the two geodesic disks with
  radius $r$ covering $P$.  Let $\pu$ and $\pw$ be the
  two endpoints of the bisecting curve $b(c_1,c_2)$.  We will argue that
  $\maxrad{\pu}{\pw} \leq r$.  
  
  Without loss of generality, assume that
  $c_1$ lies in the subpolygon of $P$ bounded by $b(c_1,c_2)$ and
  $\subchain{\pu}{\pw}$.  Let $z$ be any point on
  $\subchain{\pu}{\pw}$.  See Figure~\ref{fig:bisector}(b).
  Since $P$ coincides with $D_r(c_1)\cup D_r(c_2)$, we have
  $\min\{d(z,c_1),d(z,c_2)\} \leq r$.  Also, since $z$ and $c_1$ lie
  in the same side of $b(c_1,c_2)$, we have $d(z,c_1) \leq d(z,c_2)$.
  Therefore, $d(z,c_1) = \min\{d(z,c_1),d(z,c_2)\} \leq r$.  
  
  Moreover,
  for any point $p \in \pi(\pu,\pw)$, it holds that $d(c_1,p) \leq
  \max\{d(c_1,\pu),d(c_1,\pw)\}$ by Lemma~\ref{path-convex}.  Then,
  since $\pu$ and $\pw$ are the endpoints of $b(c_1,c_2)$, we find
  $\max\{d(c_1,\pu),d(c_1,\pw)\} \leq r$, from which $d(c_1,p) \leq
  r$.  Therefore, the boundary of $\subpolygon{\pu}{\pw}$ is contained in
  $D_r(c_1)$ and so is $\subpolygon{\pu}{\pw}$ by the geodesic
  convexity of $\subpolygon{\pu}{\pw}$.  
  
  Similarly, we can show that
  $\subpolygon{\pw}{\pu}$ is contained in $D_r(c_2)$.  Consequently,
  $\maxrad{\pu}{\pw} \leq r$.
\end{proof}

For any $2$-set $(c_1,c_2)$ in $P$ and any radius $r$, we call a pair $(\alpha,\beta)$
of points on $\partial P$ a \emph{point-partition of $P$
  with respect to $(c_1,c_2,r)$} if $d(c_1,x) \leq r$ and
$d(c_2,y) \leq r$ for all points $x \in \subpolygon{\alpha}{\beta}$
and $y \in \subpolygon{\beta}{\alpha}$.  Note that a point-partition
with respect to $(c_1,c_2,r)$ does not exist if $r < r_P(c_1,c_2)$.  A
pair $(e,e')$ of edges is called an \emph{edge-partition with respect
  to $(c_1,c_2, r)$} if there is a point-partition
$(\alpha,\beta)$ with respect to $(c_1,c_2,r)$ for $\alpha \in e$
and $\beta \in e'$.  A point-partition and an edge-partition with
respect to $(c_1^*, c_2^*, r^*)$ are said to be \emph{optimal}.  By
Lemma~\ref{partition}, there always exist an optimal point-partition
and an optimal edge-partition in a simple polygon.  Note that a
point-partition and an edge-partition with respect to $(c_1,c_2,r)$
are not necessarily unique if $\min\{d(c_1,\alpha), d(c_1,\beta)\}<r$,
where $\alpha$ and $\beta$ are the two endpoints of $b(c_1,c_2)$.  If
an optimal point-partition $(\pu,\pw)$ of $P$ is given, we can compute
a $2$-center in linear time using the algorithm in~\cite{1-center-SoCG,1-center}.

Our general strategy is to first compute a set of pairs of edges,
which we call \emph{candidate edge pairs}, containing at least one
optimal edge-partition.  For each candidate edge pair $(e_i,e_j)$, we
compute a \emph{$2$-center $(c_1,c_2)$ restricted to $(e_i,e_j)$}.
That is, a $2$-set $(c_1,c_2)$ such that $c_1$ and $c_2$ are the
$1$-centers of $\subpolygon{\alpha}{\beta}$ and
$\subpolygon{\beta}{\alpha}$, respectively, where $(\alpha, \beta)$ is
the pair realizing $\inf_{(x,y) \in e_i \times e_j} \maxrad{x}{y}$.

\subsection{Computing a set of candidate edge pairs}

In this section, we define \emph{candidate edge pairs} and describe
how to find the set of all candidate edge pairs in $O(n^2\log n)$ time.  Let
$f(\cdot)$ be the function which maps each vertex $v$ of $P$ to
the set of vertices $v'$ of $P$ that minimize
$\maxrad{v}{v'}$.
It is possible that there are more than one vertex $v'$ that minimizes
$\maxrad{v}{v'}$. Moreover, such vertices 
appear on the boundary of $P$ consecutively.
This is because
the function $\radbd{v}{x}$ is non-decreasing and $\radbd{x}{v}$
is non-increasing as $x$ moves clockwise from $v$ along
$\partial P$.

We use $f_{cw}(v)$ to denote the set of all vertices on $\bd P$ that come
after $v$ and before any vertex in $f(v)$ in clockwise order.
Similarly, we use $f_{ccw}(v)$ to denote the set of all vertices on $\bd P$ that 
come after $v$ and before any vertex in $f(v)$ in counterclockwise order.
Refer to Figure~\ref{fig:candidate_pair}.
The three sets $f_{ccw}(v)$, $f(v)$ and $f_{cw}(v)$ are pairwise disjoint
by the fact that $v \notin f(v)$ and 
by the monotonicity of  $\radbd{v}{x}$ and  $\radbd{x}{v}$.

\begin{figure}[t]
  \begin{center}
    \includegraphics[width=0.5\textwidth]{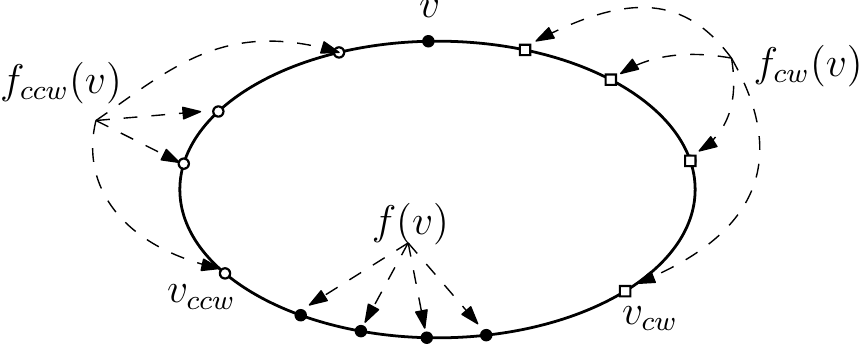}
    \caption{\small  For a vertex $v$, we define $f_{ccw}(v)$, $f_{cw}(v)$ and
    	$f(v)$.}
    \label{fig:candidate_pair}
  \end{center}
\end{figure}

Given two points $\alpha, \beta \in \partial P$, recall that we set
$\radbd{\alpha}{\beta} = \rad{\subpolygon{\alpha}{\beta}}$.
\begin{lemma}
  \label{convex radius-neighbor}
  Given a vertex $v$ of $P$, it holds that
  $\radbd{v}{w} < \radbd {w}{v}$ for any vertex $w \in f_{cw}(v)$
  and $\radbd{v}{w} > \radbd{w}{v}$ for any vertex $w \in f_{ccw}(v)$.
\end{lemma}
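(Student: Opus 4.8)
\medskip
\noindent\textbf{Proof plan.}\
The plan is to establish each of the two inequalities by contradiction, using nothing beyond the two monotonicity properties of $\radbd{v}{x}$ and $\radbd{x}{v}$ recalled above, the definition of $f(v)$ as the set of vertices of $P$ minimizing $\maxrad{v}{\cdot}$, and the pairwise disjointness of $f_{cw}(v)$, $f(v)$, $f_{ccw}(v)$ together with $v\notin f(v)$. Throughout I exploit the following ordering: travelling clockwise from $v$ along $\partial P$ one meets first the vertices of $f_{cw}(v)$, then those of $f(v)$, and then those of $f_{ccw}(v)$.

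First I would treat $w\in f_{cw}(v)$ (the claim is vacuous if $f_{cw}(v)=\emptyset$). Suppose, toward a contradiction, that $\radbd{v}{w}\ge\radbd{w}{v}$, so that $\maxrad{v}{w}=\radbd{v}{w}$. Fix any $v'\in f(v)$. Since $w$ precedes $v'$ in the clockwise traversal from $v$ and $\radbd{v}{x}$ is non-decreasing along that traversal, $\radbd{v}{w}\le\radbd{v}{v'}$; and since $v'\in f(v)$ while $w$ is a vertex of $P$, the definition of $f(v)$ gives $\maxrad{v}{v'}\le\maxrad{v}{w}$. Chaining these,
\[
\radbd{v}{v'}\le\maxrad{v}{v'}\le\maxrad{v}{w}=\radbd{v}{w}\le\radbd{v}{v'},
\]
so equality holds throughout; in particular $\maxrad{v}{w}=\maxrad{v}{v'}$, i.e.\ $w$ also minimizes $\maxrad{v}{\cdot}$, so $w\in f(v)$, contradicting $f_{cw}(v)\cap f(v)=\emptyset$. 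Hence $\radbd{v}{w}<\radbd{w}{v}$.

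The case $w\in f_{ccw}(v)$ is the mirror image, now using that $\radbd{x}{v}$ is non-increasing as $x$ moves clockwise from $v$: under the hypothesis $\radbd{v}{w}\le\radbd{w}{v}$ one has $\maxrad{v}{w}=\radbd{w}{v}$, and since $w$ comes after every $v'\in f(v)$ in the clockwise order from $v$ one gets $\radbd{w}{v}\le\radbd{v'}{v}$; the analogous chain $\radbd{v'}{v}\le\maxrad{v}{v'}\le\maxrad{v}{w}=\radbd{w}{v}\le\radbd{v'}{v}$ again forces $w\in f(v)$, a contradiction, whence $\radbd{v}{w}>\radbd{w}{v}$.

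I do not expect a genuine obstacle beyond bookkeeping: the whole content lies in matching the side of $v$ on which $w$ lies with which of the two one-sided radii $\radbd{v}{w}$, $\radbd{w}{v}$ is the smaller, so that the monotonicity of the radius functions together with the minimality defining $f(v)$ collapse the relevant chain of inequalities into equalities. The two points requiring care are which of $\radbd{v}{w}$ and $\radbd{w}{v}$ realizes $\maxrad{v}{w}$ under each contradiction hypothesis, and the orientation of each monotonicity statement.
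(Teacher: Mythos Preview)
Your proposal is correct and follows essentially the same contradiction argument as the paper: assume the inequality fails, combine the monotonicity of $\radbd{v}{\cdot}$ (resp.\ $\radbd{\cdot}{v}$) with the minimality of $\maxrad{v}{v'}$ for $v'\in f(v)$ to force $\maxrad{v}{w}=\maxrad{v}{v'}$, hence $w\in f(v)$, contradicting the disjointness of $f_{cw}(v)$ (resp.\ $f_{ccw}(v)$) and $f(v)$. The only cosmetic difference is that the paper also invokes the second monotonicity to first pin down $\maxrad{v}{v'}=\radbd{v}{v'}$, whereas you use the trivial bound $\radbd{v}{v'}\le\maxrad{v}{v'}$; both routes close the same chain of inequalities.
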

\begin{proof}
  Let us focus on the first inequality.
  Assume to the contrary that $\radbd{v}{w} \geq
  \radbd{w}{v}$ for some vertex $w \in f_{cw}(v)$.
  Let $v'$ be a vertex in $f(v)$.
  Since
  $\radbd{v}{v'} \geq \radbd{v}{w}$ and $\radbd{w}{v} \geq
  \radbd{v'}{v}$, we have $\radbd{v}{v'} \geq \radbd{v'}{v}$.
  Thus $\maxrad{v}{v'} = \radbd{v}{v'} \geq \radbd{v}{w} =
  \maxrad{v}{w}$, which contradicts the fact that $f(v)\cap f_{cw}(v)=\phi$.
  
  We can prove the second inequality in a similar way.
\end{proof}

\begin{lemma}
	\label{ordering}
	Let $v$ be a vertex of $P$.
	For a vertex $w \in f_{cw}(v)$, it holds that
	$f(w) \cap \subchain{v_{cw}}{w} \neq \phi$, where
	$v_{cw}$ is the last vertex of $f_{cw}(v)$ from $v$ in clockwise order.
\end{lemma}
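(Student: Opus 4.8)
The plan is to reduce the claim to the following monotonicity fact: $\maxrad{w}{y}$ is non-increasing as $y$ moves clockwise from $w$ to $v_{cw}$ along $\partial P$. Granting this, the vertices minimizing $\maxrad{w}{\cdot}$ over the vertices of $P$ --- that is, the vertices of $f(w)$ --- cannot all lie strictly between $w$ and $v_{cw}$ in clockwise order without $v_{cw}$ being one of them as well; since the vertices strictly between $w$ and $v_{cw}$ in clockwise order are exactly those missing from $\subchain{v_{cw}}{w}$, in either case $f(w)\cap\subchain{v_{cw}}{w}\neq\emptyset$.

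To carry this out, fix the clockwise enumeration $v=u_0,u_1,\dots,u_{n-1}$ of the vertices of $P$ starting at $v$, so that $f_{cw}(v)=\{u_1,\dots,u_{k-1}\}$, $v_{cw}=u_{k-1}$, and $w=u_j$ for some $1\le j\le k-1$. The crux is that $\radbd{w}{u_i}<\radbd{u_i}{w}$ for every $i$ with $j<i\le k-1$, equivalently $\maxrad{w}{u_i}=\radbd{u_i}{w}$. I would obtain this by concatenating three inequalities: (i) $\radbd{u_j}{u_i}\le\radbd{v}{u_i}$, because $v$ precedes $u_j$ in the clockwise order starting at $u_i$ and $\radbd{x}{u_i}$ is non-increasing as $x$ moves clockwise; (ii) $\radbd{v}{u_i}<\radbd{u_i}{v}$, which is exactly Lemma~\ref{convex radius-neighbor} applied at the vertex $v$, using $u_i\in f_{cw}(v)$; and (iii) $\radbd{u_i}{v}\le\radbd{u_i}{u_j}$, again because $v$ precedes $u_j$ clockwise from $u_i$ and $\radbd{u_i}{x}$ is non-decreasing as $x$ moves clockwise. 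Chaining gives $\radbd{w}{u_i}\le\radbd{v}{u_i}<\radbd{u_i}{v}\le\radbd{u_i}{w}$. I expect this chain to be the only real obstacle: the key idea is to route the comparison of $\radbd{w}{u_i}$ with $\radbd{u_i}{w}$ through the vertex $v$ so that Lemma~\ref{convex radius-neighbor} --- which concerns $v$, not $w$ --- can be applied, with the two monotonicity properties of $r(\cdot,\cdot)$ closing the gap on each side.

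From the crux inequality, $\maxrad{w}{u_i}=\rad{\subpolygon{u_i}{w}}$ for all $i$ with $j<i\le k-1$, and this is non-increasing in $i$ because $\radbd{x}{w}$ decreases as $x$ moves clockwise from $w$; hence $\maxrad{w}{\cdot}$ restricted to $\{u_{j+1},\dots,u_{k-1}\}$ is minimized at $v_{cw}=u_{k-1}$. Now I split into two cases. If $f(w)$ contains none of $u_{j+1},\dots,u_{k-2}$, then $f(w)\subseteq\subchain{v_{cw}}{w}$ and we are done since $f(w)\neq\emptyset$. Otherwise some $u_i$ with $j<i<k-1$ lies in $f(w)$, so $\maxrad{w}{u_i}$ is the global minimum of $\maxrad{w}{\cdot}$; but $\maxrad{w}{v_{cw}}\le\maxrad{w}{u_i}$ by the monotonicity just established, so $v_{cw}$ is also a global minimizer, giving $v_{cw}\in f(w)\cap\subchain{v_{cw}}{w}$. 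The only remaining point is the degenerate case $w=v_{cw}$, where the set of vertices strictly between $w$ and $v_{cw}$ is empty and the statement should be read under the natural convention for $\subchain{v_{cw}}{w}$ in that situation.
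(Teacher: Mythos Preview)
Your proof is correct and follows essentially the same approach as the paper. Both arguments hinge on the identical inequality chain $\radbd{w}{u_i}\le\radbd{v}{u_i}<\radbd{u_i}{v}\le\radbd{u_i}{w}$, obtained by routing through $v$ via Lemma~\ref{convex radius-neighbor} and the monotonicity of $r(\cdot,\cdot)$; the paper simply packages the conclusion as a contradiction on the last clockwise element $w'$ of $f(w)$ (showing $\maxrad{w}{v_{cw}}\le\maxrad{w}{w'}$ forces $w'$ past $v_{cw}$), whereas you phrase it as monotonicity of $\maxrad{w}{\cdot}$ on $\{u_{j+1},\dots,u_{k-1}\}$ followed by a case split---but the content is the same.
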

\begin{proof}
  Let $w'$ be the last vertex of $f(w)$ from $w$ in clockwise order.
  We show that $w' \in \subchain{v_{cw}}{w}$, 
  which implies the lemma.
  
  Assume to the contrary that $w' \in \subchain{w}{v_{cw}}\setminus\{w,v_{cw}\}$.
  By Lemma~\ref{convex radius-neighbor} and the monotonicity of
  the functions $\radbd{\cdot}{w'}$ and $\radbd{w'}{\cdot}$, we have
  $\radbd{w}{w'} \leq \radbd{v}{w'} < \radbd{w'}{v} \leq
  \radbd{w'}{w}$.  Similarly, we get $\radbd{w}{v_{cw}} \leq
  \radbd{v}{v_{cw}} < \radbd{v_{cw}}{v} \leq \radbd{v_{cw}}{w}$.
  Thus, $\maxrad{w}{v_{cw}} = \radbd{v_{cw}}{w} \leq \radbd{w'}{w} =
  \maxrad{w}{w'}$ by the monotonicity of $\radbd{\cdot}{w}$.  This
  contradicts the fact that $w'$ is the last vertex of $f(w)$ from $w$ in clockwise order.
\end{proof}

\begin{figure}
  \begin{center}
    \includegraphics[width=0.7\textwidth]{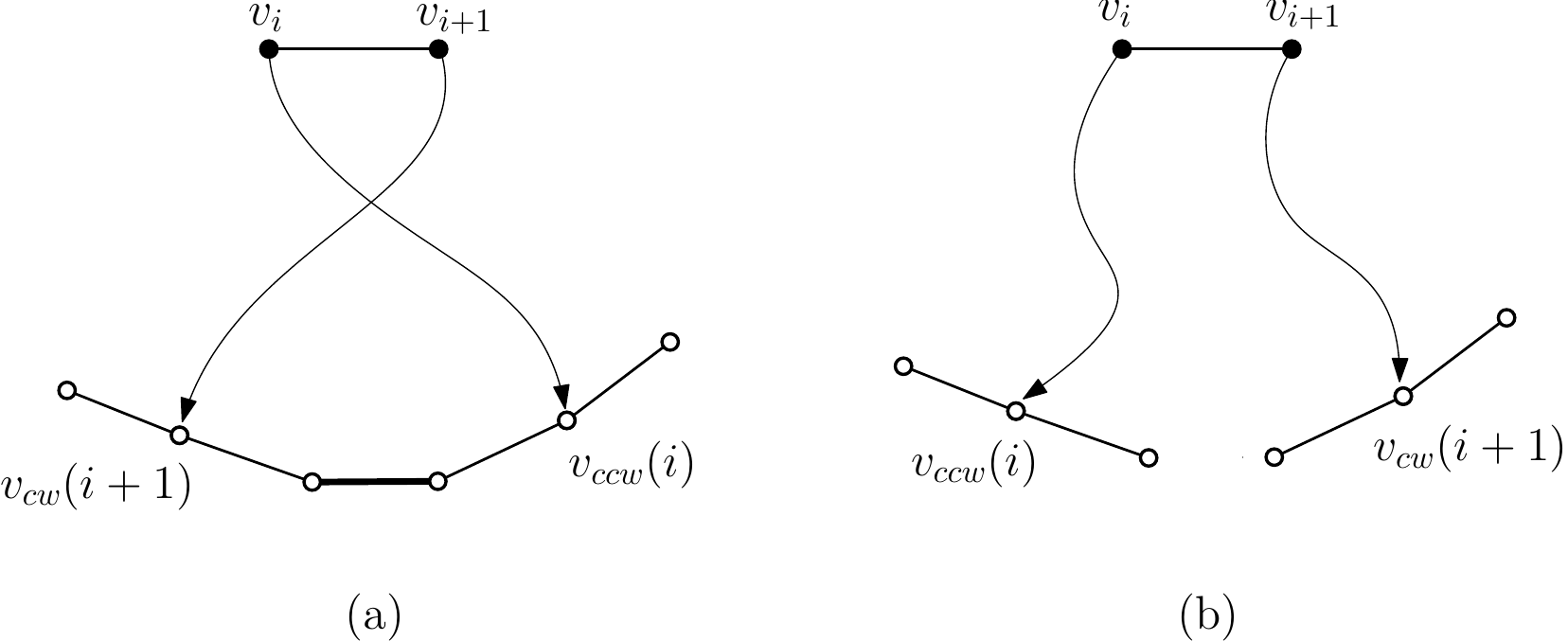}
    \caption{\small According to the relative positions for
        $v_{cw}(i+1)$ and $v_{ccw}(i)$, the candidate edges of $e_i$ are
        defined.  (a) Four candidate edges of type (1) and one candidate edge of type (2).
        (b) Four candidate edges of type (1).}
    \label{fig:three_type_candidate}
  \end{center}
\end{figure}

\label{procedure-candidate edge pairs}
Given an edge $e_i=v_iv_{i+1}$, an edge $e_j=v_jv_{j+1}$ is called a
\emph{candidate edge of $e_i$} if it belongs one of the following
two types. 
For a vertex $v_k$, 
let $\vcw{k}$ be the last vertex of $f_{cw}(v_k)$ from $v_k$ in clockwise order
and $\vccw{k}$ be the first vertex of $f_{ccw}(v_{k})$ from $v_{k}$
in counterclockwise order.
Refer to Figure~\ref{fig:three_type_candidate}.
\begin{enumerate}
\item $v_j$ (or $v_{j+1}$) is $\vccw{i}$ or $\vcw{i+1}$.
\item $e_j$ has both endpoints in the interior of
  $\subchain{\vccw{i}}{\vcw{i+1}}$ and $\vccw{i}$ comes before $\vcw{i+1}$ from $v_i$ 
  in clockwise order along $\bd P$.
  The edge marked with thick line segment in
    Figure~\ref{fig:three_type_candidate}(a) satisfies this condition.
\end{enumerate}
A pair $(e_i, e_j)$ of edges is called a \emph{candidate edge pair} if
$e_j$ is a candidate edge of $e_i$.
\begin{lemma}
  \label{optimal partition}
  There is an optimal edge-partition in the set of all candidate edge
  pairs.
\end{lemma}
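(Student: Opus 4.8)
The plan is to recast the statement via the function $\maxrad{\cdot}{\cdot}$ and its minimizers. By Lemma~\ref{partition}, applied to the two geodesic disks of radius $r^*$ centered at $c_1^*$ and $c_2^*$ covering $P$, there exist $x_0,y_0\in\bd P$ with $\maxrad{x_0}{y_0}\le r^*$. Conversely, for any $x,y\in\bd P$ the geodesic $1$-centers of $\subpolygon{x}{y}$ and $\subpolygon{y}{x}$ form a $2$-set of radius at most $\maxrad{x}{y}$ (each point of $\subpolygon{x}{y}$ lies within $\radbd{x}{y}$ of the first center and each point of $\subpolygon{y}{x}$ within $\radbd{y}{x}$ of the second, and $P=\subpolygon{x}{y}\cup\subpolygon{y}{x}$), so $\maxrad{x}{y}\ge r^*$; hence $\min_{x,y}\maxrad{x}{y}=r^*$. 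Writing $M(z)$ for the set of minimizers of $\maxrad{z}{\cdot}$, it thus suffices to prove: \emph{for every $(\pu,\pw)$ with $\maxrad{\pu}{\pw}=r^*$ — equivalently, $\pw\in M(\pu)$ and (by symmetry of $\maxrad{\cdot}{\cdot}$) $\pu\in M(\pw)$ — there are edges $e_i\ni\pu$, $e_j\ni\pw$ such that $e_j$ is a candidate edge of $e_i$ or $e_i$ is a candidate edge of $e_j$.} Such a pair exists (take $(x_0,y_0)$); for it the above $2$-set is an optimal $2$-center realizing $(\pu,\pw)$ as a point-partition, and $(e_i,e_j)$ and $(e_j,e_i)$ induce the same restricted $2$-center, so either being a candidate edge pair produces an optimal edge-partition.

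Next I would localize $M(\cdot)$. Since $\radbd{z}{\cdot}$ is non-decreasing and $\radbd{\cdot}{z}$ is non-increasing as the free argument moves clockwise from $z$, the map $\maxrad{z}{\cdot}$ is valley-shaped, $M(z)$ is a sub-arc of $\bd P$, and $M(z)$ moves clockwise as $z$ does. For a vertex $v_k$ this gives $M(v_k)\subseteq\subchain{\vcw{k}}{\vccw{k}}$, the arc spanning $f(v_k)$ and its two flanking vertices: at $\vcw{k}$ and $\vccw{k}$ the value $\maxrad{v_k}{\cdot}$ already exceeds its minimum over all vertices, hence exceeds $\min_z\maxrad{v_k}{z}$, and monotonicity of the two defining functions keeps every minimizer strictly between $\vcw{k}$ and $\vccw{k}$. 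With the monotonicity of $M$ this yields $\bigcup_{z\in e_i}M(z)\subseteq\subchain{\vcw{i}}{\vccw{i+1}}$ for an edge $e_i=v_iv_{i+1}$; its ``core'' $\subchain{\vccw{i}}{\vcw{i+1}}$ is exactly what the type (2) candidate edges of $e_i$ are built from, while the two ``flanking'' pieces (around $f(v_i)$ and around $f(v_{i+1})$) are covered by the type (1) edges (incident to $\vccw{i}$ or $\vcw{i+1}$) or by candidate edges of $e_{i-1}$, $e_{i+1}$.

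Finally, the case analysis: fix $(\pu,\pw)$ with $\maxrad{\pu}{\pw}=r^*$ and an edge $e_i\ni\pu$, so $\pw\in M(\pu)\subseteq\subchain{\vcw{i}}{\vccw{i+1}}$. If $\pw$ lies in $\subchain{\vccw{i}}{\vcw{i+1}}$ or on an edge incident to $\vccw{i}$ or $\vcw{i+1}$, then the edge $e_j\ni\pw$ is a candidate edge of $e_i$ (type (2) or (1)). Otherwise $\pw$ lies in a flanking arc, say around $f(v_i)$; then I switch to $\pw$'s side using $\pu\in M(\pw)$: since $\pw$ sits in that flanking arc, the monotonicity of $M$ (equivalently of $f$ on vertices) forces $f$ of the two endpoints of $e_j$ to straddle $\pu$, so $e_i$ falls in the arc from which the candidate edges of $e_j$ are drawn, i.e.\ $e_i$ is a candidate edge of $e_j$; Lemma~\ref{ordering} is the tool that makes this rigorous, excluding the possibility that $f$ of an endpoint of $e_j$ ``skips'' $\pu$ and leaves a gap. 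When $\pu$ or $\pw$ is a vertex one also uses the freedom to choose its incident edge, which together with the type (1) edges settles the remaining degenerate cases (in particular the one in Figure~\ref{fig:three_type_candidate}(b), where $\vccw{i}$ follows $\vcw{i+1}$). The main obstacle is exactly this step — verifying, from Lemma~\ref{ordering}, Lemma~\ref{convex radius-neighbor}, and the monotonicity of $\radbd{\cdot}{\cdot}$, that the candidate edge pairs and their transposes leave no gap through which a pair with $\maxrad{\pu}{\pw}=r^*$ could escape.
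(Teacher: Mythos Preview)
Your proposal is a plan rather than a proof, and the step you flag as ``the main obstacle'' is indeed a genuine gap. The difficulty is your switching argument: when $\pw$ lands in the flanking arc around $f(v_i)$ (so $e_j$ is not a candidate edge of $e_i$), you assert that ``the monotonicity of $M$ \dots\ forces $f$ of the two endpoints of $e_j$ to straddle $\pu$,'' making $e_i$ a candidate edge of $e_j$. But $v_j\in f(v_i)$ does not imply $v_i\in f(v_j)$ (the relation is not symmetric), and Lemma~\ref{ordering} only controls $f(w)$ for $w\in f_{cw}(v)$, not for $w\in f(v)$; so you have no handle on where $\vccw{j}$ and $\vcw{j+1}$ sit relative to $e_i$. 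Nothing you have written rules out the possibility that $\pu$ also lies in a flanking arc of $e_j$, in which case switching sides gets you nowhere. Relatedly, you are trying to prove the stronger statement that \emph{every} optimal $(\pu,\pw)$ yields a candidate pair (or its transpose); the lemma only asks for \emph{some} optimal edge-partition among the candidates, and the stronger claim is not obviously true.

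The paper avoids this issue by a different mechanism. It does \emph{not} attempt to cover all optimal point-partitions; instead it fixes one $(\pu,\pw)$ with an extremal property (no optimal $(\pu,\pw')$ with $\pw'$ strictly before $\pw$), so that $\radbd{\pu}{\pw}=\radbd{\pw}{\pu}$. Then it proves two localization claims, $\vccw{i}\in\subchain{v_{i+1}}{v_{j+1}}$ and $\vcw{i+1}\in\subchain{v_j}{v_i}$, each by contradiction: if a claim fails, the paper \emph{shifts} the optimal point-partition to one with $\pu$ a vertex (namely $v_i$ or $v_{i+1}$), which is then a type~(1) candidate directly. So the paper uses the freedom to change $(\pu,\pw)$ rather than a symmetry/switching argument between $e_i$ and $e_j$. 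If you want to salvage your approach, you would need either to prove the straddling claim rigorously (which seems to require more than Lemma~\ref{ordering}), or to abandon ``for every'' and specify which optimal $(\pu,\pw)$ you pick---at which point the paper's extremal choice and shifting trick is the natural way forward.
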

\begin{proof}
By Lemma~\ref{partition}, there exists an optimal point-partition. Among all optimal point-partitions, let $(\pu,\pw)$ be one such that $(\pu,\pw')$ is not an optimal point-partition
for any $\pw' \in \subchain{\pu}{\pw}\setminus \{\beta\}$.
Let $(e_i,e_j)$ be an optimal edge-partition
with $\pu \in e_i$ and $\pw \in e_j$.
If $\pw$ is a vertex,
let $e_j$ be an edge such that $\pw = v_{j+1}$
(so that in all cases,
the counterclockwise neighbor of $\pw$ is $v_j$).
Our goal is to show that if 
there is no candidate edge pair of type (1),
then $(e_i,e_j)$ is a candidate edge pair of type (2).
Thus,
we need to locate $e_j$ with respect to $\vccw{i}$ and $\vcw{i+1}$.

Assume that $(e_i,e_j)$ is not a candidate edge pair of type (1).
We claim the followings.
\begin{itemize}
	\item[1.] $\vccw{i}\in\subchain{v_{i+1}}{v_{j+1}}$
	\item[2.] $\vcw{i+1}\in\subchain{v_j}{v_i}$ 
\end{itemize} 
Suppose these two claims are true.
Then, $\vcw{i+1}$ appears after $\vccw{i}$ as
we move clockwise from $v_i$ along $\bd P$ since we assume that
$(e_i,e_j)$ is not a candidate edge pair of type (1).
Moreover,
$e_j$ has both endpoints in the interior of $\subchain{\vccw{i}}{\vcw{i+1}}$.
Therefore,
$(e_i, e_j)$ is a candidate edge pair of type (2).

It remains to prove the two claims.
We start with the first one.
The strategy is to show that if the first claim is not true,
there is another optimal edge-partition belonging to type (1).
Let $x_i'\in\partial P$ be the last clockwise \emph{point}
from $v_i$ which minimizes
$\maxrad{v_i}{x_i'}$.
By the definitions of $x_i'$ and $(\pu,\pw)$,
we have
\begin{alignat*}{3}
\maxrad{v_i}{x_i'} &=r(v_i,x_i') &&= r(x_i',v_i) ,\\
\maxrad{\pu}{\pw} &=r(\pu,\pw) &&= r(\pw,\pu) .
\end{alignat*}
Since $(\pu,\pw)$ is an optimal point-partition,
we have
\begin{align}
\label{ineq Lemma 6}
r(x_i',v_i) = \maxrad{v_i}{x_i'} \geq \maxrad{\pu}{\pw} = r(\pw,\pu) .
\end{align}
If our first claim is not true,
then $\vccw{i}$ comes after $v_{j+1}$
in clockwise order from $v_i$.
This implies that
$x_i'$ comes after $v_{j+1}$ 
in clockwise order from $v_i$.
We show that $\maxrad{v_i}{x_i'} = \maxrad{\pu}{\pw}$.
If not, we would have $r(x_i',v_i) > r(\pw,\pu)$,
from which,
by the monotonicity of the functions $r(\pw,\cdot)$ and $r(\cdot,v_i)$,
$$r(x_i',v_i) > r(\pw,\pu) \geq r(\pw,v_i) \geq r(x_i',v_i),$$
which is a contradiction.
Therefore,
$r(x_i',v_i) = r(\pw,\pu)$, which means that $(v_i,x_i')$ is an optimal point-partition since we now have
$\maxrad{v_i}{x_i'} = r(x_i',v_i)
=  r(\pw,\pu) = \maxrad{\pu}{\pw}$.

Let us redefine $\pu$ as $v_i$ and $\pw$ as $x_i'$.
We also redefine $(e_i,e_j)$ as an optimal edge-partition
such that $v_i = \pu$ and $\pw \in e_j$
(if $\pw$ is a vertex, we choose $e_j$ such that $\pw = v_{j+1}$).
In this way, $\vccw{i}$ remains the same.
Thus $\vccw{i}$ is the counterclockwise neighbor of $x_i'$.
Therefore, $(e_i, e_j)$ is a candidate pair of type (1).

We now prove the second claim. The second claim can be proved in a similar way.
Assume to the contrary that $\vcw{i+1} \in \subchain{v_i}{v_j}\setminus\{v_i,v_j\}$.
Then $\vcw{i+1}$ comes before $v_j$ from $v_{i+1}$ in clockwise order.
Let $x_{i+1}'$ be the first clockwise \emph{point} from $v_{i+1}$ that minimizes $\maxrad{v_{i+1}}{x_{i+1}'}$.
Thus, $x_{i+1}'$ comes before $\beta$ from $v_{i+1}$ in clockwise order.
Then the following holds:
$$\maxrad{v_{i+1}}{x_{i+1}'} = r(v_{i+1},x_{i+1}')\leq r(v_{i+1},\beta) 
\leq r(\alpha,\beta)=\maxrad{\alpha}{\beta}.
$$
This implies that $(v_{i+1},x_{i+1}')$ is also an optimal point-partition.
We redefine $\alpha$ as $v_{i+1}$ and $\beta$ as $x_{i+1}'$.
We also redefine $e_i$ and $e_j$ accordingly.
This pair $(e_i,e_j)$ is a candidate edge pair of type (1).

Therefore, we have an optimal edge-partition in the set of 
all candidate edge pairs.
\end{proof}

\begin{figure}
  \begin{center}
    \includegraphics[width=0.45\textwidth]{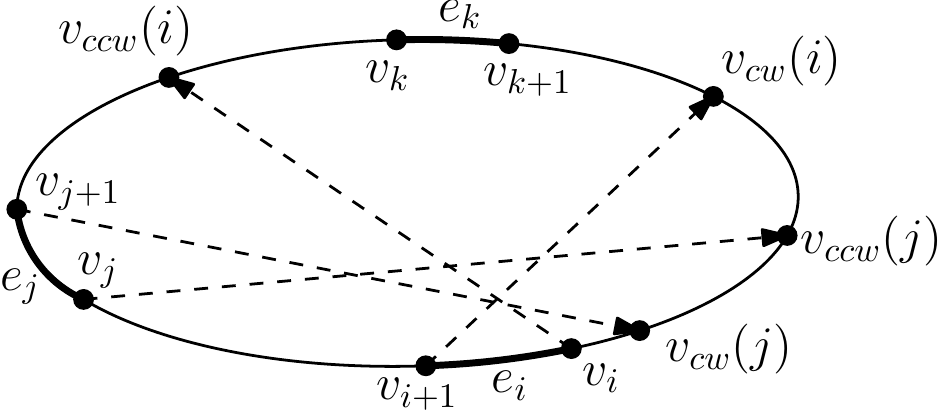}
    \caption{\small Two candidate edge pairs $(e_i,e_k)$ and $(e_j,e_k)$ of type (2).
    	$e_i$ comes before $e_j$ from $e_k$.}
    \label{fig:candidate_proof}
  \end{center}
\end{figure}

\begin{lemma}
  \label{the numbero f candidate edge pairs}
  There are $O(n)$ candidate edge pairs.
\end{lemma}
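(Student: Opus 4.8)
The plan is to bound the candidate edge pairs of types~(1) and~(2) separately. Type~(1) is immediate: for a fixed edge $e_i$, an edge $e_j$ can be a type-(1) candidate edge of $e_i$ only if one of $v_j,v_{j+1}$ equals one of the two fixed vertices $\vccw{i}$ and $\vcw{i+1}$, so $e_i$ has at most four type-(1) candidate edges, and there are at most $4n$ type-(1) candidate edge pairs in all.

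For type~(2) I would set $G_i:=\subchain{\vccw{i}}{\vcw{i+1}}$ when $\vccw{i}$ precedes $\vcw{i+1}$ clockwise from $v_i$, and $G_i:=\emptyset$ otherwise. By the definition of type~(2), the type-(2) candidate edges of $e_i$ are exactly the edges with both endpoints in $\interior{G_i}$, so the number of type-(2) candidate edge pairs equals $\sum_{i=1}^{n}\#\{\,e_k : v_k,v_{k+1}\in\interior{G_i}\,\}$. The key claim I would prove is that the chains $G_1,\dots,G_n$ are pairwise disjoint; granting this, every edge of $P$ lies in $\interior{G_i}$ for at most one index $i$, so the sum is at most $n$, and together with the type-(1) bound this yields $O(n)$ candidate edge pairs. (Equivalently: for each edge $e_k$ there is at most one $e_i$ of which $e_k$ is a type-(2) candidate edge, namely the one, if any, with $e_k\subseteq\interior{G_i}$; cf.\ Figure~\ref{fig:candidate_proof}.)

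To establish the disjointness I would argue as follows. First, $\vccw{i}$ is the vertex immediately following the set $f(v_i)$ clockwise and $\vcw{i+1}$ the vertex immediately preceding the set $f(v_{i+1})$ clockwise, so $G_i$ lies strictly between $f(v_i)$ and $f(v_{i+1})$ along $\bd P$. Hence it suffices to show that $f(v_1),\dots,f(v_n)$ appear in this cyclic clockwise order along $\bd P$ and wind around $\bd P$ exactly once: then $f(v_1),G_1,f(v_2),G_2,\dots,f(v_n),G_n$ occur cyclically in this order, consecutive chains $G_i,G_{i+1}$ are separated by the nonempty set $f(v_{i+1})$, and so all the $G_i$ are pairwise disjoint. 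For the monotonicity of $f$ I would use that $\subpolygon{v_{i+1}}{x}\subseteq\subpolygon{v_i}{x}$ and $\subpolygon{x}{v_i}\subseteq\subpolygon{x}{v_{i+1}}$ for every $x\in\bd P$, which give $\radbd{v_{i+1}}{x}\le\radbd{v_i}{x}$ and $\radbd{x}{v_{i+1}}\ge\radbd{x}{v_i}$; together with Lemma~\ref{convex radius-neighbor}, which tells us that as $x$ moves clockwise from $v_i$ the function $\maxrad{v_i}{x}$ coincides with the non-increasing $\radbd{x}{v_i}$ on $f_{cw}(v_i)$ and with the non-decreasing $\radbd{v_i}{x}$ on $f_{ccw}(v_i)$, so that $f(v_i)$ is exactly where these two monotone functions cross, these inequalities force the crossing location for $v_{i+1}$ to be weakly clockwise of that for $v_i$. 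Since, moreover, $f(v_i)$ is disjoint from $\{v_i\}$, $f_{cw}(v_i)$ and $f_{ccw}(v_i)$, the set $f(v_i)$ is never near its source, so $f(\cdot)$ winds around $\bd P$ exactly once over a full clockwise traversal of the source. (This monotonicity can also be read out of Lemma~\ref{ordering}.)

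The step I expect to be the main obstacle is this last one: making the ``monotone, single winding'' behaviour of $f$ precise, especially in degenerate situations where $f_{cw}(v_i)$ or $f_{ccw}(v_i)$ is empty or tiny, or where $f(v_i)$ consists of several consecutive vertices; the rest is bookkeeping.
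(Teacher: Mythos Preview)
Your approach is correct and reaches the same conclusion as the paper—each edge $e_k$ can be a type-(2) candidate edge of at most one $e_i$—but via a genuinely different argument. The paper argues locally by contradiction: assuming $(e_i,e_k)$ and $(e_j,e_k)$ are both type-(2) with $e_i$ clockwise-before $e_j$ from $e_k$, it shows $v_j\in f_{cw}(v_{i+1})$, then applies Lemma~\ref{ordering} to place $\vccw{j}\in\subchain{\vcw{i+1}}{v_j}$, which forces $e_k\notin\subchain{\vccw{j}}{\vcw{j+1}}$, a contradiction. Your route is global: you show the chains $G_i$ are pairwise disjoint by establishing that $f(v_1),\dots,f(v_n)$ advance monotonically exactly once around $\partial P$. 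The monotonicity you derive from the containments $\subpolygon{v_{i+1}}{x}\subseteq\subpolygon{v_i}{x}$ and $\subpolygon{x}{v_i}\subseteq\subpolygon{x}{v_{i+1}}$ is correct (this containment holds by geodesic convexity of the larger subpolygon, and is essentially what Lemma~\ref{ordering} encodes from a different angle). For the winding-number step you flag as the obstacle, here is a clean way to close it: the continuous map $\tilde f:\partial P\to\partial P$ sending $p$ to the minimizer of $\maxrad{p}{\cdot}$ is an \emph{involution}, because $\maxrad{p}{q}=\max\{r(p,q),r(q,p)\}=\maxrad{q}{p}$; a monotone fixed-point-free self-map of a circle necessarily has degree~$1$, which settles the single-winding claim without case analysis. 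The paper's local contradiction is shorter and sidesteps this topological detour entirely, while your argument yields a more structural picture (the nonempty $G_i$'s sit in the disjoint gaps between consecutive $f(v_i)$'s).
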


\begin{proof}
  Since $\vccw{i}$ and $\vcw{i}$ are uniquely defined for any vertex $v_i$ of $P$, the total number of candidate edge pairs of type (1) is at most $4n$.
  
  Now we consider the candidate edge pairs which have not been counted
  yet.  Assume that for an edge $e_k$ there are two distinct candidate edge
  pairs, say $(e_i,e_k)$ and $(e_j,e_k)$, of type (2).  Without loss
  of generality, we assume that $e_i$ comes before than $e_j$ in
  clockwise order from $e_k$.  
  Since they are candidate edge pairs of type (2), $e_k$ is contained
  in the intersection of $\subchain{\vccw{i}}{\vcw{i+1}}$ and
  $\subchain{\vccw{j}}{\vcw{j+1}}$.
  
 We now argue that $v_j$ lies on $\subchain{v_{i+1}}{\vcw{i+1}}$. Suppose that $v_j \in \subchain{\vcw{i+1}}{v_{i+1}}\setminus\{\vcw{i+1},v_{i+1}\}$ for the sake of a contradiction. Then, since $e_k$ is contained
 in $\subchain{\vccw{i}}{\vcw{i+1}}$, the vertex $v_j$ lies in 
 the interior of $\subchain{v_{k+1}}{v_{i+1}}$. This contradicts the fact that $e_i$ comes before than $e_j$ in
 clockwise order from $e_k$. Therefore, $v_j$ lies on $\subchain{v_{i+1}}{\vcw{i+1}}$, which means
 that $v_j \in f_{cw}(v_{i+1})$. 
 Refer to Figure~\ref{fig:candidate_proof}.
  
  Consequently, by Lemma~\ref{ordering},
  $\vccw{j}$ lies in $\subchain{\vcw{i+1}}{v_j}$. Since
  $e_k$ is contained in $\subchain{\vccw{i}}{\vcw{i+1}}$, 
  $\vcw{i+1}$ lies in $\subchain{v_{k+1}}{v_i}$.
  This implies that $e_k$ is not contained in $\subchain{\vccw{j}}{\vcw{j+1}}$, which is
  a contradiction.
  
  Therefore, for an edge $e$, there exists at most one edge $e'$ such
  that $(e', e)$ is a candidate edge pair of type (2).  Thus the
  number of candidate edge pairs of type (2) is $O(n)$.
\end{proof}

Now we present a procedure for finding the set of all candidate edge
pairs.  
For each index $i$, we compute $\vcw{i}$ and $\vccw{i}$ in $O(n \log n)$ time.
To find $\vcw{i}$, we apply binary search on the vertices of $P$
using Lemma~\ref{convex radius-neighbor} and a linear-time algorithm~\cite{1-center-SoCG,1-center} 
that computes the center of a simple polygon.
We can find $\vccw{i}$ in a similar way.
This takes $O(n^2\log n)$ time in total. 

Then, we compute the set of all candidate edge pairs based on
the information we just computed. For each edge $e_i$, we consider the
edges lying between $\vccw{i}$ and $\vcw{i+1}$ if $\vccw{i}$ comes before $\vcw{i+1}$
from $v_i$ in clockwise order. Otherwise, we consider the four edges
incident to $\vccw{i}$ and $\vcw{i+1}$.  In total, this
takes time linear to the number of candidate edge pairs, which
is $O(n)$ by Lemma~\ref{the numbero f candidate edge pairs}.

\begin{lemma}
  The set of all candidate edge pairs can be computed in $O(n^2 \log n)$
  time.
\end{lemma}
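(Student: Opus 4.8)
The plan is to analyze the running time of the two‑phase procedure described above. In the first phase we compute, for every index $i$, the vertices $\vcw{i}$ and $\vccw{i}$; in the second phase we read off the candidate edge pairs from this data. Since by Lemma~\ref{the numbero f candidate edge pairs} there are only $O(n)$ candidate edge pairs, and each of them is emitted in constant time once all the $\vcw{\cdot}$'s and $\vccw{\cdot}$'s are known, the second phase costs $O(n)$ time; so the whole estimate comes down to the first phase.

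Fix a vertex $v_i$. To locate $\vcw{i}$, which is the last vertex of $f_{cw}(v_i)$ on the clockwise side of $v_i$, I would run a binary search over the vertices of $P$ in clockwise order starting from $v_i$. Its correctness rests on the monotonicity stated in the preliminaries together with Lemma~\ref{convex radius-neighbor}: as a vertex $w$ moves clockwise from $v_i$, $\radbd{v_i}{w}$ is non-decreasing and $\radbd{w}{v_i}$ is non-increasing, and $\radbd{v_i}{w}<\radbd{w}{v_i}$ precisely while $w$ lies clockwise-before $f(v_i)$, whereas $\radbd{v_i}{w}>\radbd{w}{v_i}$ once $w$ is clockwise-past it; equivalently $\maxrad{v_i}{w}$ is unimodal (first non-increasing, then non-decreasing) in $w$. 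Hence at each probed vertex it suffices to compare a constant number of quantities of the form $\radbd{\cdot}{\cdot}$ and $\maxrad{\cdot}{\cdot}$ at that vertex and at its clockwise neighbour in order to decide which half of the range to recurse into and to detect the boundary between $f_{cw}(v_i)$ and $f(v_i)$. The vertex $\vccw{i}$ is found symmetrically, using the other inequality of Lemma~\ref{convex radius-neighbor}.

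Next I would bound the cost of a single probe, namely of evaluating $\radbd{v_i}{w}=\rad{\subpolygon{v_i}{w}}$ for a given $w$. One first builds $\subpolygon{v_i}{w}$ by concatenating $\subchain{v_i}{w}$ with the geodesic path $\pi(v_i,w)$; the path has $O(n)$ vertices and is computable in $O(n)$ time (indeed in $O(\log n + k)$ time after the $O(n)$-time preprocessing of~\cite{shortest-path-tree}), so $\subpolygon{v_i}{w}$ has $O(n)$ vertices. It is weakly simple, as observed in the preliminaries, hence the linear-time geodesic $1$-center algorithm of~\cite{1-center-SoCG,1-center} applies and returns $\rad{\subpolygon{v_i}{w}}$ in $O(n)$ time. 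Thus one probe costs $O(n)$, each binary search performs $O(\log n)$ probes, so $\vcw{i}$ and $\vccw{i}$ are obtained in $O(n\log n)$ time, and over all $n$ indices the first phase runs in $O(n^2\log n)$ time. Adding the $O(n)$ cost of the second phase yields the claimed bound. The one step that needs genuine care is verifying that the binary search is well defined --- that the tested predicate changes exactly once along $\partial P$ --- which is precisely what Lemma~\ref{convex radius-neighbor} and the monotonicity of $\radbd{v}{\cdot}$ and $\radbd{\cdot}{v}$ guarantee; everything else is a routine composition of known linear-time primitives.
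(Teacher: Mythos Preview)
Your proof is correct and follows essentially the same approach as the paper: for each vertex $v_i$ you locate $\vcw{i}$ and $\vccw{i}$ by a binary search over the vertices, using Lemma~\ref{convex radius-neighbor} to justify monotonicity of the test and the linear-time geodesic $1$-center algorithm of~\cite{1-center-SoCG,1-center} as the $O(n)$ probe, and then you enumerate the candidate pairs in $O(n)$ time via Lemma~\ref{the numbero f candidate edge pairs}. The only difference is that you spell out in more detail why the binary-search predicate is monotone and how each probe is implemented, which the paper leaves implicit.
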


\section{A decision algorithm for a candidate edge pair}
\label{sec:decision algorithm}

We say that a point-partition $(\pu,\pw)$ is \emph{restricted} to
$(e_i, e_j)$ if $\pu \in e_i$ and $\pw \in e_j$.  We say that a
triplet $(c_1,c_2,r)$ consisting of a $2$-set $(c_1,c_2)$ and a radius
$r$ is \emph{restricted} to $(e_i,e_j)$ if some point-partitions with
respect to $(c_1,c_2,r)$ are restricted to $(e_i,e_j)$.  We consider
$\maxrad{\alpha}{\beta}$ as a function whose variables are $\alpha
\in e_i$ and $\beta \in e_j$.  Since the function is continuous and
the domain is bounded, there exist two points, $\alpha^*\in e_i$ and
$\beta^*\in e_j$, that minimize the function.  We call $(c_1^*,c_2^*)$
\emph{a $2$-center restricted to $(e_i,e_j)$} if $c_1^*$ and $c_2^*$
are the $1$-centers of $\subpolygon{\alpha^*}{\beta^*}$ and
$\subpolygon{\beta^*}{\alpha^*}$, respectively.  By Lemma~\ref{optimal
  partition}, there is a $2$-center restricted to a candidate edge
pair 
which is a $2$-center (without any restriction).

In this section, we present a decision algorithm for a candidate edge
pair $(e_i,e_j)$.  Let $\radrestricted$ be the radius of a $2$-center
restricted to $(e_i,e_j)$. Let $r$ be an input of the algorithm.  The
decision algorithm in this section returns ``yes'' if $r \geq
\radrestricted$.  Additionally, it returns a $2$-center restricted to
$(e_i,e_j)$ with radius $r$.  It returns ``no'', otherwise.

Throughout this section, we assume that $\radbd{v_{i+1}}{v_j} \leq r
< \radbd{v_i}{v_{j+1}}$ and $\radbd{v_{j+1}}{v_i} \leq r <
\radbd{v_j}{v_{i+1}}$ because the other cases can be handled easily: if
$\radbd{v_{j+1}}{v_i} > r$ or $\radbd{v_{i+1}}{v_j} > r$, we return
``no''.  For the remaining cases, we return ``yes''.

The decision algorithm first assumes that $r \geq \radrestricted$ and
constructs a $2$-center restricted to $(e_i,e_j)$ with radius $r$.
The $2$-center produced by the algorithm is valid if and only if $r
\geq \radrestricted$.  Therefore, the algorithm can then decide
whether $r \geq \radrestricted$ by checking whether the $2$-center is
valid.  Thus, from now on, we assume that $r \geq \radrestricted$.  Let
$(c_1,c_2,r)$ be a triplet consisting of a $2$-set $(c_1,c_2)$ and
radius $r$ which is restricted to $(e_i,e_j)$, and $(\pu,\pw)$ be a
point-partition with respect to $(c_1,c_2, r)$ which is restricted to
$(e_i, e_j)$.  Without loss of generality, we assume that $D_{r}(c_1)$
contains $\subpolygon{\pu}{\pw}$ and $D_{r}(c_2)$ contains
$\subpolygon{\pw}{\pu}$.

\subsection{Computing the intersection of geodesic disks}
\label{sec:decision_intersection}
The first step of the decision algorithm is to compute the intersection
$I_1$ of the geodesic disks of radius $r$ centered at $v \in \subchain{v_{i+1}}{v_j}$
and the intersection $I_2$ of the geodesic disks of radius $r$ centered at 
$v \in \subchain{v_{j+1}}{v_i}$,
that is,  $I_1 =\cap_{k=i+1}^{j} {D_{r}(v_k)}$ and $I_2 = \cap_{k=j+1}^{i}
{D_{r}(v_k)}$.  Clearly, $c_1 \in I_1$ and $c_2 \in I_2$.

We compute $I_1$ and $I_2$ by constructing the farthest-point
geodesic Voronoi diagrams, denoted by $\mathsf{FV}_1$ and
$\mathsf{FV}_2$, of the vertices in $\subchain{v_{i+1}}{v_j}$ and the vertices in 
$\subchain{v_{j+1}}{v_i}$, respectively.  
For the case that the sites are on the vertices of 
$P$, the diagram can be computed in $O(n\log\log n)$
time~\cite{fvd_boundary}.  

A cell in $\mathsf{FV}_1$
associated with a site $t$ consists of the points $p \in P$ such that $t$ is the
site farthest from $p$ among all sites.  A \emph{refined} cell in
$\mathsf{FV}_1$ associated with site $t$ is obtained by further subdividing the
cell associated with site $t$ such that all points in the same refined cell have
the same combinatorial structure of the shortest paths from their common 
farthest site.  While constructing $\mathsf{FV}_1$ and
$\mathsf{FV}_2$, the algorithm~\cite{fvd_boundary} computes all refined cells.
For each refined cell, we can store the information of
the common farthest site $t$ of the refined cell and the anchor of $\pi(t,p)$ closest to $p$ for
a point $p$ in the refined cell.  

Then we compute circular arcs of $\partial I_1$ and $\partial I_2$
contained in a refined cell in time linear to the number of circular arcs in that refined cell plus the complexity of the refined cell.  By
traversing all refined cells, we can compute all circular arcs in $O(n)$ time
by the following lemma.

\begin{lemma}
  The total number of circular arcs in $\partial I_1$ and $\partial I_2$ is
  $O(n)$.
  \label{number of arcs}
\end{lemma}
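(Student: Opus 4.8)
The plan is to bound the circular arcs of $\partial I_1$ and $\partial I_2$ separately by a counting argument on refined cells, and then sum the two bounds. First I would recall from the preliminary section that the boundary of a single geodesic disk $D_r(v)$ consists of disjoint polygonal chains of $\partial P$ together with $O(n)$ circular arcs, and that a circular arc of $D_r(v)$ living in a cell $C$ of the shortest path map of $v$ is an arc of the Euclidean circle centered at the anchor $w$ of $\pi(v,q)$ closest to $q$ (for $q\in C$) with radius $r - d(v,w)$, or centered at $v$ itself with radius $r$. The key structural observation is that $I_1 = \bigcap_{k=i+1}^j D_r(v_k)$, so $\partial I_1$ inside the interior of $P$ is contained in the union of the $\partial D_r(v_k)$; a point of $\partial I_1$ lying in the interior of $P$ lies on $\partial D_r(v_k)$ where $v_k$ is the site farthest from that point among $v_{i+1},\dots,v_j$. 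Hence, within a refined cell of $\mathsf{FV}_1$ whose common farthest site is $t$ and whose common closest-anchor of $\pi(t,\cdot)$ is $w$, every circular arc of $\partial I_1$ is an arc of one fixed Euclidean circle (the circle of radius $r-d(t,w)$ about $w$, or the circle of radius $r$ about $t$). Since $I_1$ is an intersection of convex-along-geodesics sets, it is geodesically convex; combined with the fact that a line and a circle meet in at most two points, within each refined cell $\partial I_1$ contributes only $O(1)$ circular arcs, and in fact each such arc, when extended, can enter and leave the cell only a bounded number of times.

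The main step is then the bound on the number of refined cells of $\mathsf{FV}_1$ and $\mathsf{FV}_2$. I would invoke the result cited in the text (the farthest-point geodesic Voronoi diagram of $m$ point sites on $\partial P$ inside a simple $n$-gon, together with its refinement into cells of fixed shortest-path combinatorial structure, has complexity $O(n)$ — this is exactly what the algorithm of~\cite{fvd_boundary} constructs, and the refined-cell count is linear). Concretely, $\mathsf{FV}_1$ is built on the $\le n$ vertices in $\subchain{v_{i+1}}{v_j}$ and $\mathsf{FV}_2$ on the $\le n$ vertices in $\subchain{v_{j+1}}{v_i}$, so each refined diagram has $O(n)$ cells. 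Multiplying the $O(1)$-arcs-per-refined-cell bound by the $O(n)$ refined cells gives $O(n)$ circular arcs on $\partial I_1$, and similarly $O(n)$ on $\partial I_2$; adding the two yields the claimed $O(n)$ total.

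One subtlety I would address explicitly is that $\partial I_1$ also contains pieces of $\partial P$ (polygonal chains), but these contribute line segments, not circular arcs, so they do not affect the count; and a circular arc of $\partial I_1$ might be clipped by $\partial P$ or by transitions between refined cells, but each such clipping only replaces one arc by a bounded number of sub-arcs without changing the asymptotic total. I also need to make sure the "closest anchor" data stored per refined cell is well-defined even when $\pi(t,p)$ is a single segment (anchor $=t$), which is exactly the convention already set up in the preliminary section. The hard part is not any calculation but making precise that a refined cell contributes $O(1)$ arcs: the cleanest route is geodesic convexity of $I_1$ together with the observation that inside one refined cell $\partial I_1$ is an arc of a single fixed circle, so $I_1 \cap C$ is a convex region bounded by $O(1)$ straight edges (from cell boundaries and $\partial P$) and one circular arc, hence $O(1)$ arcs per cell. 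Summing over the $O(n)$ refined cells of $\mathsf{FV}_1$ and $\mathsf{FV}_2$ completes the proof.
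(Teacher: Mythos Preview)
Your proposal is correct and follows essentially the same approach as the paper: each circular arc of $\partial I_1$ lies on $\partial D_r(t)$ where $t$ is the farthest site, hence is contained in a single refined cell of $\mathsf{FV}_1$, and one then appeals to the $O(n)$ complexity of the refined diagram. The only minor difference is in the final charging step: you argue ``$O(1)$ arcs per refined cell'' via geodesic convexity, whereas the paper observes that every arc endpoint lies on an edge of the refined diagram and that each such edge is crossed by at most one circular arc of $\partial I_1$, so the arc count is bounded by the number of edges, which is $O(n)$. The paper's version is slightly more robust, since a single refined cell need not have constant complexity (only the total complexity is $O(n)$); your per-cell bound would go through cleanly once rephrased as a charge to refined-cell edges rather than to cells themselves.
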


\begin{proof}
  We prove the lemma only for $\partial I_1$. The case for $\partial
  I_2$ can be proven analogously.  The size of the (refined)
  farthest-point geodesic Voronoi diagram of $n$ sites in a simple
  polygon with $n$ vertices is $O(n)$~\cite{FVD}.  In other words,
  there are $O(n)$ \emph{refined} cells and edges of the Voronoi
  diagram.

  Let $s$ be a circular arc of $\partial I_1$. The center $c_s$ of the
  geodesic disk containing $s$ on its boundary lies in
  $\subchain{v_{i+1}}{v_j}$.  Note that $c_s$ is unique by the general
  position assumption.  Every geodesic disk whose center is a vertex
  in $\subchain{v_{i+1}}{v_j} \setminus \{c_s\}$ with radius $r$
  contains $s$ in its interior.  This means that, for any point $x \in
  s$, the farthest vertex from $x$ in $\subchain{v_{i+1}}{v_j}$ is
  $c_s$.  Moreover, the combinatorial structures of the geodesic paths
  from the center $c_s$ to points on the circular arc $s$ are the
  same.  Thus each circular arc $s$ is contained in a refined cell
  of $\mathsf{FV}_1$ whose site is $c_s$.
  Moreover, each endpoint of the circular arc lies in the boundary of
  the refined cell containing it (including the boundary of $P$.)
  Each edge of the diagram is intersected by at most one circular arc
  of $\partial I_1$.
  Therefore, the number of circular arcs in $\partial I_1$ is $O(n)$ 
  by the fact that the size
  of the refined farthest-point geodesic Voronoi diagram is $O(n)$.
\end{proof}

\begin{figure}
  \begin{center}
    \includegraphics[width=0.4\textwidth]{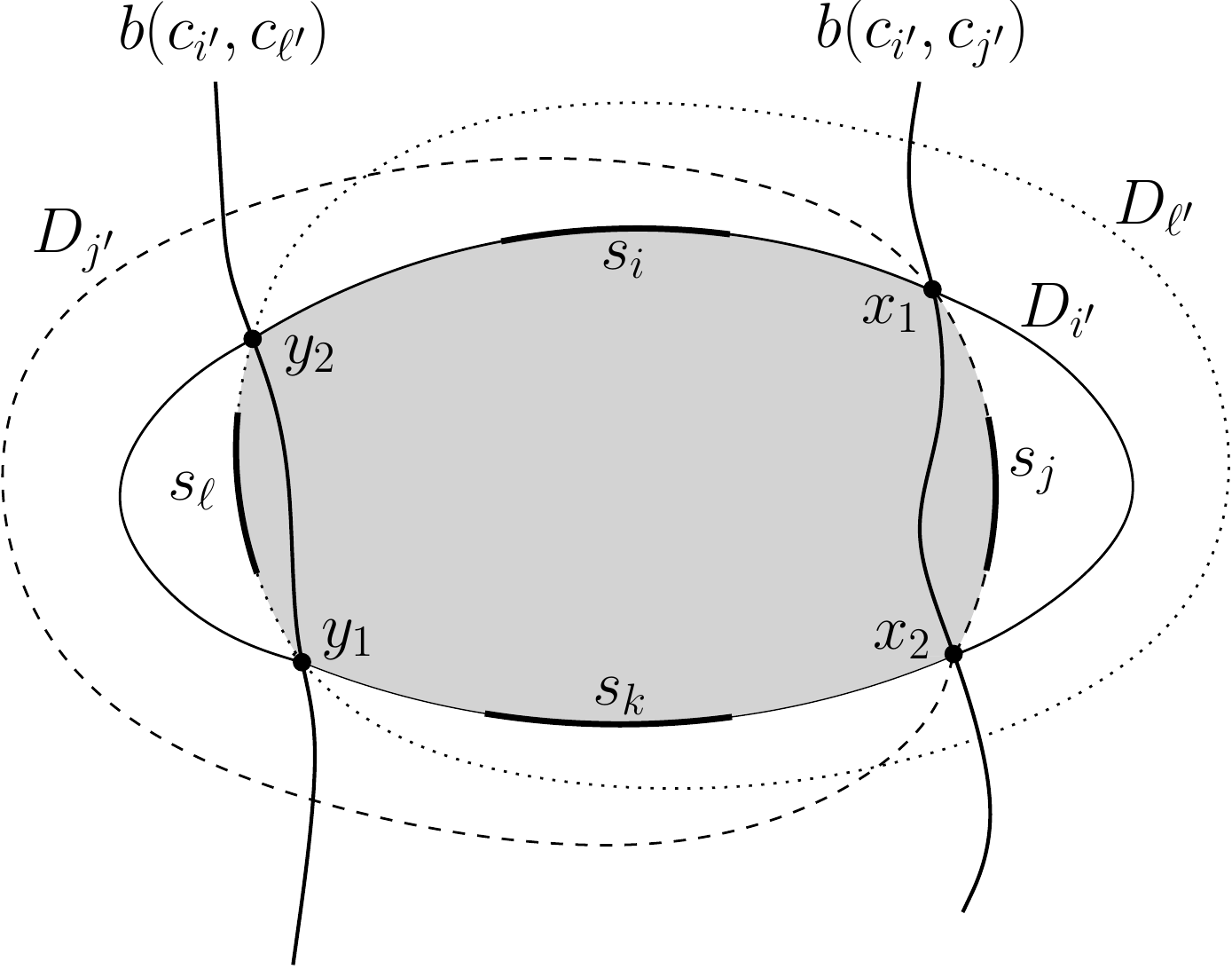}
    \caption{\small 
      The center of $D_{i'}$ must lie to either the
      right of $b(c_{i'},c_{j'})$  or
      the left of $b(c_{i'},c_{\ell'})$.
      \label{fig:geodesic_disk}}
  \end{center}
\end{figure}

\begin{lemma}
  \label{pseudo}
  Let $\mathcal {D}=\{D_1, \ldots, D_k \}$ be a set of geodesic disks
  with the same radius and let $I$ be the intersection of all geodesic disks in
  $\mathcal {D}$.  Let $S=\langle s_1, \ldots, s_{k'}\rangle$ be the cyclic sequence
  of the circular arcs of $\partial I$ along its boundary in clockwise
  order.  For any integer $i \in [1,k]$, the circular arcs in $\partial I
  \cap \partial D_i$ are consecutive in $S$.
\end{lemma}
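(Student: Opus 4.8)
The plan is to argue by contradiction, exploiting the geodesic convexity of geodesic disks and of their intersection $I$. Suppose the circular arcs of $\partial I \cap \partial D_i$ are \emph{not} consecutive in $S$. Then, walking along $\partial I$ clockwise, we encounter a portion of $\partial D_i$, then at least one arc $s_a$ belonging to some other disk $D_j$ (with $s_a \not\subset \partial D_i$), then again a portion of $\partial D_i$, and in between a second "foreign" arc $s_b$ belonging to some disk $D_\ell$ (possibly $\ell = j$, but lying on a different connected stretch of $\partial I$). Let $p_1, p_2 \in \partial I \cap \partial D_i$ be points lying on the two distinct stretches of $\partial D_i$ separated by $s_a$ and $s_b$. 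Both $p_1$ and $p_2$ are at geodesic distance exactly $r$ from the center $c_i$ of $D_i$, and every point of $I$ is within distance $r$ of $c_i$.

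The key step is to look at the geodesic path $\pi(p_1,p_2)$. Since $I$ is geodesically convex (it is an intersection of geodesically convex sets, and each geodesic disk is geodesically convex), $\pi(p_1,p_2) \subseteq I$, so every point on $\pi(p_1,p_2)$ is within distance $r$ of $c_i$. On the other hand, I would like to invoke Lemma~\ref{path-convex} with $a = c_i$, $b = p_1$, $c = p_2$: as $x$ varies along $\pi(p_1,p_2)$, the function $d(c_i,x)$ is convex in $d(p_1,x)$, and since $d(c_i,p_1) = d(c_i,p_2) = r$, convexity forces $d(c_i,x) \le r$ along the whole path, with equality only possible at the endpoints unless the function is constant. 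Now $\pi(p_1,p_2)$, together with one of the two arcs of $\partial I$ between $p_1$ and $p_2$, bounds a region; I will use the fact that $\partial I$ between $p_1$ and $p_2$ on the side containing $s_a$ must touch $\partial D_i$ only at $p_1$ and $p_2$ (the stretch of $\partial D_i$ we identified is separated from the rest by foreign arcs), hence points of that arc strictly near $s_a$ lie strictly inside $D_i$, i.e.\ at distance $< r$ from $c_i$. The contradiction will come from showing that such an interior configuration is impossible: $\partial D_i \cap I$ would then be a single arc on that side, contradicting that we crossed from $\partial D_i$ to a foreign arc and back. More carefully, I would show that $\partial D_i \cap \partial I$ consists of points where $\partial D_i$ "supports" $I$ from outside, and that along $\partial I$ these support points form one contiguous block, because leaving the block means $\partial I$ moves strictly into the interior of $D_i$ and, by the convexity of the distance-to-$c_i$ function along $\partial I$ (again Lemma~\ref{path-convex} applied to successive circular arcs, each of which is an arc of a Euclidean circle about an anchor or about $c_i$), it cannot return to distance exactly $r$ and then leave again.

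The main obstacle I anticipate is making the "crossing" argument rigorous despite the fact that $\partial I$ is only piecewise circular and that $I$ need not be simply connected-looking in the Euclidean sense (though it is geodesically convex). The clean way around this is: parametrize $\partial I$ clockwise as a closed curve $\gamma$, and consider the continuous function $g(t) = d(c_i, \gamma(t)) \le r$. An arc of $\partial I$ lies on $\partial D_i$ exactly where $g(t) = r$. It suffices to prove $\{t : g(t) = r\}$ is a single (cyclic) interval. Suppose not; then there are two disjoint maximal sub-intervals $J_1, J_2$ on which $g = r$, with $g < r$ strictly on the two complementary cyclic arcs between them. Pick $p_1 \in \gamma(J_1)$, $p_2 \in \gamma(J_2)$. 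By Lemma~\ref{path-convex}, $d(c_i, x) \le r$ on $\pi(p_1,p_2) \subseteq I$; but I also need a lower-bound contradiction. For that I would use the second clause of Lemma~\ref{path-convex} on a different triple, or more directly observe that a point $q$ strictly inside $I$ on the short Euclidean side near where $\partial I$ dips below distance $r$ cannot be "shielded" from $c_i$: one shows $c_i$ lies in the geodesically convex set $I$, and the nearest-point projection of $c_i$'s disk boundary behaves monotonically. I expect the cleanest completion is to reduce everything to the following one-line fact and prove it carefully: if $f$ is convex along each geodesic segment of a geodesically convex region $I$, then $\{f = \max_I f\}$ restricted to $\partial I$ is connected — which is exactly the statement specialized to $f = d(c_i,\cdot)$, and this is where I would spend the real effort.
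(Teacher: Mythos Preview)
Your proposal has a genuine gap. The ``one-line fact'' you isolate at the end --- that for any function $f$ which is convex along every geodesic segment of a geodesically convex set $I$, the set $\{f=\max_I f\}\cap\partial I$ is connected --- is \emph{false} in general. Already in the Euclidean plane (where geodesics are straight segments), take $I$ to be an ellipse and $f(p)=\|p-c\|$ the distance from its center $c$; then $f$ is convex along every chord, yet its maximum on $\partial I$ is attained at the two antipodal endpoints of the major axis, a disconnected set. Your earlier attempts via $\pi(p_1,p_2)$ and Lemma~\ref{path-convex} run into the same wall: you obtain $d(c_i,x)\le r$ along $\pi(p_1,p_2)$, but that is information you already have (since $\pi(p_1,p_2)\subseteq I\subseteq D_i$), and you never produce the promised ``lower-bound contradiction.'' Notice also that nowhere in the proposal do you use that the disks share a \emph{common radius}; this is a strong signal that the argument is missing the essential ingredient.

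The paper's proof is quite different and exploits exactly that missing structure. If the arcs from $D_{i'}$ were non-consecutive, one could find four arcs $s_i,s_j,s_k,s_\ell$ in cyclic order with $s_i,s_k\subset\partial D_{i'}$ and $s_j\subset\partial D_{j'}$, $s_\ell\subset\partial D_{\ell'}$ for distinct disks. The key tool is the \emph{bisecting curve} $b(c_{i'},c_{j'})$: because the disks have equal radius, this curve meets $\partial(D_{i'}\cup D_{j'})$ in exactly two points, and the center $c_{i'}$ must lie on the $c_{i'}$-side of both $b(c_{i'},c_{j'})$ and $b(c_{i'},c_{\ell'})$. Tracking how these two bisecting curves interleave with $s_i,\dots,s_\ell$ along $\partial I$ forces them to cross an \emph{even} number of times inside $D_{i'}$, contradicting the fact that two bisecting curves from a common point cross at most once. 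To repair your approach you would need a substitute for this pseudo-circle/bisector property; pure geodesic convexity of $d(c_i,\cdot)$ is not enough.
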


\begin{proof}
  Assume to the contrary that there are four circular arcs $s_i, s_j,
  s_k, s_\ell$ in $S$ with $i < j< k < \ell$ such that $s_i, s_k
  \subset \partial D_{i'}$, $s_j \subset \partial D_{j'}$ and $s_\ell
  \subset \partial D_{\ell'}$ for three distinct geodesic disks
  $D_{i'}, D_{j'},D_{\ell'} \in \mathcal{D}$. See
  Figure~\ref{fig:geodesic_disk}. Let $c_{i'}, c_{j'}, c_{\ell'}$ be the
  centers of the disks $D_{i'}, D_{j'}, D_{\ell'}$, respectively. Then the
  bisecting curve of $c_{i'}$ and $c_{j'}$ intersects $\partial
  (D_{i'} \cup D_{j'})$ exactly twice.  Let $x_1$ and $x_2$ be these two
  intersection points such that $s_j$ is contained in the region
  bounded by $b(c_{i'},c_{j'})$ and the part of $\partial D_{i'}$ from
  $x_1$ to $x_2$ in clockwise order.  Similarly, the bisecting curve
  of $c_{i'}$ and $c_{\ell'}$ intersects $\partial (D_{i'} \cup
  D_{\ell'})$ exactly twice.  Let $y_1$ and $y_2$ be these intersection
  points such that $s_\ell$ is contained in the region bounded by
  $b(c_{i'},c_{\ell'})$ and the part of $\partial D_{i'}$ from $y_1$
  to $y_2$ in clockwise order.
  When we traverse $\bd I$ clockwise starting from $s_i$, we
  encounter $x_1$, $\interior{s_j}$, $x_2$, $\interior{s_k}$, $y_1$
  $\interior{s_\ell}$, and $y_2$ in order,
  where $\interior{s}$ is the circular arc $s$ excluding its endpoints
  for a circular arc $s$. 
  
  The center $c_{i'}$ lies in the subset $P_1 \subset D_{i'}$ bounded
  by $b(c_{i'},c_{j'})$ and the part of $\partial I$ from $x_1$ to
  $x_2$ in clockwise order.  On the other hand, $c_{i'}$ lies in the
  subset $P_2 \subset D_{i'}$ bounded by $b(c_{i'},c_{\ell'})$ and the
  part of $\partial I$ from $y_1$ to $y_2$ in clockwise order.  Thus,
  $c_{i'} \in P_1 \cap P_2$.
  Therefore, $P_1$ and $P_2$ must intersect. Since $c_{i'}$ lies in the interior of $D_{i'}$, $b(c_{i'},c_{\ell'})$ and $b(c_{i'},c_{j'})$ must intersect in the interior of $D_{i'}$.
  In order to satisfy the order of appearances of $x_1$, $x_2$, $y_1$ and $y_2$ along $\bd I$, $b(c_{i'},c_{\ell'})$ and $b(c_{i'},c_{j'})$ must intersect an even number of times in the interior of $D_{i'}$.
  This
   is impossible since $b(x,y)$ and $b(x,z)$ cross each other at most once
   for any three points $x, y, z$ in $P$.  Thus $P_1 \cap P_2 = \phi$, which
   is a contradiction.
\end{proof}

Note that $\partial I_1$ and $\partial I_2$ consist of $O(n)$ circular
arcs and (possibly incomplete) edges of $\partial P$ in total.  Let $\sarcs_1$
and $\sarcs_2$ be the unions of the circular arcs of $\partial I_1$
and $\partial I_2$, respectively.
By the following lemma, it is sufficient to choose two points, one
from $\sarcs_1$ and one from $\sarcs_2$, in order to find a $2$-center
restricted to $(e_i,e_j)$ with radius $r$.

\begin{lemma}	
  If $\radrestricted \leq r \leq
  \min\{\radbd{v_i}{v_{j+1}},\radbd{v_{j}}{v_{i+1}}\}$, there is a
  triplet $(c_1,c_2,r)$ restricted to $(e_i, e_j)$
  such that $c_1 \in \sarcs_1$ and $c_2 \in \sarcs_2$.
  \label{boundary is sufficient}
\end{lemma}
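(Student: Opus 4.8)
The plan is to start from an arbitrary triplet $(c_1,c_2,r)$ restricted to $(e_i,e_j)$ — which exists by Lemma~\ref{optimal partition} since $r \geq \radrestricted$ — together with a point-partition $(\pu,\pw)$ with respect to $(c_1,c_2,r)$ restricted to $(e_i,e_j)$, and then push $c_1$ and $c_2$ outward to the boundaries $\sarcs_1$ and $\sarcs_2$ without destroying the covering property. Recall that $c_1 \in I_1 = \cap_{k=i+1}^j D_r(v_k)$ and $c_2 \in I_2 = \cap_{k=j+1}^i D_r(v_k)$. The key point I would establish first is that moving $c_1$ anywhere inside $I_1$ keeps $\subpolygon{\pu}{\pw}$ covered by $D_r(c_1)$, \emph{provided} $\pu$ and $\pw$ stay fixed on $e_i$ and $e_j$: by geodesic convexity of $\subpolygon{\pu}{\pw}$ it suffices that $\partial \subpolygon{\pu}{\pw} \subseteq D_r(c_1)$; the chain $\subchain{\pu}{\pw}$ is covered because every vertex $v_k$ with $k \in \{i+1,\dots,j\}$ is within distance $r$ of $c_1$ (that is exactly $c_1 \in I_1$) and the two partial edges on $e_i,e_j$ are handled by the hypothesis $r \leq \min\{\radbd{v_i}{v_{j+1}},\radbd{v_j}{v_{i+1}}\}$ together with Lemma~\ref{path-convex} applied to the geodesic paths; and $\pi(\pu,\pw) \subseteq D_r(c_1)$ follows from Lemma~\ref{path-convex} once $d(c_1,\pu),d(c_1,\pw)\leq r$, which again holds since $\pu\in e_i\subseteq \subchain{v_i}{v_{i+1}}$ and $\pw \in e_j$ lie between the relevant vertices.

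The second step is the actual relocation. Since $I_1$ is a bounded region whose boundary consists of arcs of $\sarcs_1$ and (possibly incomplete) edges of $\partial P$, I would take any ray from $c_1$ and follow it until it first meets $\partial I_1$; if it meets $\partial P$ we are on a boundary edge of $P$, and I argue (using that $\subpolygon{\pu}{\pw}$ together with $\subpolygon{\pw}{\pu}$ must still cover $P$, so the other disk has slack there) that we may slide further along $\partial P$ until we reach a circular arc of $\sarcs_1$; in all cases we arrive at a point $c_1' \in \sarcs_1$. By the invariant from the first step, $D_r(c_1')$ still contains $\subpolygon{\pu}{\pw}$. Symmetrically we move $c_2$ to a point $c_2' \in \sarcs_2$ with $D_r(c_2') \supseteq \subpolygon{\pw}{\pu}$. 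Then $P = \subpolygon{\pu}{\pw} \cup \subpolygon{\pw}{\pu} \subseteq D_r(c_1') \cup D_r(c_2')$, and $(\pu,\pw)$ is still a point-partition with respect to $(c_1',c_2',r)$ restricted to $(e_i,e_j)$; hence $(c_1',c_2',r)$ is the desired triplet.

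I expect the main obstacle to be the boundary-sliding argument: a priori the point reached on $\partial I_1$ could lie on an edge of $\partial P$ rather than on a circular arc, and one must rule out (or work around) the case where the relevant portion of $\partial I_1$ near $c_1$ is entirely a sub-edge of $\partial P$ with no incident arc of $\sarcs_1$. The resolution I would use is that if a whole neighborhood of $c_1$ on $\partial I_1$ is a sub-edge $\tau$ of some $e_k$ with $k\in\{i+1,\dots,j\}$, then along $\tau$ the binding disk is $D_r(v_k)$; but the condition $\radbd{v_{i+1}}{v_j}\le r < \radbd{v_i}{v_{j+1}}$ (and its symmetric counterpart) forces at least one genuine circular arc to appear on $\partial I_1$ — equivalently $I_1$ is not just a sub-edge of $\partial P$ — so we can always route the moving point to $\sarcs_1$. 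The parallel case for $c_2$ and $\sarcs_2$ is identical. A secondary technical point is checking that while sliding $c_1$ we never violate coverage of the partial edge of $e_j$ lying in $\subpolygon{\pu}{\pw}$; this is where the upper bound $r \le \radbd{v_i}{v_{j+1}}$ enters, since it guarantees that the subpolygon $\subpolygon{v_i}{v_{j+1}}$ (which contains $\subpolygon{\pu}{\pw}$) still has its $1$-center within radius $r$, so a suitable center within $I_1$ suffices.
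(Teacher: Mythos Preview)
Your central invariant---that every point of $I_1$ covers $\subpolygon{\pu}{\pw}$ with a radius-$r$ disk---is false, and this breaks the relocation argument. Membership in $I_1=\bigcap_{k=i+1}^{j} D_r(v_k)$ controls only the distances to the vertices $v_{i+1},\dots,v_j$; it says nothing about $d(c_1,\pu)$ or $d(c_1,\pw)$, since $\pu\in e_i$ and $\pw\in e_j$ may sit arbitrarily close to $v_i$ and $v_{j+1}$. Your appeal to Lemma~\ref{path-convex} for the partial edge $\pu v_{i+1}$ would need $d(c_1,v_i)\le r$ (to get $d(c_1,\pu)\le\max\{d(c_1,v_i),d(c_1,v_{i+1})\}\le r$), and that is simply not implied by $c_1\in I_1$. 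You also read the hypothesis $r\le\radbd{v_i}{v_{j+1}}$ backwards: it says the $1$-center radius of $\subpolygon{v_i}{v_{j+1}}$ is \emph{at least} $r$, so that subpolygon \emph{cannot} be covered by a single radius-$r$ disk---the opposite of what you invoke in your last paragraph.

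The paper does not keep $(\pu,\pw)$ fixed while moving the center. Instead it \emph{enlarges} the subpolygon: starting from the extreme points $x\in e_i$, $y\in e_j$ still covered by $D_r(c_1')$, it pushes $x$ toward $v_i$ and then $y$ toward $v_{j+1}$ until $\radbd{x}{y}=r$ exactly (this is where the upper bound $r\le\radbd{v_i}{v_{j+1}}$ is used, in the correct direction, to guarantee the growing radius eventually reaches $r$). It then takes $c_1$ to be the $1$-center of $\subpolygon{x}{y}$. Because the radius of $\subpolygon{x}{y}$ is exactly $r$, some point at distance $r$ from $c_1$ must be a vertex of $\subchain{x}{y}$ (the alternative, that only $x$ and $y$ are farthest, forces $c_1$ to be the midpoint of $\pi(x,y)$ and yields a contradiction with $c_2'$), which places $c_1$ on $\partial D_r(v_k)$ for some $k\in\{i+1,\dots,j\}$ and hence on $\sarcs_1$. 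The missing idea in your approach is precisely this: you must adjust the point-partition so that the new center is \emph{forced} to be tight against a vertex, rather than hope that an arbitrary boundary point of $I_1$ still covers the old $(\pu,\pw)$.
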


\begin{proof}
  Since $\radrestricted \leq r$, there is a triplet $(c_1', c_2', r)$
  made of a $2$-set $(c_1', c_2')$ and a radius $r$ restricted to
  $(e_i, e_j)$.  Let $(\pu, \pw)$ be a point-partition with respect to
  $(c_1',c_2',r)$ with $\pu \in e_i$ and $\pw \in e_j$.  Without loss
  of generality, we assume that $c_1' \in \subpolygon{\pu}{\pw}$ and
  $c_2' \in \subpolygon{\pw}{\pu}$.

  Consider $c_1'$ first.  Let $x \in e_i$ be the point closest
  to $v_i$ among the points satisfying $d(x,c_1')\leq r$.  Similarly, let $y \in e_j$ be
  the point closest to $v_{j+1}$ among the points satisfying $d(y,
  c_1')\leq r$. Then we
  have $\radbd{x}{y} \leq r$.  As we move $x$ from its current
  position to $v_{i}$ along $e_i$, $\radbd{x}{y}$ increases.  We move
  $x$ until $\radbd{x}{y} = r$ or $x$ reaches $v_{i}$.  If $x$ reaches
  $v_{i}$, we move $y$ from the current position to $v_{j+1}$ until
  $\radbd{x}{y} = r$.  This is always possible to find such $x$ and $y$ since, by the assumption, we have
  $\radbd{v_{i}}{v_{j+1}} \geq r$.

  Now we consider the subpolygon $\subpolygon{x}{y}$.  Let $c_1$ be
  the center of $\subpolygon{x}{y}$.  If there is a vertex $v \in
  \subchain{x}{y}$ with $d(v,c_1) = r$, then $c_1$ lies in $\sarcs_1$
  and $D_r(c_1)$ contains $D_r(c_1')$, thus we are done.  Otherwise,
  $c_1$ is the midpoint of $\pi(x,y)$.
  But then,
  since $D_r(c_2')$ contains $x$ and $y$,
  this means that $c_2' = c_1$, which is a contradiction.
  Thus, the pair
  $(c_1, c_2')$ is a $2$-center restricted to $(e_i, e_j)$ and $c_1 \in
  \sarcs_1$.
	
  Similarly, we can find $c_2$ lying in $\sarcs_2$ with $D_r(c_2')
  \subset D_r(c_2)$.
\end{proof}

\subsection{Subdividing the edges and the boundaries of the intersections}
\label{subsection Subdividing the edges and the chains}
The shortest path map rooted at $x$ is the subdivision of $P$
consisting of triangular cells such that every point $p$ in the same cell
has the same combinatorial structure of $\pi(p,x)$.  The map can be obtained
by extending the edges of the shortest path tree rooted at $x$ towards
their descendants~\cite{shortest-path-tree}.  Let $\spm{k}$ denote the
shortest path map rooted at $v_k$.  We compute the shortest path maps
$\spm{i}$ and $\spm{i+1}$.

By overlaying the two shortest path maps with $\partial I_1$, we
obtain the set of $O(n)$ \emph{finer} arcs of $\partial I_1$ as
follows.  We find any cell of $\spm{i}$ intersecting $\partial I_1$, and traverse to the
neighboring cells along $\partial I_1$.  Whenever we cross an edge of
the cell along an arc of $\bd I_1$, we compute the intersection
between the edge of the cell and the arc of $\partial I_1$. 
We can check in constant time whether a given arc of $\bd I_1$ crosses
an edge of a given cell in $\spm{i}$ since every cell is a triangle.
While traversing $\partial I_1$, we cross each edge of $\spm{i}$ at most
twice by the geodesic convexity of $I_1$.  Thus, in total, it is
sufficient to traverse $\partial I_1$ once and cross each edge in
$\spm{i}$ at most twice.  Similarly, we compute the intersections
between $\partial I_1$ and $\spm{i+1}$.  From now on, we treat
$\partial I_1$ as the sequence of $O(n)$ finer arcs.

\begin{figure}
  \begin{center}
    \includegraphics[width=0.8\textwidth]{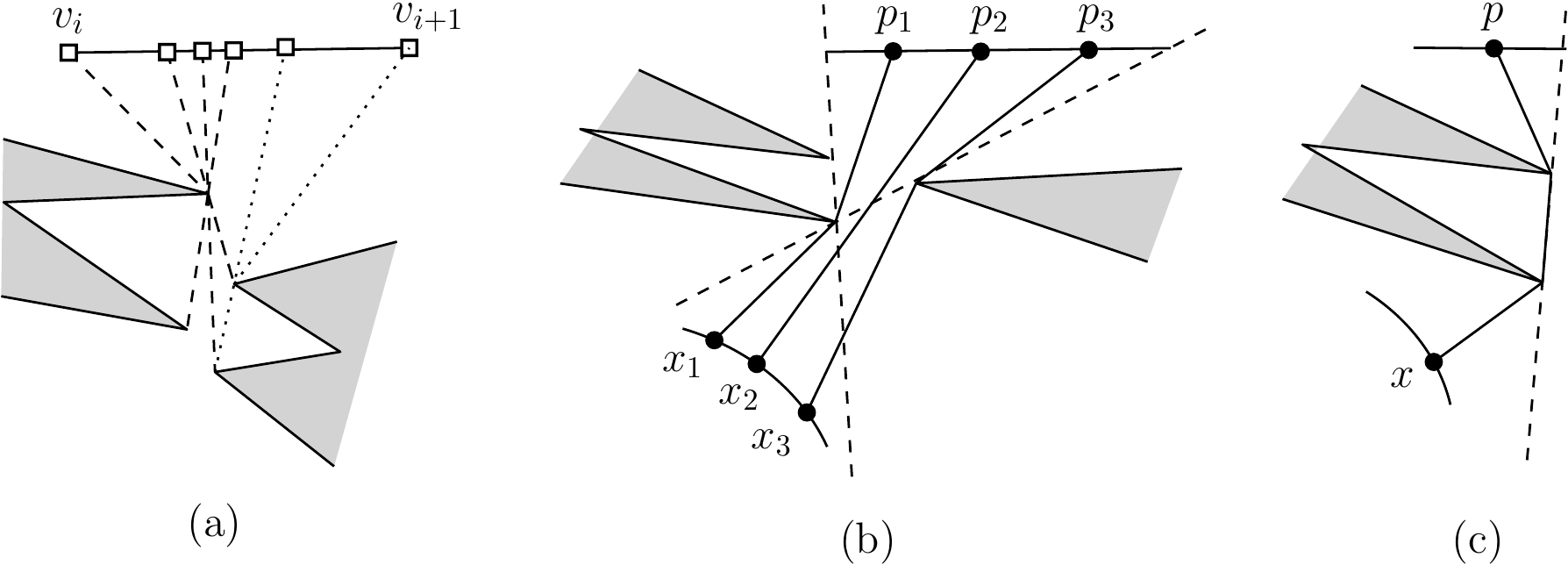}
		\caption{\small 
			(a) By extending the edges of the shortest path trees rooted at $v_i$
			and rooted at $v_{i+1}$, we obtain subedges of $e_i=v_iv_{i+1}$.
			The endpoints of the subedges are marked with squares.
			(b) If the geodesic path $\pi(x,p)$ is elementary,
			there are at most three different combinatorial structures of 
			$\pi(x,p)$ depending on $x$ and $p$.
			(c) For any point $x$ in the 
			finer circular arc and any point $p$ in the subedge,
			the combinatorial structures of $\pi(x,p)$ are the same for all $x$ and $p$.		
			\label{fig:combinatorial_structure}}
                    \end{center}
                  \end{figure}
	
We also subdivide the polygon edge $e_i$ into $O(n)$
\emph{subedges} by overlaying the extensions of the
edges in the shortest path trees rooted at $v_i$ and
$v_{i+1}$ towards their parents with $e_i$.  See
Figure~\ref{fig:combinatorial_structure}(a).
Let $\mathcal{L}_i$ be the set of intersections of
the extensions of the edges in the shortest path
trees of $v_i$ and $v_{i+1}$ with $e_i$.  While computing $\mathcal{L}_i$,
we sort them along $e_i$ from $v_{i+1}$. This takes $O(n\log n)$ time. 
We compute $\mathcal{L}_j$ similarly, which is the set of the
intersections of the extensions of the edges in
$\spm{j}$ and $\spm{j+1}$ with $e_j$,
and sort them along $e_j$ from $v_j$.

We say that a geodesic path between two points is \emph{elementary} if the
number of line segments in the geodesic path is at
most two.  
If $\pi(x,p)$ is elementary for all points $x$ on the same finer arc of $\bd I_1$ and all points $p$
on the same subedge,
there are at most three possible distinct combinatorial structures as shown in
Figure~\ref{fig:combinatorial_structure}(b).
However, the combinatorial structures of $\pi(x, p)$ are the same for any point $x$ on the
same finer circular arc and any point $p$ on the same subedge if $\pi(x,p)$ is not elementary for all
$x$ and $p$.  Refer to Figure~\ref{fig:combinatorial_structure}(c).

\subsection{Four coverage functions and their extrema}
\label{sec:coverage-functions}
In this section,
we will subdivide $\partial I_1$ and $\partial
I_2$ into $O(1)$ subchains (refer to Subsection~\ref{subsubsect subdividing}).
Then for every pair of
subchains, one from $\partial I_1$ and one
from $\partial I_2$, we will explain how to decide whether there is a
$2$-center $(c_1, c_2)$ restricted to a candidate edge pair
lying on the two subchains in Section~\ref{sec:computing a $2$-center for a pair of subchains}.  To this end, in the following subsection,
we
define four functions $\phi_t(x)$ and $\psi_t(x)$
for $t=1,2$.

\subsubsection{Four coverage functions}
\begin{figure}
\includegraphics[width=0.5\textwidth]{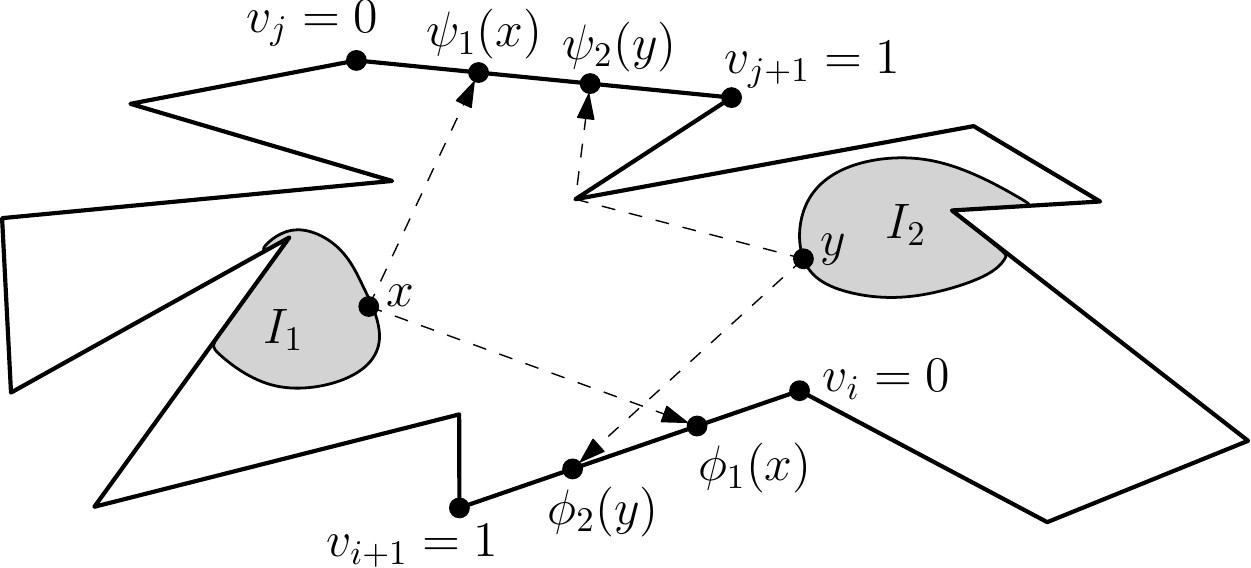}
\centering
\caption{\small
For a point
$x \in \partial I_t$, it holds that $d(\phi_t(x), x) = r$ and
$d(\psi_t(x), x) = r$ if $d(v_{i},x) \geq r$ and $d(v_{j+1},x) \geq r$
for $t=1,2$.
\label{fig:functions}}
\end{figure}

We represent each point $p\in e_i$
as a real number in 
 $[0,1]$: A point $x \in e_i$ (and $y \in e_j$)
 is represented as
$\|v_i-p\|/\|v_i-v_{i+1}\|$ in $[0,1]$, where
$\|x-y\|$ is the Euclidean distance between two
points $x$ and $y$.  Similarly, we represent each
point $q \in e_j$ as $\|v_j-q\|/\|v_j-v_{j+1}\|$.
We use a real number in $[0,1]$ and its
corresponding point 
interchangeably.  Recall that $\sarcs_1$ and
$\sarcs_2$ are the unions of the circular arcs of
$\partial I_1$ and $\partial I_2$, respectively.

Let us define the four functions $\phi_t(x)$ and
$\psi_t(x)$ for $t=1,2$ as follows.  Refer to
Figure~\ref{fig:functions}.
\begin{itemize}
\item The function $\phi_1 : \sarcs_1 \rightarrow
  [0,1]$ maps $x \in \sarcs_1$ to the infimum of
  the numbers which represent the points in
  $D_r(x) \cap e_i$.

\item The function $\phi_2: \sarcs_2 \rightarrow
  [0,1]$ maps $x \in \sarcs_2$ to the supremum of
  the numbers which represent the points in
  $D_r(x) \cap e_i$.

\item The function $\psi_1 : \sarcs_1 \rightarrow
  [0,1]$ maps $x \in \sarcs_1$ to the supremum of
  the numbers which represent the points in
  $D_r(x) \cap e_j$.

\item The function $\psi_2 : \sarcs_2 \rightarrow
  [0,1]$ maps $x \in \sarcs_2$ to the infimum of
  the numbers which represent the points in
  $D_r(x) \cap e_j$.
\end{itemize}

In the following, let $t$ be $1$ or $2$.  Our goal is to
split $\sarcs_t$ into subchains such that
$\phi_t$ and $\psi_t$ are monotone when their domain is restricted
to each subchain.  However, $\sarcs_t$ is not
necessarily a connected subset of $\partial I_t$.
Thus, to simplify the description of the split, we
define four continuous functions $\phi'_t,\psi'_t$
: $\partial I_t \rightarrow [0,1]$ by
interpolating $\phi_t$ and $\psi_t$ on $\partial
I_t$:
\[ \phi'_t(x) =
\begin{cases}
  \phi_t(x)& \quad \text{if } x \in \sarcs_t\\
  \frac{d_c(x_1,x)}{d_c(x_2,x_1)}\phi_t(x_1)+
  \frac{d_c(x_2,x)}{d_c(x_2,x_1)}\phi_t(x_2)
  & \quad \text{otherwise,}\\
\end{cases}
\]
where $x_1$ and $x_2$ are the first and the last
points of $S_t$ along $\partial I_t$ from $x$ in
clockwise order, 
respectively, and $d_c(x',y')$ denotes the length
of a chain $\subchain{x'}{y'}$.  The function
$\psi_t'$ is defined similarly.  
\begin{lemma}\label{extreme}
  The functions $\phi'_t$ and $\psi'_t$ for
  $t=1,2$ 
  are well-defined. 
\end{lemma}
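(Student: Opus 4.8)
The plan is to read off from the definitions exactly what ``well-defined'' requires and then verify it in two steps. \emph{Step (i):} for every point $x$ in the domain of $\phi_t$ and $\psi_t$ (namely $\sarcs_t$), the infimum / supremum in the definition is a genuine real number lying in $[0,1]$; this reduces to showing that $D_r(x)\cap e_i$ (for $\phi_t$) and $D_r(x)\cap e_j$ (for $\psi_t$) are nonempty. \emph{Step (ii):} for every $x\in\bd{I_t}\setminus\sarcs_t$ the two reference points $x_1,x_2\in\sarcs_t$ appearing in the interpolation formula exist and are distinct, so that $d_c(x_2,x_1)\neq 0$ and the formula is meaningful; this reduces to showing $\sarcs_t\neq\emptyset$ and that $\sarcs_t$ is not a single point. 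Once Steps (i) and (ii) hold, the continuity intended for $\phi'_t,\psi'_t$ also follows, since the interpolant agrees with $\phi_t,\psi_t$ at the endpoints of the arcs of $\sarcs_t$.

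For Step (i) the crucial inclusion is $\sarcs_t\subseteq\bd{I_t}\subseteq I_t$, together with $I_1=\bigcap_{k=i+1}^{j}D_r(v_k)$ and $I_2=\bigcap_{k=j+1}^{i}D_r(v_k)$. Hence any $x\in\sarcs_1$ has $d(x,v_{i+1})\le r$ and $d(x,v_j)\le r$, so the endpoint $v_{i+1}$ of $e_i$ lies in $D_r(x)\cap e_i$ and the endpoint $v_j$ of $e_j$ lies in $D_r(x)\cap e_j$; both sets are nonempty. Symmetrically, any $x\in\sarcs_2$ has $d(x,v_i)\le r$ and $d(x,v_{j+1})\le r$, so $v_i\in D_r(x)\cap e_i$ and $v_{j+1}\in D_r(x)\cap e_j$. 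In fact $D_r(x)$ is geodesically convex by Lemma~\ref{path-convex}, so these intersections are closed subintervals of the parametrising interval $[0,1]$; but for well-definedness only nonemptiness is needed, since the infimum (resp.\ supremum) of any nonempty subset of $[0,1]$ lies in $[0,1]$. Thus $\phi_t,\psi_t$ are genuine maps $\sarcs_t\to[0,1]$, and $\phi'_t,\psi'_t$ restrict to them on $\sarcs_t$.

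For Step (ii) I would use the standing assumptions $\radbd{v_{i+1}}{v_j}\le r<\radbd{v_i}{v_{j+1}}$ and $\radbd{v_{j+1}}{v_i}\le r<\radbd{v_j}{v_{i+1}}$ to prove $\sarcs_t\neq\emptyset$, i.e.\ that $\bd{I_t}$ meets $\interior{P}$. First, $\radbd{v_{i+1}}{v_j}\le r$ means the geodesic $1$-center $c$ of $\subpolygon{v_{i+1}}{v_j}$ is within distance $r$ of every $v_k$ with $i+1\le k\le j$, so $c\in I_1$; in the generic case $\radbd{v_{i+1}}{v_j}<r$ a small geodesic disk about $c$ lies in $I_1$ by the triangle inequality, so $I_1$ has nonempty interior. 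Second, $r<\radbd{v_i}{v_{j+1}}=\rad{\subpolygon{v_i}{v_{j+1}}}$ means the vertex $v_{i+1}$, which lies in $\subpolygon{v_i}{v_{j+1}}$, is not a radius-$r$ center of that subpolygon, so there is $p\in\subpolygon{v_i}{v_{j+1}}\subseteq P$ with $d(v_{i+1},p)>r$; then $p\notin D_r(v_{i+1})\supseteq I_1$, hence $I_1\subsetneq P$. A nonempty closed geodesically convex proper subset of $P$ with nonempty interior must have part of its boundary in $\interior{P}$, and that part is made of circular arcs, each on some $\bd{D_r(v_k)}$; therefore $\sarcs_1\neq\emptyset$. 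The case of $\sarcs_2$ is the mirror image, using $\radbd{v_{j+1}}{v_i}\le r<\radbd{v_j}{v_{i+1}}$, the vertex $v_{j+1}$, and the $1$-center of $\subpolygon{v_{j+1}}{v_i}$. With $\sarcs_t$ nonempty and not a single point, the reference points $x_1,x_2$ exist, are distinct, $d_c(x_2,x_1)>0$, and the interpolation defining $\phi'_t,\psi'_t$ is legitimate on all of $\bd{I_t}$.

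The main obstacle is Step (ii): this is the only place where both halves of the standing assumption on $r$ are genuinely used, and one must rule out the degenerate possibility that $I_t$ collapses (for instance to a single point) when $r$ equals one of $\radbd{v_{i+1}}{v_j}$ or $\radbd{v_j}{v_{i+1}}$ exactly --- this is where the general position assumption (no vertex of $P$ equidistant from two distinct vertices) is invoked. Step (i) is routine given the inclusion $\sarcs_t\subseteq I_t$ and the observation that the endpoints $v_{i+1},v_j$ (resp.\ $v_i,v_{j+1}$) of $e_i,e_j$ already belong to the vertex set defining $I_t$.
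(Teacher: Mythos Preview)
Your Step~(i) is essentially identical to the paper's argument: the paper also observes that $\sarcs_1\subseteq D_r(v_{i+1})$ forces $v_{i+1}\in D_r(x)\cap e_i$ for every $x\in\sarcs_1$, whence the infimum is taken over a nonempty subset of $[0,1]$ and is therefore a legitimate value. The paper treats only $\phi_1$ and declares the other three functions analogous; you spell out all four, but the content is the same.

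Where you differ is in Step~(ii). The paper dispatches the interpolation with a single sentence---``If $\phi_1(x)$ is well-defined, so is $\phi'_1(x)$''---and does not verify that the reference points $x_1,x_2$ exist or that $d_c(x_2,x_1)>0$. You correctly identify that this needs $\sarcs_t$ to be nonempty and to have positive length, and you supply an argument using the standing hypotheses $\radbd{v_{i+1}}{v_j}\le r<\radbd{v_i}{v_{j+1}}$ (and their mirror) to show that $I_t$ has nonempty interior yet is a proper subset of $P$, forcing $\bd{I_t}$ to meet $\interior{P}$ in a set of positive length. This is a genuine addition: the paper's proof is silent on it, and your treatment is more complete. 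The only soft spot is the boundary case $r=\radbd{v_{i+1}}{v_j}$, which you flag but do not fully resolve; in the paper's setting this is harmless because the decision algorithm is ultimately invoked under the assumption $r\ge\radrestricted$, and Lemma~\ref{boundary is sufficient} (proved shortly afterwards) guarantees $\sarcs_t\neq\emptyset$ in that regime anyway.
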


\begin{proof}
  Here we prove the lemma only for $\phi'_1$. 
  For the other functions, the lemma can be proved analogously.

  If $\phi_1(x)$ is well-defined, so is $\phi_1'(x)$.  Thus, we
  show that $\phi_1(x)$ is well-defined. The set $\sarcs_1$ is
  the domain of $\phi_1$ and $e_i$ is the range of $\phi_1$.  Since
  $D_r(v_{i+1})$ contains all points in $\sarcs_1$, $D_r(x)$ 
  contains $v_{i+1}$ for all $x \in \sarcs_1$.  For each point $x \in
  \sarcs_1$, there are two cases: $D_r(x)$ intersects $e_i$ or
  contains $v_{i}$.  For the first case, $\phi_1(x)$ represents the
  point closest to $v_i$ among the points $p \in e_i$ with $d(x,p)=r$.
  For the second case, $\phi_1(x)$ is $1$, which represents $v_{i+1}$.
  Thus, $\phi_1(x)$ is uniquely defined for a point $x \in \sarcs_1$,
  which means that it is well-defined.
\end{proof}

We choose any two points $w_1 \in \partial I_1$ and $w_2 \in \partial
I_2$ which are endpoints of some circular arcs of $\partial I_1$ and
$\partial I_2$, respectively, such that $d(w_t, v_i) < r$ and $d(w_t,v_{j+1}) < r$.
Such points always exist by the assumption that $r(v_i,v_{j+1}) < r$.  
We use them as reference points for
$\partial I_1$ and $\partial I_2$.  
We write $p \prec q$ for any two points $p \in \partial I_t$ and $q
\in \partial I_t$, if $p$ comes before than $q$ when we traverse
$\partial I_t$ in clockwise order from the reference point $w_t$ for
$t= 1, 2$.  We consider $\partial I_1$ and $\partial I_2$ as chains of
circular and linear arcs starting from $w_1$ and $w_2$, respectively.

In the following, we consider the local extrema of the four functions.
For the function $\phi_t'$,  let  $N_{\max}$ be the 
set of points $x' \in I_t\setminus\{w_t\}$ such that $\phi_t'(x)$ has a local maximum at $x=x'$.
Similarly, let $N_{\min}$ be the 
set of points $x' \in I_t\setminus\{w_t\}$ such that $\phi_t'(x)$ has a local minimum at $x=x'$.
Then the following lemma holds.
\begin{lemma}
	 Both $N_{\max}$ and $N_{\min}$ are connected. 
\end{lemma}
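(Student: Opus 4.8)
The plan is to reduce the statement to the following claim: every sublevel set $\{x\in\partial I_t:\phi_t'(x)\le s\}$ is a connected sub-arc of $\partial I_t$. Granting this, the superlevel sets are connected sub-arcs as well (a superlevel set is the complement in $\partial I_t$ of an increasing union of nested connected arcs, hence again a connected arc), so $\phi_t'$ is \emph{cyclically bitonic}: it has one connected maximum arc and one connected minimum arc and is monotone on each of the two arcs of $\partial I_t$ joining them. Since $d(w_t,v_i)<r$ forces $\phi_1(w_t)=0=\min\phi_1'$ (and symmetrically for the other three functions, using $d(w_t,v_{j+1})<r$), the base point $w_t$ lies in the minimum arc; cutting $\partial I_t$ at $w_t$ into a chain then makes $\phi_t'$ unimodal (increasing then decreasing) on that chain, so the set of local maxima is the single peak arc and there is no interior local minimum -- whence both $N_{\max}$ and $N_{\min}$ are connected. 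I carry out the argument for $\phi_1'$; the functions $\psi_1'$, $\phi_2'$, $\psi_2'$ are identical after interchanging suprema with infima and the roles of $v_i$ and $v_{j+1}$ (resp.\ $\sarcs_1$ and $\sarcs_2$).

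For the sublevel sets on $\sarcs_1$, write $q_s\in e_i$ for the point represented by $s\in[0,1]$. For $x\in\sarcs_1$ we have $v_{i+1}\in D_r(x)$, since $\sarcs_1\subseteq\partial I_1\subseteq D_r(v_{i+1})$, and $D_r(x)\cap e_i$ is a sub-segment of $e_i$ because $D_r(x)$ is geodesically convex (Lemma~\ref{path-convex}) and $e_i$ is a geodesic; hence the numbers representing $D_r(x)\cap e_i$ form exactly the interval $[\phi_1(x),1]$, so $\phi_1(x)\le s$ iff $q_s\in D_r(x)$ iff $x\in D_r(q_s)$. Thus $\{x\in\sarcs_1:\phi_1(x)\le s\}=\sarcs_1\cap D_r(q_s)$, and these arcs are nested increasing in $s$. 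The key geometric input is that $D_r(q_s)\cap\partial I_1$ is a connected sub-arc of $\partial I_1$: apply Lemma~\ref{pseudo} to the family $\{D_r(v_{i+1}),\dots,D_r(v_j),D_r(q_s)\}$, whose intersection is $I_1\cap D_r(q_s)$; every boundary point of $I_1\cap D_r(q_s)$ lying strictly inside $I_1$ must lie on $\partial D_r(q_s)$ (a point of $\partial D_r(v_k)$ inside $I_1$ lies on $\partial I_1$, not strictly inside, and a point of $\partial P$ inside $I_1$ likewise lies on $\partial I_1$), so by Lemma~\ref{pseudo} these points form a consecutive block of circular arcs, i.e.\ a connected sub-arc of the cycle $\partial(I_1\cap D_r(q_s))$; its complement in that cycle is precisely $\partial I_1\cap D_r(q_s)$, which is therefore connected.

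Next I transfer this to the interpolant $\phi_1'$. On each gap of $\partial I_1\setminus\sarcs_1$ the function $\phi_1'$ is a linear interpolation of its two endpoint values, hence monotone there and never exceeding the larger of them. Combining this with the previous paragraph: a gap both of whose endpoints satisfy $\phi_1\le s$ lies entirely below level $s$; a gap with neither endpoint below $s$ lies entirely above; and a gap with exactly one endpoint below $s$ contributes one sub-interval attached at that endpoint. Gluing these gap-contributions onto the connected trace $\sarcs_1\cap D_r(q_s)$ of the connected arc $D_r(q_s)\cap\partial I_1$ shows that $\{x\in\partial I_1:\phi_1'(x)\le s\}$ is again a connected sub-arc of $\partial I_1$, for every $s$, which is the claim needed above, and the lemma follows as described.

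The main obstacle is the gap step: $\phi_t'$ is a linear interpolation rather than the coverage function on the gaps, so the clean geodesic-convexity description of its sublevel sets that holds on $\sarcs_t$ does not persist verbatim on $\partial I_t$, and one must verify carefully that attaching the monotone gap-pieces to the connected $\sarcs_t$-trace never creates a second component -- this is exactly where the ``interpolant never exceeds the larger endpoint value'' property is used. A secondary point requiring care is the bookkeeping in the application of Lemma~\ref{pseudo} (checking that no polygon-edge piece of $\partial(I_1\cap D_r(q_s))$ lies strictly interior to $I_1$, so that the circular arcs coming from $\partial D_r(q_s)$ genuinely form a single sub-arc), together with the treatment of degenerate constant stretches of $\phi_t'$ (a gap with equal endpoint values, or a coincidence of two circular arcs), which the general position assumption rules out inside $\sarcs_t$ and which otherwise collapse harmlessly into the bitonic picture.
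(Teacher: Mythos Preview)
Your approach and the paper's rest on the same key step: for any $p\in e_i$, the set $D_r(p)\cap\partial I_1$ is a connected sub-arc, established via Lemma~\ref{pseudo}. The paper argues pointwise at a single putative extremum; you package the same idea more systematically as ``all sublevel sets of $\phi_1'$ are connected arcs, hence $\phi_1'$ is cyclically bitonic''. Your version is cleaner and in fact avoids a sign slip in the paper's $N_{\max}$ case (there the neighbourhood of a local \emph{maximum} is \emph{contained in} $D_r(\phi_1'(x))$, not disjoint from it).

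There is, however, a genuine gap in your last step for $N_{\min}$. You assert that after cutting at $w_t$ the function is unimodal and ``there is no interior local minimum''. But $d(w_t,v_i)<r$ is strict, so $\phi_1'=0$ on a whole neighbourhood of $w_1$ in $\sarcs_1$; the minimum arc $\{\phi_1'=0\}=\partial I_1\cap D_r(v_i)$ is typically nondegenerate, and when $w_1$ lies in its interior (e.g.\ when $w_1$ is the common endpoint of two circular arcs of $\partial I_1$), removing $w_1$ disconnects it. Then $N_{\min}$ has two components, one on each side of $w_1$, and the lemma as literally stated fails. The paper's ``similarly, we can prove that $N_{\min}$ is connected'' inherits the same problem, so this is a defect of the statement rather than of your method. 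It does not affect the only downstream use (Corollary~\ref{cor:extreme} and the subdivision into at most five monotone pieces): your bitonic conclusion already yields that directly. A clean fix is to state and prove the bitonic structure itself, or to note explicitly that $N_{\min}$ is either empty or consists of at most two arcs both abutting $w_1$.
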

\begin{proof}
We first consider the case that $\phi_1'(x) \neq 1$ for some $x\in N_{\max}$.
The boundary of $D_r(\phi_1'(x))$ intersects $\bd I_1$ at $x$.  
Moreover, there exists a connected region $N(x) \subset \bd I_1$ containing $x$ such that 
$D_r(\phi_1'(x)) \cap N(x) = \{x\}$.
Together with Lemma~\ref{pseudo}, 
this implies that $D_r(\phi_1'(x))$ does not contain any point other than
$x$.  Thus, $x$ is the only point contained in $N_{\max}$.

For the remaining case that $\phi_1'(x)=1$ for any point $x$ in $N_{\max}$,
Lemma~\ref{pseudo} implies that $D_r(\phi_1'(x)) \cap \partial I_1$ is connected.
By definition, $N_{\max} = D_r(\phi_1'(x))\cap \partial I_1$, which is connected.

Similarly, we can prove that $N_{\min}$ is connected.
\end{proof}

Let $x_{\max}$ and $x_{\min}$ be any two points in 
$N_{\max}$ and $N_{\min}$, respectively. 
To make the description easier, we assume that $x_{\max}
\prec x_{\min}$. 

The following corollary states Lemma~\ref{extreme} from a different
point of view.
\begin{corollary} \label{cor:extreme} The function $\phi_1$ is
  monotonically increasing in the domain $\{x \in \sarcs_1 : x \prec
  x_{\max} \}$ and in the domain $\{x \in \sarcs_1 : x_{\min} \prec
  x\}$ and monotonically decreasing in the domain $\{x \in \sarcs_1 :
  x_{\max} \prec x \prec x_{\min} \}$.
\end{corollary}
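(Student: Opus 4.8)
The plan is to transfer the statement from the (possibly discontinuous) function $\phi_1$ to its continuous interpolant $\phi_1'$, and then to read the monotonicity off the structure of the local-extremum sets $N_{\max}$ and $N_{\min}$. Since $\phi_1$ and $\phi_1'$ agree on $\sarcs_1$, it suffices to establish the three monotonicity statements for $\phi_1'$ on the corresponding subsets of $\partial I_1$ and afterwards intersect with $\sarcs_1$. View $\partial I_1$, cut open at the reference point $w_1$, as a closed real interval on which $\phi_1'$ is continuous. By the lemma asserting that $N_{\max}$ and $N_{\min}$ are connected, each of them is a single (closed) subarc of $\partial I_1$; moreover, the proof of that lemma (via Lemma~\ref{pseudo}) shows that $\phi_1'$ is in fact constant on $N_{\max}$ and constant on $N_{\min}$ --- either the set is a single point, or $\phi_1'$ equals there the value that represents $v_{i+1}$ (for $N_{\max}$) or $v_i$ (for $N_{\min}$). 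Under the normalization $x_{\max}\prec x_{\min}$, the arc $N_{\max}$ precedes $N_{\min}$ in clockwise order from $w_1$.

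Deleting $N_{\max}$, $N_{\min}$ and $w_1$ from $\partial I_1$ leaves three open arcs: $A_1$ running clockwise from $w_1$ to $N_{\max}$, $A_2$ from $N_{\max}$ to $N_{\min}$, and $A_3$ from $N_{\min}$ back to $w_1$. The one elementary fact I will invoke is that a continuous function on a closed interval with no local maximum and no local minimum in its interior is monotone there (otherwise it would take, on some closed subinterval, a value strictly larger --- or strictly smaller --- than at both of its endpoints, and the extreme value theorem would then yield an interior strict local extremum). Since $N_{\max}$ contains every local maximum of $\phi_1'$ on $\partial I_1\setminus\{w_1\}$ and $N_{\min}$ contains every local minimum, none of $A_1, A_2, A_3$ has a local extremum of $\phi_1'$ in its interior, so $\phi_1'$ is monotone on each of the closed arcs $\overline{A_1}$, $\overline{A_2}$, $\overline{A_3}$.

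It remains to pin down the direction on each arc. The clockwise endpoint of $\overline{A_1}$ lies in the local-maximum set $N_{\max}$, so the monotone function $\phi_1'|_{\overline{A_1}}$ cannot be decreasing --- that would force points just counterclockwise of $N_{\max}$ to exceed the maximum --- hence it is non-decreasing; symmetrically, $\phi_1'$ is non-increasing on $\overline{A_2}$ (whose endpoints lie in $N_{\max}$ and $N_{\min}$) and non-decreasing on $\overline{A_3}$ (whose counterclockwise endpoint lies in $N_{\min}$). Concatenating $\overline{A_1}$ with the constant stretch $N_{\max}$ shows $\phi_1'$ is non-decreasing on $\{x:x\prec x_{\max}\}$ for every $x_{\max}\in N_{\max}$; concatenating $N_{\min}$ with $\overline{A_3}$ shows it is non-decreasing on $\{x:x_{\min}\prec x\}$; and the constant stretch of $N_{\max}$, then $\overline{A_2}$, then the constant stretch of $N_{\min}$, shows it is non-increasing on $\{x:x_{\max}\prec x\prec x_{\min}\}$. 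Restricting to $\sarcs_1$ gives exactly the three assertions of the corollary.

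The main point requiring care is the bookkeeping around the plateaus: because $x_{\max}$ and $x_{\min}$ may be arbitrary points of the (possibly non-degenerate) arcs $N_{\max}$ and $N_{\min}$, one genuinely needs $\phi_1'$ to be constant on each of them, not merely that they are connected --- which is why the argument leans on the internal structure exposed in the proof of the connectedness lemma rather than on its statement alone. One should also dispatch the degenerate configurations: if $\phi_1'$ is constant all three assertions are vacuous, and if $N_{\max}$ and $N_{\min}$ abut at a common endpoint then that point is simultaneously a local maximum and a local minimum, so $\phi_1'$ is locally constant there and the overlapping constant stretch merely merges the neighbouring pieces. None of this needs a new idea beyond Lemma~\ref{pseudo} and the connectedness lemma.
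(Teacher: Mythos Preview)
Your argument is correct and is exactly the elaboration the paper leaves implicit: the corollary is stated without proof, immediately after the lemma establishing that $N_{\max}$ and $N_{\min}$ are connected, and your derivation---passing to the continuous interpolant $\phi_1'$, invoking the constancy on $N_{\max}$ and $N_{\min}$ exposed in that lemma's proof, and then using that a continuous function with no interior local extrema on an interval is monotone---is the natural way to fill in the omitted details. Your care with the plateau bookkeeping (allowing $x_{\max},x_{\min}$ to be arbitrary points of possibly non-degenerate arcs) is appropriate and matches the paper's setup.
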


\subsubsection{Subdividing the chains with respect to local	extrema}
\label{subsubsect subdividing}
\label{sec:finding the local extrema}

We first compute one local maximum and one local minimum for each function.
Here we describe a way to find a local maximum of $\phi_1'$ lying on $S_1$.  
For a point $x$ lying on $S_1$, it holds that
$\phi_1'(x)=\phi_1(x)$ by definition.
Consider the sequence of the endpoints of the finer circular arcs
starting from the reference point $w_1$.  There are $O(n)$
endpoints.
First we choose the median $w$ of the endpoints. 
Let $w'$ be the
endpoint adjacent to $w$ on $S_1$ such that $w' \prec w$.  
Then we compute $\phi_1(w)$ and $\phi_1(w')$. This takes
$O(\log^2 n)$ time by the following lemma.
\begin{lemma}
	\label{computing function value}
	For a given point $x \in \sarcs_1$, $\phi_1(x)$ can be computed in
	$O(\log^2 n)$ time once the shortest path trees rooted at $v_i$ and
	$v_{i+1}$ are constructed.
\end{lemma}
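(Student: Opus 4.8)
Our plan is to reduce the computation of $\phi_1(x)$ to locating a single point of $e_i$, and then to locate that point by a binary search over the subedges of $e_i$, using the convexity supplied by Lemma~\ref{path-convex} together with the $O(\log n)$‑time geodesic distance queries of~\cite{shortest-path-tree}. First we make $\phi_1(x)$ explicit. Since $x \in \sarcs_1 \subseteq I_1 \subseteq D_r(v_{i+1})$, we have $d(x,v_{i+1}) \leq r$, so the endpoint $v_{i+1}$ of $e_i$ (parameter $1$) always lies in $D_r(x)$. Applying Lemma~\ref{path-convex} with $a=x$, $b=v_i$, $c=v_{i+1}$ — and using that $\pi(v_i,v_{i+1})=e_i$, so the parameter of a point of $e_i$ is an affine function of its distance from $v_i$ — the map $p \mapsto d(x,p)$ is convex on $e_i$. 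Hence $D_r(x)\cap e_i$ is a subsegment of $e_i$ containing $v_{i+1}$, and $\phi_1(x)$ is the parameter of its other endpoint $p^*$. A single distance query decides whether $d(x,v_i)\leq r$; if so, then $d(x,p)\leq\max\{d(x,v_i),d(x,v_{i+1})\}\leq r$ for all $p\in e_i$ by Lemma~\ref{path-convex}, and $\phi_1(x)=0$. Otherwise $d(x,v_i)>r\geq d(x,v_{i+1})$, so by convexity there is a unique $p^*\in(0,1)$ with $d(x,p^*)=r$, and the predicate ``$d(x,q)\leq r$'' holds for exactly those $q\in e_i$ on the $v_{i+1}$‑side of $p^*$.

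Next we locate $p^*$ among the $O(n)$ subedges into which $\mathcal{L}_i$ splits $e_i$; these are precisely the pieces cut out by the shortest path trees rooted at $v_i$ and $v_{i+1}$, and they have already been sorted along $e_i$. Because the predicate above is monotone along $e_i$, a binary search applies: at each step we evaluate $d(x,q)$ at the endpoint $q$ separating the two halves of the current range of subedges — $O(\log n)$ time per evaluation by~\cite{shortest-path-tree} — compare it with $r$, and recurse into the half that must contain $p^*$. After $O(\log n)$ steps this returns the subedge $\sigma$ with $p^*\in\sigma$, at a total cost of $O(\log^2 n)$.

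Finally, within $\sigma$ we solve $d(x,p)=r$ directly. On $\sigma$ the combinatorial structures of $\pi(v_i,p)$ and of $\pi(v_{i+1},p)$ are invariant by the definition of subedges, so by the discussion in Subsection~\ref{subsection Subdividing the edges and the chains} (see Figure~\ref{fig:combinatorial_structure}) the geodesic $\pi(x,p)$, with $x$ fixed and $p$ ranging over $\sigma$, realizes only $O(1)$ distinct combinatorial structures: either $p$ is visible from $x$, or the vertex of $\pi(x,p)$ adjacent to $p$ is one of the two fixed anchors that $\pi(v_i,\cdot)$ and $\pi(v_{i+1},\cdot)$ use on $\sigma$. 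For each such structure, $d(x,p)$ equals a constant — computable by one $O(\log n)$‑time distance query — plus the Euclidean distance from a fixed point to $p$, so $d(x,p)=r$ becomes an explicit equation in the single parameter of $\sigma$, solvable in $O(1)$ time; among the $O(1)$ candidate roots we keep the one lying in $\sigma$ and consistent with its structure, which is $\phi_1(x)$. This last phase costs $O(\log n)$, so the whole computation takes $O(\log^2 n)$ time.

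The only step that is not bookkeeping is the claim used in the last paragraph, that $\pi(x,p)$ takes only $O(1)$ combinatorial forms as $p$ sweeps one subedge; this is exactly what subdividing $e_i$ by the two shortest path maps was arranged to guarantee (a subedge being small enough that the ``funnel'' seen from $p$ into $P$ has fixed apexes on the $v_i$‑side and on the $v_{i+1}$‑side), and we would establish it by a short case analysis according to whether $\pi(x,p)$ is elementary, as in Figure~\ref{fig:combinatorial_structure}(b)--(c). Everything else follows immediately from Lemma~\ref{path-convex} and the availability of $O(\log n)$‑time geodesic distance queries.
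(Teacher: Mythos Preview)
Your proof is correct and follows essentially the same approach as the paper: use the convexity of $d(x,\cdot)$ on $e_i$ (Lemma~\ref{path-convex}) to reduce locating $\phi_1(x)$ to a monotone predicate, binary search over the $O(n)$ subedges of $\mathcal{L}_i$ using the $O(\log n)$-time geodesic distance queries, and then solve directly on the final subedge. You are in fact more careful than the paper about the last step---the paper asserts that $\pi(p,x)$ has \emph{the same} combinatorial structure for all $p$ in one subedge, whereas you correctly allow for the $O(1)$ elementary-path cases of Figure~\ref{fig:combinatorial_structure}(b)---but this does not change the argument or the running time.
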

\begin{proof}
	The function $d(z, x)$ for $z \in e_i$ is convex for a fixed point $x$ by
	Lemma~\ref{path-convex}.  
	Moreover, $d(v_i,x) \leq r$ and $d(v_{i+1},x) \geq r$ since $\phi_1$
	is well-defined.  As we saw before, $\mathcal{L}_i$ subdivides the
	edge $e_i$ into $O(n)$ subedges.  For any point $p$ on the same
	subedge, the combinatorial structure of the geodesic path $\pi(p,
	x)$ is the same.
	
	To compute $\phi_1(x)$, we apply binary search on $\mathcal{L}_i$.
	First, we choose the median $p_{\mathrm{med}}$ of $\mathcal{L}_i$.
	If $d(p_{\mathrm{med}},x) > r$, then $\phi_1(x)$ lies between $v_i$
	and $p_{\mathrm{med}}$, and we search the points in $\mathcal{L}_i$
	lying between them.  Otherwise, $\phi_1(x)$ lies between
	$v_{\mathrm{med}}$ and $v_{i+1}$.  In either way, we can ignore half
	of the current search space.  After $O(\log n)$ iterations, we can
	narrow the search space into the subedge containing $\phi_1(x)$.
	Once we find the subedge which contains $\phi_1(x)$, we can find the
	point in constant time.
	
	Since computing the geodesic distance between two points takes
	$O(\log n)$ time and the number of iterations is $O(\log n)$, the time
	complexity for computing $\phi_1(x)$ is $O(\log ^2 n)$ for a point
	in $\sarcs_1$.
\end{proof}

If $\phi_1(w)=1$, $w$ is a local maximum of $\phi_1'$ lying on $S_1$.
If $\phi_1(w_1) < \phi_1(w') < \phi_1(w)$, then a local maximum
comes after $w'$. Thus, we only consider the endpoints which come
after $w'$.  Otherwise, a local maximum
comes before $w$ by Corollary~\ref{cor:extreme}.  After $O(\log n)$
iterations, we can find the finer circular arc $s_{\max}$ in
$\sarcs_1$ which contains a local maximum point.

The remaining step is to find a local maximum point on the finer
circular arc $s_{\max}$.
Now, we search the edge $e_i$ to find the interval of $\mathcal{L}_i$ containing
a local maximum of $\phi'_1$.
Let $p$ be the median of $\mathcal{L}_i$.  If
$D_r(p)$ contains or intersects $s_{\max}$, we search further the
points of $\mathcal{L}_i$ which come after $p$.  Otherwise, we search
the points of $\mathcal{L}_i$ which come before $p$.  By the
construction of $\mathcal{L}_i$, the number of different combinatorial
structures of $\pi(x,p)$ for a point $x$ in the same circular arc and 
a point $p$ in the same subedge is at most
three (see Figure~\ref{fig:combinatorial_structure}).  Thus, in constant
time, we can check whether $D_r(p)$ contains or intersects $s_{\max}$.

After $O(\log n)$ iterations, we find the subedge that contains a
local maximum of $\phi_1'$.  Then we find a local maximum in the
finer circular arc in constant time.  Similarly, we compute a local
maximum and a local minimum for the other functions.

Therefore, we have the following lemma.
\begin{lemma}
	\label{lem:third-step}
	A local maximum and a local minimum for $\phi_t'$ (or $\psi_t'$) can be computed in
	$O(\log^3 n)$ time.
\end{lemma}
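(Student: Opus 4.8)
The plan is to analyze the cost of the binary-search procedure just described for computing a local maximum of $\phi_1'$ on $S_1$, and to observe that the same cost bound applies verbatim to a local minimum and to $\psi_t'$. The procedure has two nested phases. In the outer phase we binary-search the $O(n)$ endpoints of finer circular arcs of $S_1$, starting from the reference point $w_1$, to locate the finer arc $s_{\max}$ that contains a local maximum point of $\phi_1'$; by Corollary~\ref{cor:extreme} the sign of $\phi_1(w) - \phi_1(w')$ for two adjacent endpoints $w' \prec w$ tells us on which side of the current candidate the local maximum lies (together with the boundary case $\phi_1(w) = 1$, in which $w$ itself is a local maximum), so this is a correct binary search over $O(\log n)$ iterations. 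Each iteration evaluates $\phi_1$ at two endpoints, and by Lemma~\ref{computing function value} each such evaluation costs $O(\log^2 n)$, giving $O(\log^3 n)$ for the outer phase.

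The inner phase locates, within the fixed finer arc $s_{\max}$, the subedge of $e_i$ (an interval of $\mathcal{L}_i$) that contains the local maximum, by binary search on the $O(n)$ points of $\mathcal{L}_i$. The decision at each step — whether $D_r(p)$ contains or intersects $s_{\max}$, where $p$ is the current median of $\mathcal{L}_i$ — can be made in $O(1)$ time: by the construction of $\mathcal{L}_i$ and the subedges, the combinatorial structure of $\pi(x,p)$ for $x \in s_{\max}$ and $p$ in a fixed subedge takes at most three forms (Figure~\ref{fig:combinatorial_structure}), so the relevant circular-arc/arc intersection test is a constant-size computation. Hence this phase costs $O(\log n)$, which is dominated by the outer phase. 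Adding the one-time cost of constructing the shortest path trees rooted at $v_i$ and $v_{i+1}$ (absorbed, since Lemma~\ref{computing function value} presupposes them, and in any case $O(n)$ preprocessing is charged once globally), we conclude that a local maximum of $\phi_1'$ is found in $O(\log^3 n)$ time.

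Finally, the cases of a local minimum of $\phi_1'$, and of local extrema of $\phi_2'$, $\psi_1'$, $\psi_2'$, are symmetric: each of these functions is, by the analogues of Corollary~\ref{cor:extreme}, monotone on each side of its extremum set, so the same two-phase binary search applies, with the analogue of Lemma~\ref{computing function value} (computing the function value at a point of $S_t$ in $O(\log^2 n)$ time) supplying the per-iteration cost. Thus each local extremum is computed in $O(\log^3 n)$ time, proving the lemma. The only point needing care is to make sure the outer binary search is over the \emph{correct} monotone structure — i.e., that Corollary~\ref{cor:extreme} really licenses discarding half the endpoints at each step even though $S_1$ may be disconnected; this is exactly why we work with the interpolated continuous function $\phi_1'$ and its connected sets $N_{\max}$, $N_{\min}$, and it is the main subtlety rather than a routine calculation.
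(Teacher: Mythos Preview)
Your proposal is correct and follows essentially the same two-phase argument the paper lays out in the text immediately preceding the lemma: an outer binary search over the $O(n)$ finer-arc endpoints using Lemma~\ref{computing function value} for each $O(\log^2 n)$ evaluation, followed by an inner $O(\log n)$ binary search on $\mathcal{L}_i$ with constant-time tests. Your added remark about why the interpolated function $\phi_1'$ and the connectedness of $N_{\max},N_{\min}$ are needed to justify the outer binary search (despite $S_1$ being possibly disconnected) is a useful clarification that the paper leaves largely implicit.
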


These local extrema subdivide $\partial I_1$ into at most five
subchains $c_{1,k}$ for $k \in \{1,2,\ldots, 5\}$ as follows. Let
$x_1, x_2, x_3$ and $x_4$ be the local maxima and the local minima of
$\phi_1'$ and $\psi_1'$ with $x_1 \prec x_2 \prec x_3 \prec x_4$.  The
subchain $c_{1,k}$ is the set of points $x \in \partial I_1$ with
$x_{k-1} \prec x \prec x_k$ for $k \in \{1,2,\ldots, 5\}$, where we
set $x_0 = x_5 = w_1$.  After subdividing $\partial I_1$, $\phi_1$ and
$\psi_1$ are monotone when the domain is restricted to $c_{1,k} \cap
\sarcs_1$ for $k \in \{1,2,\ldots, 5\}$.  Similarly, the local extrema
of $\phi_2'$ and $\psi_2'$ subdivide the chain $\partial I_2$ into five
subchains $c_{2,\ell}$ ($\ell \in \{1,2,\ldots, 5\}$ ).  The functions
$\phi_2$ and $\psi_2$ restricted to $c_{2,\ell} \cap \sarcs_2$ for
$\ell \in \{1,2,\ldots, 5\}$ are monotone.

\subsection{Computing a \texorpdfstring{$2$}{2}-center restricted to a
  pair of subchains}
\label{sec:computing a $2$-center for a pair of subchains}
We consider a pair $(c_{1,k}, c_{2,\ell})$ of subchains for $k \in
\{1,2,\ldots, 5\}$ and $\ell \in \{1,2,\ldots, 5\}$.  Let $s_{1,k} =
\sarcs_1 \cap c_{1,k}$ and $s_{2,\ell} = \sarcs_2 \cap c_{2,\ell}$.
We find a $2$-center with radius $r$ that is restricted to
$(e_i,e_j)$, if it exists, where one center is on $s_{1,k}$ and the
other is on $s_{2,\ell}$.  Assume that $\phi_1$ and $\psi_1$ are
decreasing when their domains are restricted to $s_{1,k}$.
That is, for any two points $x$ and $x'$ in $S_{1,k}$ with $x \prec x'$, 
it holds that $\phi_1(x') \leq \phi_1(x)$ and $\psi_1(x')\leq \psi_1(x)$.
Similarly,
assume that $\phi_2$ and $\psi_2$ are decreasing when their domains are
restricted to $s_{2,\ell}$.  The other cases where some functions are
increasing and the others are decreasing can be handled in a similar way.

We define two new functions $\mu_1 : s_{1,k} \rightarrow s_{2,\ell}$
and $\mu_2 : s_{1,k} \rightarrow s_{2,\ell}$.
For a point $x \in s_{1,k}$, $\mu_1(x)$ denotes the last clockwise point in
$s_{2,\ell}$ which is contained in $D_r(\phi_1(x))$.  Similarly, for a
point $x \in s_{1,k}$, $\mu_2(x)$ denotes the first clockwise point in
$s_{2,\ell}$ which is contained in $D_r(\psi_1(x))$.  If every point in
$s_{2,\ell}$ is contained in $D_r(\phi_1(x))$, then $\mu_1(x)$ is the last
clockwise point of $s_{2,\ell}$.  Notice that $\mu_1(x)$ and
$\mu_2(x)$ are increasing on $s_{1,k}$.  If there is a point $x
\in s_{1,k}$ such that $\mu_2(x) \prec \mu_1(x)$, the triplet $(x,
\mu_1(x),r)$ is restricted to $(e_i,
e_j)$. 
Moreover, for a 2-center $(c_1,c_2)$ restricted to $(e_i,e_j)$
with $c_1 \in s_{1,k}$ and $c_2 \in s_{2,\ell}$,
it holds that $\mu_2(c_1) \prec c_2 \prec \mu_1(c_1)$.
Thus, we are going to find a point $x \in s_{1,k}$ such that
$\mu_2(x) \prec \mu_1(x)$.

To check whether there exists such a point, we traverse $c_{1,k}$ twice. 
In the first traversal, we pick $O(n)$ points, which
are called \emph{event points}.  While picking such points, we compute $\mu_1(x)$ and
$\mu_2(x)$ for every event point $x$ in linear time.  Then we
traverse the two subchains again and find a $2$-center using the information
we just computed.

\paragraph{Definition of the event points on \texorpdfstring{$c_{1,k}$}{c}.}
We explain how we define the event points on $c_{1,k}$.  The set of
\emph{event points} of $c_{1,k}$ is the subset of $c_{1,k}$ consisting
of points belonging to one of the three types defined below.
\begin{itemize}
\item (T1) The endpoints of all finer arcs.
  Recall that the subchain $c_{1,k} \subseteq \partial I_1$ consists of
  circular arcs and line segments, and it is subdivided into
  finer arcs by the four shortest path maps in Section~\ref{subsection Subdividing the edges and the
    chains}.
\item (T2) The points $x \in s_{1,k}$ such that $d(x,p) = r$ for some
  $p \in \mathcal{L}_i$
\item (T3) The points $x \in s_{1,k}$ such that $d(x,p) = r$ for some
  $p \in \mathcal{L}_j$
\end{itemize}
Let $\mathcal{E}_1$, $\mathcal{E}_2$ and $\mathcal{E}_3$ be the sets
of event points of types T1, T2 and T3, respectively.  Let
$\mathcal{E} = \mathcal{E}_1 \cup \mathcal{E}_2 \cup \mathcal{E}_3$.
We say $\eta \in \mathcal{E}$ is \emph{caused} by $p$ if $d(\eta,p)=r$
for $p \in \mathcal{L}_i \cup \mathcal{L}_j$.

Recall that $\mathcal{L}_i$ is the set
of intersection points of the extensions of the edges in the two
shortest path trees rooted at $v_i$ and rooted at $v_{i+1}$ with
$e_i$, which has already been constructed in a previous step.  Let
$\mathcal{L}_i = \{v_i=p_1, \ldots, p_m=v_{i+1} \}$, where the points
are labeled in clockwise order from $v_i$.

\paragraph{Computation of the event points on \texorpdfstring{$c_{1,k}$}{c}.}
Since we already maintain the arcs of $c_{1,k}$ in clockwise order, we
already have $\mathcal{E}_1$.  
In the following, we show how to compute all T2 points.
In a similar way, we compute all T3 points.

Initially, $\mathcal{E}_2$ is set to be empty.
Assume that we have reached an event point $\eta \in
\mathcal{E}_1\cup\mathcal{E}_2$ and have already computed all T2
points on the subchain lying before $\eta$. Let $\eta'$ be the T1
point next to $\eta$.  We find all
T2 points on the subchain lying between $\eta$ and $\eta'$ by walking
the subchain from $\eta$ to $\eta'$ once.  If $\eta$ lies in $c_{1,k}\setminus s_{1,k}$, 
it is contained on $\bd P$. In this case, let $h(\eta)$ be the last T2 point in
$s_{1,k}$ in clockwise order with $h(\eta) \prec \eta$.  Otherwise,
let $h(\eta) = \eta$.  While computing all T2 points, we also compute
$\phi_1(h(\eta))$ and maintain $\pi(\eta,\phi_1(h(\eta)))$ for every
event point $\eta \in \mathcal{E}_1 \cup \mathcal{E}_2$.

We have two cases; the subchain connecting $\eta$ and $\eta'$ 
is contained in $\bd P$ or contained in a circular arc of $c_{1,k}$.
This is because $\eta'$ is a T1 point, an endpoint of
a finer arc.
To handle these cases, we need the following two lemmas.
\begin{lemma}
	\label{lem:constant-geodesic}
	Let $x_1$ and $x_2$ be any two points in the same finer arc of $c_{1,k}$.
	Once we have $\pi(x_1,p)$ for some point $p \in e_i$ and the finer arc
	of $c_{1,k}$ containing $x_1$ and $x_2$,
	we can compute $\pi(x_2,p)$ in constant time.
\end{lemma}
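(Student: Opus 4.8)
The plan is to exploit the fact that all points on a single finer arc of $c_{1,k}$ share the same combinatorial structure of the geodesic path to any fixed point $p \in e_i$, together with the bookkeeping we have already set up. Recall that each finer arc of $c_{1,k}$ was obtained by overlaying $\partial I_1$ with the shortest path maps $\spm{i}$ and $\spm{i+1}$; hence for a fixed endpoint of a subedge of $e_i$, say $p$, every point in the finer arc lies in a single cell of the relevant shortest path map, so the sequence of anchors of $\pi(x_1,p)$ equals the sequence of anchors of $\pi(x_2,p)$ — only the position of the first segment (the one incident to the moving point on the arc) changes.

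First I would observe that $\pi(x_1,p)$ can be written as the concatenation of a single initial segment $x_1 a$, where $a$ is the anchor of $\pi(x_1,p)$ closest to $x_1$ (or $a=p$ if $\pi(x_1,p)$ is a segment), followed by a fixed polygonal subpath $\pi(a,p)$ that depends only on the finer arc, not on the choice of point within it. Since we are given $\pi(x_1,p)$, we can read off $a$ and the tail $\pi(a,p)$ in constant time. Then $\pi(x_2,p)$ is simply the segment $x_2 a$ concatenated with that same tail $\pi(a,p)$, \emph{provided} that $x_2 a$ is a valid first edge, i.e.\ the turn at $a$ is still reflex in the correct direction and the segment $x_2 a$ does not leave $P$; but this is guaranteed precisely because $x_1$ and $x_2$ lie in the same finer arc, which by construction lies in a single cell of the shortest path map rooted at the appropriate vertex and a single cell of the subdivision that fixes the combinatorial structure of $\pi(\cdot,p)$. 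Thus computing $\pi(x_2,p)$ amounts to replacing one segment endpoint, which is an $O(1)$-time operation. Reading off the endpoint of the finer arc containing $x_1,x_2$ and the associated anchor data — which we stored when we built the finer arcs — is also $O(1)$.

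The main obstacle I anticipate is making precise the claim that the combinatorial structure of $\pi(x,p)$ is genuinely invariant over a whole finer arc for \emph{every} relevant $p$ simultaneously, as opposed to for one fixed $p$. Here one has to be a little careful: if $\pi(x,p)$ is elementary (at most two segments), Figure~\ref{fig:combinatorial_structure}(b) shows there can be up to three distinct structures as $x$ ranges over the finer arc and $p$ over a subedge, so the naive statement fails. The resolution is that in the application of this lemma the point $p$ is a single fixed point of $e_i$ (namely an element of $\mathcal{L}_i$, the endpoint of a subedge), not a varying point; once $p$ is fixed, the partition of the finer arc by the at-most-two ``transition points'' of Figure~\ref{fig:combinatorial_structure}(b) is again $O(1)$, and within each piece the structure is constant. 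So either the lemma is invoked only when $\pi(x,p)$ is non-elementary (in which case the structure is constant over the finer arc outright), or one first refines the finer arc by the $O(1)$ transition points for the fixed $p$; in both cases the constant-time bound survives. I would state the lemma's hypothesis so that $p$ is a fixed subedge endpoint and handle the elementary case by this $O(1)$ refinement, then conclude as above.
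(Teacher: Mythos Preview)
Your proposal is essentially correct and follows the paper's line of reasoning: split into the non-elementary case, where the anchor sequence is invariant over the finer arc and $\pi(x_2,p)$ is obtained by swapping the first segment, and the elementary case, where the structure may change but only in a constant number of ways.

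The one place where you diverge from the paper is the elementary-case resolution. You propose to restrict $p$ to be a point of $\mathcal{L}_i$ and then refine the finer arc by the at most two transition points for that fixed $p$. The paper's argument is both simpler and more general: it observes that if $\pi(x_2,p)$ is elementary, then its (at most one) anchor must be the anchor of $\pi(v_i,x_2)$ closest to $x_2$ or the anchor of $\pi(v_{i+1},x_2)$ closest to $x_2$; both of these are stored with the finer arc, so one checks directly which of the three structures applies. This works for \emph{any} $p\in e_i$, and indeed the lemma is later invoked with $p=\phi_1(h(\eta))$, which is not in general a point of $\mathcal{L}_i$. Your transition-point idea would also work for arbitrary fixed $p$ once you note that the transition points are determined by these same two stored anchors, so the restriction to $p\in\mathcal{L}_i$ is unnecessary and should be dropped.
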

\begin{proof}
	Since we subdivide $I_1$ into finer arcs using the shortest path trees 
	rooted at $v_i$ and at $v_{i+1}$,
	$\pi(x_1,v_i)$ and $\pi(x_2,v_i)$ have the same combinatorial structure.
	Similarly, $\pi(x_1,v_{i+1})$ and $\pi(x_2,v_{i+1})$ have the same combinatorial
	structure.
	
	If $\pi(x_1,p)$ is not elementary, $\pi(x_1,p)$ and $\pi(x_2,p)$ have
	the same combinatorial structure. So, we can compute $\pi(x_2,p)$ in constant time.
	
	If $\pi(x_1,p)$ is elementary, $\pi(x_1,p)$ and $\pi(x_2,p)$
	may have distinct combinatorial
	structures. But in this case, $\pi(x_2,p)$ is also elementary.
	Moreover, it consists of a line segment, or two line segments whose
	common endpoint is the anchor of $\pi(v_i,x_2)$ or $\pi(v_{i+1},x_2)$
    closest to $x_2$.
	Note that the anchor of $\pi(v_i,x)$ closest to $x$ is the same for every point $x$ in the same
	cell in $\spm{i}$. We can compute this information while subdividing $\bd I_1$ into
	finer arcs. Thus, we may assume that we already have the anchors of $\pi(v_i,x)$ 
	and of $\pi(v_{i+1},x)$ closest to $x$.
	Thus, we can compute $\pi(x_2,p)$ in constant time.
	
	Therefore, in any case, we can compute $\pi(x_2,p)$ in constant time.
\end{proof}

\begin{lemma}
	\label{lem:constant-geodesic-edge}
	Let $p$ and $p'$ be any two points lying in the same subedge of $e_i$.
	Once we have $\pi(p,x)$ for some point $x \in c_{1,k}$
	and the finer arc of $c_{1,k}$ containing $x$,
	we can compute $\pi(p',x)$ in constant time.
\end{lemma}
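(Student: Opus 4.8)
\textbf{Proof proposal for Lemma~\ref{lem:constant-geodesic-edge}.}
The plan is to mirror the proof of Lemma~\ref{lem:constant-geodesic}, interchanging the roles of the point on a finer arc of $c_{1,k}$ and the point on a subedge of $e_i$: now the fixed endpoint is $x$, which lies on a known finer arc of $c_{1,k}$, and the moving endpoint is $p'$, which lies on the same subedge of $e_i$ as $p$. First I would record the combinatorial invariants carried by a subedge. Because $e_i$ is subdivided into subedges by the extensions of the edges of the shortest path trees rooted at $v_i$ and at $v_{i+1}$, the paths $\pi(p,v_i)$ and $\pi(p',v_i)$ have the same combinatorial structure, and likewise $\pi(p,v_{i+1})$ and $\pi(p',v_{i+1})$ have the same combinatorial structure. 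In particular the anchor of $\pi(v_i,\cdot)$ closest to the edge point and the anchor of $\pi(v_{i+1},\cdot)$ closest to the edge point are the same for every point of the subedge, so I may assume these two vertices were stored together with the subedge when $\mathcal{L}_i$ was built (exactly as the corresponding anchors are precomputed per cell of $\spm{i}$ and $\spm{i+1}$ in Lemma~\ref{lem:constant-geodesic}). Symmetrically, the given finer arc of $c_{1,k}$ containing $x$ fixes the combinatorial structures of $\pi(x,v_i)$ and of $\pi(x,v_{i+1})$.

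Next I would split into the same two cases as in Lemma~\ref{lem:constant-geodesic}. If $\pi(p,x)$ is not elementary, then its sub-path incident to $x$ is determined by the finer arc of $c_{1,k}$ containing $x$, its sub-path incident to $p$ is determined by the subedge of $e_i$ containing $p$, and these two sub-paths are joined along at least one common segment; hence $\pi(p',x)$ has the same anchor sequence as $\pi(p,x)$ and is obtained in $O(1)$ time. If $\pi(p,x)$ is elementary, then $\pi(p',x)$ is elementary as well (since $p'$ ranges over the same subedge and $x$ is fixed), and it is either a single segment or two segments whose bend is the anchor of $\pi(v_i,p')$ closest to $p'$ or the anchor of $\pi(v_{i+1},p')$ closest to $p'$ --- i.e.\ one of the three combinatorial possibilities depicted in Figure~\ref{fig:combinatorial_structure}(b). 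Using the two precomputed anchors, I test these $O(1)$ candidates and output the correct one, in constant time; in all cases $\pi(p',x)$ is computed in $O(1)$ time.

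The hard part will be the elementary case: one must argue that as $p'$ slides within a single subedge while $x$ stays fixed, the only vertices that can act as the bend of the (at most two-segment) path $\pi(p',x)$ are those two extreme anchors of the funnels toward $v_i$ and toward $v_{i+1}$, and that no new vertex can enter or leave the path. This is precisely symmetric to the corresponding step in the proof of Lemma~\ref{lem:constant-geodesic}, and the point I would stress is that the subedge decomposition of $e_i$ is built from \emph{both} shortest path trees rooted at $v_i$ and at $v_{i+1}$, so it is fine enough that the combinatorial type of $\pi(p',x)$ is constant over a subedge whenever it is non-elementary, and is pinned down by the two stored anchors whenever it is elementary. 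Everything else is a routine constant-size case analysis.
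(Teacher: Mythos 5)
Your overall structure (split into the non-elementary case, where the combinatorial structure is fixed by the subedge decomposition, and the elementary case, where there are $O(1)$ candidate bends) matches the paper's proof, and the non-elementary case is handled correctly. However, there is a genuine error in your elementary case.

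You claim the bend vertex of an elementary two-segment path $\pi(p',x)$ is \emph{``the anchor of $\pi(v_i,p')$ closest to $p'$ or the anchor of $\pi(v_{i+1},p')$ closest to $p'$''}, and correspondingly you propose to store these anchors with each subedge of $e_i$. But $p'$ lies on the edge $e_i = v_iv_{i+1}$, so $\pi(v_i,p')$ and $\pi(v_{i+1},p')$ are just subsegments of $e_i$ and have no anchors at all. These objects do not exist, so the candidate set you test against is empty, and the argument breaks. (The same misconception already appears in your setup paragraph where you say these two anchors are ``stored together with the subedge.'')

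The correct candidates are on the $x$ side, exactly as in Lemma~\ref{lem:constant-geodesic}: the bend of an elementary $\pi(p',x)$ must be visible from $x$, and the only such vertices available along the funnel toward $e_i$ are the anchor of $\pi(v_i,x)$ closest to $x$ and the anchor of $\pi(v_{i+1},x)$ closest to $x$. These are determined by (and stored with) the finer arc of $c_{1,k}$ containing $x$, which is part of the hypothesis of this lemma. So the fix is to swap the roles: in \emph{both} lemmas, regardless of which endpoint moves, the two candidate bend vertices are taken from the finer arc of $c_{1,k}$ containing $x$, not from the subedge of $e_i$. With that correction, the rest of your argument goes through.
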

\begin{proof}
	By the construction of $\mathcal{L}_i$, $\pi(p,v)$ and $\pi(p',v)$ 
	have the same combinatorial structure for any vertex $v$ of $P$.
	Thus, if $\pi(p,x)$ is not elementary, 
	$\pi(p,x)$ and $\pi(p',x)$ have the same combinatorial structure.
	
	If $\pi(p,x)$ is elementary, $\pi(p',x)$ is also elementary, but their
	combinatorial structures may be different.
	In this case, $\pi(p',x)$ consists of a line segment, or two line segments
	whose common endpoint is the anchor of  $\pi(v_i,x)$ or $\pi(v_{i+1},x)$ closest to $x$.
	We already have the cells of $\spm{i}$ and $\spm{i+1}$ containing $x$
	because we have the finer arc of $c_{1,k}$ containing $x$, which is the assumption
	of the lemma.
	Thus, we can compute $\pi(p',x)$ in constant time.
	
	Therefore, in any case, we can compute $\pi(p',x)$ in constant time.
\end{proof}

Now, we show how to handle the cases. Here, we assume that 
we already have $\pi(\eta,\phi_1(h(\eta)))$.

\paragraph{Case 1. The subchain is contained in \texorpdfstring{$\bd P$}{bd P}.}
If the subchain of $c_{1,k}$ connecting $\eta$ and $\eta'$ is
contained in $\partial P$, there is no T2 point lying between $\eta$ and $\eta'$.
Thus $\eta'$ is the event point next to $\eta$ and we simply 
compute $\phi_1(h(\eta'))$ and $\pi(\eta',\phi_1(h(\eta')))$. 

If $\eta' \notin s_{1,k}$, we have $h(\eta')=h(\eta)$.  We compute
$\pi(\eta',\phi_1(h(\eta)))$, which takes constant time by Lemma~\ref{lem:constant-geodesic}.  

If $\eta'\in s_{1,k}$, we have $h(\eta')=\eta'$.  
To compute $\pi(\eta',
\phi_1(\eta'))$, we first compute $\pi(\eta', p_{i'})$ and $d(\eta',p_{i'})$, where
$p_{i'}p_{i'+1}$ is the subedge of $e_i$ which contains
$\phi_1(h(\eta))$.  
They can be computed in constant time by Lemma~\ref{lem:constant-geodesic} and Lemma~\ref{lem:constant-geodesic-edge}.
Since $\phi_1$ is decreasing, $\phi_1(\eta')$ lies
on $\subchain{v_i}{\phi_1(h(\eta))}$. If $d(\eta', p_{i'}) > r$, then
$\phi_1(h(\eta))$ does not lie on
$\subchain{p_{i'}}{\phi_1(h(\eta))}$, and we skip $p_{i'}$.  We
check each subedge of $e_i$ from $p_{i'}$ in counterclockwise order
until we find the subedge containing $\phi_1(\eta')$. Then we compute
the geodesic path $\pi(\eta', \phi_1(\eta'))$ for $\phi_1(\eta')$ on
the subedge.  This takes time linear to the number of subedges
we traverse on $e_i$ by Lemma~\ref{lem:constant-geodesic} and
Lemma~\ref{lem:constant-geodesic-edge}. 

\paragraph{Case 2. The subchain is contained in a circular arc of \texorpdfstring{$c_{1,k}$}{c1k}.}
In this case, we first compute $\pi(\eta',p_{i'})$
and $d(\eta',p_{i'})$, where $p_{i'}p_{i'+1}$ is the subedge of $e_i$ which
contains $\phi_1(h(\eta))=\phi_1(\eta)$.  
This takes constant time by Lemma~\ref{lem:constant-geodesic} and Lemma~\ref{lem:constant-geodesic-edge}.

 If $d(\eta',p_{i'})$ is at least $r$, then $\phi_1(\eta')$ lies
between $\phi_1(\eta)$ and $p_{i'}$.  In this case, there is no T2 point lying between $\eta$ and
$\eta'$.  If $d(\eta',p_{i'})$ is less than $r$, then
there is an event point caused by $p_{i'}$ lying between $\eta$ and
$\eta'$.  It can be computed in constant time. Moreover, it is the
first T2 point from
$\eta$.
Then, we have to compute $\pi(\eta'', \phi(\eta_1(\eta'')))$ for the first T2 point $\eta''$ from $\eta$.
We can do this in constant time as we did for Case 1.

\paragraph{Definition and computation 
	of the event points on \texorpdfstring{$c_{2,\ell}$}{c}.}
The event points on $c_{2,\ell}$ are defined similarly.
Each event point is a point on $c_{2,\ell}$ belonging to one of the three types defined below.
\begin{itemize}
	\item (T1) The endpoints of all finer arcs of $c_{2,\ell}$.
	\item (T2) The points $x \in s_{2,\ell}$ such that $d(x,p) = r$ for some
	$p \in \mathcal{L}_i'$, where $\mathcal{L}_i'$ is the set of all points
	in $\mathcal{L}_i$ and all points $p \in e_i$ with $d(\eta,\phi_1(\eta))=r$ for
	some $\eta \in \mathcal{E}_1$. 
	\item (T3) The points $x \in s_{2,\ell}$ such that $d(x,p) = r$ for some
	$p \in \mathcal{L}_j'$, where $\mathcal{L}_j'$ is the set of all points
	in $\mathcal{L}_j$ and all points $p \in e_j$ with $d(\eta,\psi_1(\eta))=r$ for
	some $\eta \in \mathcal{E}_1$.
\end{itemize}
We already have $\mathcal{L}_i'$ and $\mathcal{L}_j'$, and the elements are sorted along
the edges $e_i$ and $e_j$, respectively.

The event points on $c_{2,\ell}$ can be computed in a way similar 
to the event points on $c_{1,k}$ in linear time.
Thus, we have the following lemma.
\begin{lemma}
	\label{lem:event-linear}
	The event points on $c_{1,k}$ and $c_{2,\ell}$ can be computed in $O(n)$ time.
\end{lemma}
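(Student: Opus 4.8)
The plan has two parts: bound the total number of event points by $O(n)$, and show that the two traversals described above---the one producing the event points on $c_{1,k}$ and the one producing them on $c_{2,\ell}$---run in time linear in that number. For the count, the type-T1 points are exactly the endpoints of the finer arcs, and there are $O(n)$ of these because $\partial I_1$ has $O(n)$ circular arcs by Lemma~\ref{number of arcs} and is cut by the four shortest path maps into $O(n)$ pieces. For the type-T2 points I would argue that each $p\in\mathcal{L}_i$ causes at most one of them: since $D_r(v_{i+1})$ contains all of $\sarcs_1$, for every $x\in s_{1,k}$ the set $D_r(x)\cap e_i$ is a subsegment of $e_i$ having $v_{i+1}$ as one endpoint, so its only other endpoint is the point represented by $\phi_1(x)$; hence $d(x,p)=r$ for $p\in\mathcal{L}_i$ forces $p=\phi_1(x)$, and $\phi_1$ is injective on $s_{1,k}$ because the chain was subdivided so that $\phi_1$ is monotone there. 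The same reasoning handles T3 via $\psi_1$ and $\mathcal{L}_j$. Since $|\mathcal{L}_i|,|\mathcal{L}_j|=O(n)$, we get $O(n)$ event points on $c_{1,k}$; for $c_{2,\ell}$ the identical count applies with $\mathcal{L}_i,\mathcal{L}_j$ replaced by $\mathcal{L}_i',\mathcal{L}_j'$ and $\phi_1,\psi_1$ by $\phi_2,\psi_2$, and $|\mathcal{L}_i'|\le|\mathcal{L}_i|+|\mathcal{E}_1|=O(n)$ (similarly for $\mathcal{L}_j'$).

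For the running time I would use an amortized, one-directional-pointer argument. The traversal of $c_{1,k}$ walks its finer arcs once (so $O(n)$ finer-arc steps), and between two consecutive T1 points $\eta$ and $\eta'$ it walks the subedges of $e_i$ starting from the one containing $\phi_1(h(\eta))$, exactly as in Cases~1 and~2. The point is that as $\eta$ advances clockwise, $h(\eta)$ advances monotonically along $s_{1,k}$, so by monotonicity of $\phi_1$ on $s_{1,k}$ the subedge of $e_i$ containing $\phi_1(h(\eta))$ only moves in one direction along $e_i$; hence the total number of subedge steps over the whole traversal is $O(|\mathcal{L}_i|)=O(n)$, and the symmetric walk for $\psi_1$ and $\mathcal{L}_j$ (producing the type-T3 points) is $O(n)$ for the same reason. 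Each individual step---moving $\eta$ to the next point on its current finer arc, or moving from one subedge of $e_i$ to its neighbour---costs $O(1)$: the maintained path $\pi(\eta,\phi_1(h(\eta)))$ joins two points lying in a single finer arc and a single subedge, so it is updated in constant time by Lemma~\ref{lem:constant-geodesic} and Lemma~\ref{lem:constant-geodesic-edge}, and this also lets us compare the relevant geodesic distance with $r$ in constant time. The traversal of $c_{2,\ell}$ is structurally identical: $\mathcal{L}_i'$ and $\mathcal{L}_j'$ are already available sorted along $e_i$ and $e_j$ (the extra $O(n)$ points $\phi_1(\eta)$, $\psi_1(\eta)$ for $\eta\in\mathcal{E}_1$ are produced while traversing $c_{1,k}$ and merged in $O(n)$ time), $\phi_2$ and $\psi_2$ are monotone on $s_{2,\ell}$, and Lemmas~\ref{lem:constant-geodesic} and~\ref{lem:constant-geodesic-edge} apply verbatim with $\spm{j}$ and $\spm{j+1}$ in place of $\spm{i}$ and $\spm{i+1}$, so the same amortization gives $O(n)$ steps.

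The step I expect to be the main obstacle is making the monotone-pointer amortization fully rigorous together with the constant-time-per-step claim across the transition between the portion of $c_{1,k}$ lying on $\partial P$ (Case~1) and the portion lying on a circular arc (Case~2): one must verify that $h(\eta)$, and hence the subedge containing $\phi_1(h(\eta))$, genuinely never backs up, that the stored geodesic path is correctly re-rooted when $\eta$ crosses a finer-arc boundary or a shortest-path-map cell boundary (this is precisely what the stored closest-anchor information used in Lemmas~\ref{lem:constant-geodesic} and~\ref{lem:constant-geodesic-edge} supplies), and that the degenerate situation in which $\phi_1$ is constant on a positive-length subarc---which would make a single $p\in\mathcal{L}_i$ responsible for infinitely many T2 points---cannot arise, which follows from the general position assumption since two distinct geodesic disk boundaries of the same radius cannot share a positive-length circular arc.
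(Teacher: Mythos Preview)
Your proposal is correct and follows the paper's approach, which is left implicit in the algorithm description (Cases~1 and~2) preceding the lemma; your amortized one-directional-pointer argument spells out exactly the monotone-advance reasoning the paper relies on but does not write out. One remark: your separate $O(n)$ count of the event points is not actually needed once the amortization is in place---the number of T2 events produced by the traversal is already bounded by the number of subedge-pointer advances, which is at most $|\mathcal{L}_i|$ regardless of whether $\phi_1$ is strictly monotone---and your general-position justification for strict monotonicity is slightly off, since that assumption concerns only vertices of $P$, not the interior points of $e_i$ that occur in $\mathcal{L}_i$.
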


\paragraph{Traversal for finding a restricted 2-center.}
Using the event points on $c_{1,k}$ and $c_{2,\ell}$, we can compute $\mu_1(x)$
and $\mu_2(x)$ for all $x \in \mathcal{E}$ in linear time.
Then, we can find a $2$-center restricted to
$(e_i,e_j)$ with radius $r$ by traversing $c_{1,k}$ as follows.
For every two consecutive event points $\eta, \eta'$ on $c_{1,k}$, 
we check whether there exists a point $x$ with $\eta \prec x \prec \eta'$ such that
$\mu_2(x) \prec \mu_1(x)$ using the following lemma.

\begin{lemma}
  Let $\eta$ and $\eta'$ in $\mathcal{E}$ be two consecutive event
  points along $s_{1,k}$.  We can determine whether there is a point
  $\eta \prec x \prec \eta'$ such that $\mu_2(x) \prec \mu_1(x)$ in
  time linear to the number of event points lying between
  $\mu_2(\eta)$ and $\mu_1(\eta')$ if $\mu_1(\eta)\prec \mu_2(\eta)
  \prec \mu_1(\eta') \prec \mu_2(\eta')$.
  Otherwise, we can determine whether there is such a point in constant time.
  \label{lemma:finding $2$-center between two event points}
\end{lemma}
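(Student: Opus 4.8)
The plan is to reduce the question to comparing the two increasing functions $\mu_1$ and $\mu_2$ on the open subchain of $s_{1,k}$ delimited by $\eta$ and $\eta'$, and then to resolve it by a short case analysis on the order of the four values $\mu_1(\eta),\mu_2(\eta),\mu_1(\eta'),\mu_2(\eta')$ along $s_{2,\ell}$. First I would record what having two \emph{consecutive} event points buys: as $x$ ranges over the open subchain between $\eta$ and $\eta'$, the point $x$ stays inside one finer arc of $c_{1,k}$, $\phi_1(x)$ stays inside one subedge of $e_i$ while the cells of $\spm{i}$ and $\spm{i+1}$ carrying the relevant geodesics stay fixed, and $\psi_1(x)$ stays inside one subedge of $e_j$; hence on this subchain $\phi_1$ and $\psi_1$, and therefore $\mu_1$ and $\mu_2$, have a fixed combinatorial structure. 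In particular $\mu_1$ and $\mu_2$ are monotone there, their one-sided limits at $\eta$ and at $\eta'$ from inside the subchain exist --- I write $\mu_1(\eta),\mu_2(\eta),\mu_1(\eta'),\mu_2(\eta')$ for these --- and each is computable in $O(1)$ time, since the defining relations $d(\phi_1(x),x)=r$ and $d(\mu_1(x),\phi_1(x))=r$ (and symmetrically $d(\psi_1(x),x)=r$, $d(\mu_2(x),\psi_1(x))=r$), together with the exceptional cases $\phi_1(x)=1$, $\psi_1(x)=1$, and $\mu_1(x)$ or $\mu_2(x)$ pinned at an endpoint of a circular arc of $s_{2,\ell}$, can be solved in $O(1)$ time once the structure is fixed.

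Since $\mu_1$ and $\mu_2$ are increasing on $s_{1,k}$, for every $x$ in the open subchain we have $\mu_1(\eta)\preceq\mu_1(x)\preceq\mu_1(\eta')$ and $\mu_2(\eta)\preceq\mu_2(x)\preceq\mu_2(\eta')$. I then distinguish cases. If $\mu_1(\eta')\preceq\mu_2(\eta)$, then $\mu_1(x)\preceq\mu_1(\eta')\preceq\mu_2(\eta)\preceq\mu_2(x)$ for all $x$, so no point satisfies $\mu_2(x)\prec\mu_1(x)$ and we return ``no'' in constant time. If $\mu_2(\eta')\prec\mu_1(\eta)$, then $\mu_2(x)\prec\mu_1(x)$ for every $x$ and we return ``yes'' together with any such $x$ and $\mu_1(x)$ in constant time. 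If $\mu_2(\eta)\prec\mu_1(\eta)$ or $\mu_2(\eta')\prec\mu_1(\eta')$, then by the continuity of $\mu_1,\mu_2$ just inside the corresponding endpoint, $\mu_2(x)\prec\mu_1(x)$ holds on a subchain adjacent to $\eta$ or to $\eta'$, and we return ``yes'' in constant time. Using the monotonicity of $\mu_1$ and $\mu_2$, the only order of the four endpoint values not covered so far is $\mu_1(\eta)\prec\mu_2(\eta)\prec\mu_1(\eta')\prec\mu_2(\eta')$, which is precisely the case singled out in the statement; orders involving equalities reduce to one of the previous cases after inspecting the fixed-structure functions slightly further inside the subchain, still in constant time.

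It remains to handle $\mu_1(\eta)\prec\mu_2(\eta)\prec\mu_1(\eta')\prec\mu_2(\eta')$, where $\mu_2(x)\succ\mu_1(x)$ at both ends, so the answer is ``yes'' iff $\mu_2(x)\prec\mu_1(x)$ for some $x$ strictly inside the subchain. If $\mu_2(x_0)\prec\mu_1(x_0)$ for such an $x_0$, then both $\mu_2(x_0)$ and $\mu_1(x_0)$ lie on the subchain of $s_{2,\ell}$ from $\mu_2(\eta)$ to $\mu_1(\eta')$, because $\mu_2(x_0)\succeq\mu_2(\eta)$, $\mu_1(x_0)\preceq\mu_1(\eta')$ and $\mu_2(x_0)\prec\mu_1(x_0)$; hence only the event points of $c_{2,\ell}$ inside that window are relevant. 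I walk those event points; together with their preimages under $\mu_1$ and under $\mu_2$ (each an $O(1)$-computable point of $s_{1,k}$, by the monotonicity and the fixed structure) they split the open subchain between $\eta$ and $\eta'$ into $O(m)$ pieces, where $m$ is the number of such event points, and on each piece both $\mu_1(x)$ and $\mu_2(x)$ stay inside one event-free portion of $s_{2,\ell}$ and so are given by fixed closed-form expressions. On each piece I determine in $O(1)$ time whether the equation $\mu_1(x)=\mu_2(x)$ has a solution there and, if so, the relative order of $\mu_1$ and $\mu_2$ on each side of it; the answer is ``yes'' iff $\mu_2(x)\prec\mu_1(x)$ on some piece, in which case a witness $(x,\mu_1(x),r)$ is reported. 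The total time is $O(m)$, as claimed.

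The step I expect to be the main obstacle is this last one: showing that on a single cell --- a piece of the open subchain of $s_{1,k}$ crossed with an event-free portion of $s_{2,\ell}$ --- the functions $\mu_1,\mu_2$ and the relative order of $\mu_1(x)$ and $\mu_2(x)$ genuinely have a bounded-complexity description, so that each cell costs only $O(1)$ time. This amounts to unwinding the definitions carefully ($\mu_1(x)$ solves $d(\mu_1(x),\phi_1(x))=r$ with $\phi_1(x)$ solving $d(\phi_1(x),x)=r$, and symmetrically for $\mu_2$, all with fixed combinatorial structures), treating the degenerate behaviours ($\phi_1(x)=1$, or $\mu_1(x)$ pinned at a circular-arc endpoint of $s_{2,\ell}$) on their own, and verifying that the preimages of the window's event points under $\mu_1$ and under $\mu_2$ interleave along $s_{1,k}$ so that every cell is visited exactly once during the walk.
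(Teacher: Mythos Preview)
Your case analysis for the constant-time branches matches the paper's exactly, and your overall plan for the main case $\mu_1(\eta)\prec\mu_2(\eta)\prec\mu_1(\eta')\prec\mu_2(\eta')$ is correct and gives the stated bound. The route you take in that main case, however, differs from the paper's and is a bit more laborious.

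The paper does not compute $\mu_1(x)$ and $\mu_2(x)$ explicitly on each cell and compare them. Instead it reformulates the condition $\mu_2(x)\prec\mu_1(x)$ as the existence of a witness $y\in s_{2,\ell}$ with $\mu_2(\eta)\prec y\prec\mu_1(\eta')$ and $\max\{d(\phi_1(x),y),\,d(\psi_1(x),y)\}\le r$. For each pair $(\nu,\nu')$ of consecutive event points on $s_{2,\ell}$ inside the window, the distances $d(\phi_1(x),y)$ and $d(\psi_1(x),y)$ are constant-degree algebraic functions of $(x,y)\in(\eta,\eta')\times(\nu,\nu')$ (the event-point definitions on both sides were arranged precisely so that all the relevant combinatorial structures are frozen on such a product cell), so the two-variable feasibility question is answered in $O(1)$ time per cell. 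Summing over the event points between $\mu_2(\eta)$ and $\mu_1(\eta')$ gives the linear bound directly.

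Compared with your approach, this buys two things. First, it works directly with distance functions, which are manifestly constant-degree on a cell, rather than with $\mu_1,\mu_2$, which are only implicitly defined and require the extra unwinding you describe. Second, it sidesteps entirely the obstacle you flag at the end: there is no need to compute preimages of the $s_{2,\ell}$ event points under $\mu_1$ and $\mu_2$, and hence no need to argue that those preimages interleave or to merge them along $s_{1,k}$. Your approach still goes through (the two preimage lists are each monotone, so an $O(m)$ merge suffices even without interleaving), but the paper's reformulation is the cleaner way to discharge the cell-by-cell $O(1)$ cost.
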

\begin{proof}
  By the construction, $\phi_1(x)$ lies in the same subedge induced 
  by $\mathcal{L}_i$ for all $\eta \prec x \prec \eta'$, and so does
  $\phi_2(x)$ by $\mathcal{L}_j$.
  Moreover, $\mu_1(x)$ lies between $\mu_1(\eta)$ and $\mu_1(\eta')$, 
  and $\mu_2(x)$ lies between $\mu_2(\eta)$ and $\mu_2(\eta')$.
  
  Consider the case where $\mu_1(\eta)\prec \mu_2(\eta)
  \prec \mu_1(\eta') \prec \mu_2(\eta')$.   
  For a point $x$ with $\eta \prec x \prec \eta'$,
  it holds that $\mu_1(x) \prec \mu_2(x)$ if and only if
  there is a point $y$ with $\mu_2(\eta) \prec y \prec \mu_1(\eta')$
  and $\max\{d(\phi_1(x),y), d(\phi_2(x),y)\} \leq r$.  
  Note that for two consecutive event points $\nu$ and $\nu'$ on $s_{2,\ell}$,
  $d(\phi_1(x),y)$ and $d(\phi_2(x),y)$ are algebraic functions of
  constant degree for $\nu \prec y \prec \nu'$ and $\eta \prec x \prec
  \eta'$.  
  Moreover, we can find the algebraic functions while computing the event points.
  Thus, in constant time, we can determine whether there exists such a pair $(x,y)$ 
  such that $y$ lies between given two consecutive event points in $s_{2,\ell}$.
  
  We do this for every two consecutive event points lying between
  $\mu_2(\eta)$ and $\mu_1(\eta')$, which takes time linear to
  the number of event points lying between them.
  
  For the remaining case, we can answer ``yes'' or ``no'' in constant time.
  To see this, consider three possible subcases;
  $\mu_2(\eta) \prec \mu_1(\eta)$, $\mu_2(\eta') \prec \mu_2(\eta')$ or
  $\mu_1(\eta') \prec \mu_2(\eta)$.
  
  If $\mu_2(\eta) \prec \mu_1(\eta)$ or $\mu_2(\eta') \prec \mu_2(\eta')$,
  the answer is clearly ``yes.''
  If $\mu_1(\eta') \prec \mu_2(\eta)$,
  the answer is ``no'' because it holds that $\mu_1(\eta) \prec \mu_1(x)
  \prec \mu_1(\eta') \prec \mu_2(\eta) \prec \mu_2(x) \prec \mu_2(\eta')$
  for every point $x$ lying between $\eta$ and $\eta'$.
\end{proof}

If $r \geq \radrestricted$, there exists a $2$-set $(c_1,c_2)$ with
radius $r$ such that $c_1 \in S_1$ and $c_2 \in S_2$ by
Lemma~\ref{boundary is sufficient}.  We have $\mu_2(c_1) \prec c_2
\prec \mu_1(c_1)$.  Thus, the algorithm always find a 2-center with radius $r$.

We analyze the running time for traversing the chain $c_{1,k}$.
There are two types of a pair $(\eta, \eta')$ of consecutive event points;
$\mu_1(\eta)\prec \mu_2(\eta) \prec \mu_1(\eta') \prec \mu_2(\eta')$ or not.
The running time for handling
two consecutive event points belonging to the first case is linear to the number of event
points lying between $\mu_2(\eta)$ and $\mu_1(\eta')$.
For the second case, the running time is constant.

Here, for the pairs $(\eta, \eta')$ belonging to the first case,
their corresponding subchains $\{y : \mu_2(\eta) \prec y \prec \mu_1(\eta')\}$
are pairwise disjoint.
This implies that the total running time is linear to the number 
event points on $c_{1,k}$ and on $c_{2,\ell}$, which is $O(n)$.

\begin{lemma}
	Given two sets of all event points on $c_{1,k}$ and of all
	event points on $c_{2,\ell}$,
	a 2-center with radius $r$ restricted to $(e_i,e_j)$ can be computed in
	$O(n)$ time.
\end{lemma}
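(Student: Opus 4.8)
The plan is to assemble the ingredients developed above into a single linear-time sweep. First I would compute $\mu_1(\eta)$ and $\mu_2(\eta)$ for every event point $\eta$ on $c_{1,k}$. Since $\mu_1$ and $\mu_2$ are monotone along $s_{1,k}$ and their values are points of $c_{2,\ell}$ whose positions are driven by the (already sorted) event points of $c_{2,\ell}$, I would advance one pointer over the event points of $c_{1,k}$ together with two monotone pointers over those of $c_{2,\ell}$, so that every event point of $c_{2,\ell}$ is touched $O(1)$ times; along the way I would keep, for each $\mu_t(\eta)$, the finer arc of $c_{2,\ell}$ containing it together with the relevant geodesic paths, so that the constant-degree algebraic functions needed afterwards are available in $O(1)$ per query. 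By Lemma~\ref{lem:event-linear} both event sets have size $O(n)$, hence this phase costs $O(n)$.

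Next I would traverse $c_{1,k}$ a second time and, for every pair $(\eta,\eta')$ of consecutive event points, invoke Lemma~\ref{lemma:finding $2$-center between two event points} to decide whether there is a point $x$ with $\eta\prec x\prec\eta'$ and $\mu_2(x)\prec\mu_1(x)$; as soon as one is found I report $(x,\mu_1(x),r)$ (equivalently, the $1$-centers of the two subpolygons it induces), which is a $2$-center restricted to $(e_i,e_j)$ with radius $r$. This search is legitimate because, as observed earlier, for all $x$ strictly between two consecutive event points $\phi_1(x)$ stays in one subedge of $e_i$ and $\phi_2(x)$ in one subedge of $e_j$, which is precisely the hypothesis under which that lemma applies. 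For correctness, note that if $r\geq\radrestricted$ then Lemma~\ref{boundary is sufficient} supplies a restricted $2$-set $(c_1,c_2)$ with $c_1\in S_1$ and $c_2\in S_2$, and any such pair satisfies $\mu_2(c_1)\prec c_2\prec\mu_1(c_1)$, so the sweep cannot overlook it; conversely, whenever the sweep finds such an $x$ the reported triplet is valid by construction.

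The remaining point, and the one I expect to carry the weight of the argument, is the running-time bound. Call a pair $(\eta,\eta')$ \emph{expensive} when $\mu_1(\eta)\prec\mu_2(\eta)\prec\mu_1(\eta')\prec\mu_2(\eta')$; only for such pairs does Lemma~\ref{lemma:finding $2$-center between two event points} cost more than $O(1)$, namely time linear in the number of event points of $c_{2,\ell}$ strictly between $\mu_2(\eta)$ and $\mu_1(\eta')$. The key step is to show that the intervals $(\mu_2(\eta),\mu_1(\eta'))$ coming from distinct expensive pairs are pairwise disjoint on $c_{2,\ell}$: for a consecutive expensive pair $(\eta',\eta'')$ the interval opens at $\mu_2(\eta')$, and expensiveness of $(\eta,\eta')$ gives $\mu_1(\eta')\prec\mu_2(\eta')$, so the interval of $(\eta,\eta')$, which closes at $\mu_1(\eta')$, ends before the next one begins; for non-consecutive expensive pairs the same conclusion follows from the monotonicity of $\mu_1$ and $\mu_2$. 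Hence the total cost of all expensive pairs is $O(\text{number of event points on }c_{2,\ell})$, while the number of pairs processed is $O(\text{number of event points on }c_{1,k})$. Adding the $O(n)$ cost of computing all $\mu_t$-values and invoking Lemma~\ref{lem:event-linear}, the whole procedure runs in $O(n)$ time.
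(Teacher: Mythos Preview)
Your proposal is correct and follows essentially the same approach as the paper: compute $\mu_1$ and $\mu_2$ at all event points by a monotone sweep, then scan consecutive event pairs with Lemma~\ref{lemma:finding $2$-center between two event points}, arguing that the ``expensive'' intervals on $c_{2,\ell}$ are pairwise disjoint so the total cost is $O(n)$. Your disjointness justification is in fact slightly more explicit than the paper's; one small slip is that for $x\in s_{1,k}$ the relevant function on $e_j$ is $\psi_1$, not $\phi_2$.
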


\subsection{The analysis of the decision algorithm}\label{analysis}

Now we analyze the running time of the algorithm.  
In the first step
described in Section~\ref{sec:constructing-intersections}, we
compute the intersection of the geodesic disks.
Once we have the farthest-point geodesic Voronoi diagram,
this step takes linear time.

In the second step described in Section~\ref{subsection Subdividing the edges and the chains}, 
we subdivide the edges and the boundaries of the intersections.
For subdividing the edges, we compute the four shortest path trees in linear time
~\cite{shortest-path-tree}, and compute the intersections
of the edges $e_i$ and $e_j$ with the extensions of the edges in the trees.
Then we sort them along the edges, which takes $O(n\log n)$ time.
For subdividing the boundaries of the intersections $I_1$ and $I_2$,
we traverse the shortest path map
along $\partial I_1$ (or $\partial I_2$) once, which takes $O(n)$ time
by Lemma~\ref{number of arcs}. 

In the third step described in Section~\ref{sec:coverage-functions}, 
we find one local minimum and one local maximum of 
each function. Then we subdivide
$\bd I_1$ and $\bd I_2$ into $O(1)$ subchains. 
This takes $O(\log^3 n)$ time by Lemma~\ref{lem:third-step}.

In the last step described in Section~\ref{sec:computing a $2$-center for a pair of subchains}, we consider $O(1)$ subchain pairs.  For a given
subchain pair $(c_{1,k}, c_{2,\ell})$, we compute all event points on
the subchains and traverse
the two subchains once. This takes linear time as we have shown.

Therefore, we have the following lemma.
\begin{lemma}
  For a candidate edge pair $(e_i, e_j)$ and a radius $r$, we can
  decide whether $r\geq r_{ij}^*$ in $O(n)$ time, once 
  we have $\mathcal{L}_i$ and $\mathcal{L}_j$ and the
  farthest-point geodesic Voronoi diagrams of the vertices of
  $\subchain{v_{j+1}}{v_i}$ and of the vertices of
  $\subchain{v_{i+1}}{v_j}$ are computed. 
  In the same time, if
  $r\geq r_{ij}^*$, we can compute a $2$-center with radius $r$
  restricted to $(e_i, e_j)$.
\end{lemma}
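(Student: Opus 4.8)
\proof
The plan is to add up the cost of the four phases of the decision algorithm developed in this section, under the hypothesis that $\mathcal{L}_i$, $\mathcal{L}_j$, and the two farthest-point geodesic Voronoi diagrams $\mathsf{FV}_1$, $\mathsf{FV}_2$ (together with their refined cells) are already available. First I would build $I_1$ and $I_2$: walking over the refined cells of $\mathsf{FV}_1$ and $\mathsf{FV}_2$ and reading off the circular arcs of $\partial I_1$ and $\partial I_2$ costs time linear in the number of refined cells plus the number of arcs, which is $O(n)$ by Lemma~\ref{number of arcs}. Next I would compute the four shortest path trees rooted at $v_i, v_{i+1}, v_j, v_{j+1}$ in $O(n)$ time and overlay the corresponding shortest path maps with $\partial I_1$ and $\partial I_2$; since $I_1$ and $I_2$ are geodesically convex, each map edge is crossed at most twice during a single traversal of the boundary, so the finer arcs are obtained in $O(n)$ time. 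Because $\mathcal{L}_i$ and $\mathcal{L}_j$ are supplied already sorted along $e_i$ and $e_j$, this phase incurs no extra $O(n\log n)$ sorting.

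For the third phase I would invoke Lemma~\ref{lem:third-step} to locate one local maximum and one local minimum of each of $\phi_1',\psi_1',\phi_2',\psi_2'$ in $O(\log^3 n)$ time; these four points cut $\partial I_1$ and $\partial I_2$ into $O(1)$ subchains $c_{1,k}$ and $c_{2,\ell}$ on each of which $\phi_t$ and $\psi_t$ are monotone by Corollary~\ref{cor:extreme}. For the fourth phase I would iterate over the $O(1)$ subchain pairs $(c_{1,k},c_{2,\ell})$: for each pair the event points are computed in $O(n)$ time by Lemma~\ref{lem:event-linear}, and then a two-pass traversal of $c_{1,k}$ either outputs a $2$-center of radius $r$ restricted to $(e_i,e_j)$ lying on that pair of subchains or reports that none exists there, in time linear in the number of event points --- here one uses that the subchains of $c_{2,\ell}$ scanned in the expensive case lie between disjoint pairs $(\mu_2(\eta),\mu_1(\eta'))$ and hence are pairwise disjoint. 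Summing the four phases yields $O(n)+O(n)+O(\log^3 n)+O(n)=O(n)$.

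Finally I would argue correctness of the decision. If $r\geq\radrestricted$, then by Lemma~\ref{boundary is sufficient} there is a triplet $(c_1,c_2,r)$ restricted to $(e_i,e_j)$ with $c_1\in\sarcs_1$ and $c_2\in\sarcs_2$; since $c_1$ lies on some $s_{1,k}$ and $c_2$ on some $s_{2,\ell}$, and $\mu_2(c_1)\prec c_2\prec\mu_1(c_1)$, the monotone scan on the pair $(c_{1,k},c_{2,\ell})$ detects a point $x$ with $\mu_2(x)\prec\mu_1(x)$, and the algorithm returns ``yes'' together with $(x,\mu_1(x))$ completed to the $1$-centers of the two induced subpolygons. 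Conversely, if no such $x$ turns up on any of the $O(1)$ subchain pairs, then no $2$-center of radius $r$ restricted to $(e_i,e_j)$ exists, so $r<\radrestricted$; thus returning ``yes'' exactly when a witness is found is correct. I expect the main obstacle to be not the arithmetic but the hand-off of combinatorial data between phases: one must carry, for every processed event point, the anchors of $\pi(v_i,\cdot)$ and $\pi(v_{i+1},\cdot)$ closest to it, the subedge of $e_i$ currently containing $\phi_1$, and the maintained geodesic path, so that Lemmas~\ref{lem:constant-geodesic} and \ref{lem:constant-geodesic-edge} apply and each event point is charged amortized $O(1)$ --- this is precisely the bookkeeping the case analysis above is designed to guarantee.
\endproof
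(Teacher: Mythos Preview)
Your proposal is correct and mirrors the paper's own analysis in Section~\ref{analysis}: you sum the four phases---building $I_1,I_2$ from the refined FVD cells (Lemma~\ref{number of arcs}), overlaying the four shortest path maps to get finer arcs (with $\mathcal{L}_i,\mathcal{L}_j$ already sorted so no $O(n\log n)$ cost), locating the extrema via Lemma~\ref{lem:third-step}, and scanning the $O(1)$ subchain pairs via Lemma~\ref{lem:event-linear} and Lemma~\ref{lemma:finding $2$-center between two event points}---and invoke Lemma~\ref{boundary is sufficient} for correctness exactly as the paper does. One cosmetic point: the algorithm outputs the witness pair $(x,\mu_1(x))$ itself as the restricted $2$-set of radius $r$; it does not recompute $1$-centers of the induced subpolygons, so you can drop that clause.
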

Here, we do not consider the time for computing $\mathcal{L}_i$ and $\mathcal{L}_j$
and the farthest-point geodesic Voronoi diagrams because they do not depend
on input radius $r$.
In the overall algorithm, this decision algorithm will be executed repeatedly
with different input radius $r$.
In this case, we do not need to recompute 
$\mathcal{L}_i$ and $\mathcal{L}_j$
and the farthest-point geodesic Voronoi diagrams.

\section{An optimization algorithm for a candidate edge pair}

%
%
%

The geodesic $1$-center of a simple polygon is determined by at most
three convex vertices of $P$ that are farthest from the center. For a
given geodesic $2$-center $(c_1^*, c_2^*)$ with radius
$r^*=r(c^*_1,c^*_2)$, a similar argument applies.

Lemma~\ref{partition} and its proof imply that there are three
possible configurations for a $2$-center as follows. Let $\alpha^*$
and $\beta^*$ be the two endpoints of $b(c_1^*,
c_2^*)$. 
\begin{enumerate}
\item $d(c_t^*,\alpha^*)<r^*$ and $d(c_t^*,\beta^*)<r^*$ for $t=1,2$.
\item Either $d(c_1^*,\alpha^*)=d(c_2^*,\alpha^*)=r^*$ or
  $d(c_1^*,\beta^*)=d(c_2^*,\beta^*)=r^*$.
\item
  $d(c_1^*,\alpha^*)=d(c_2^*,\alpha^*)=d(c_1^*,\beta^*)=d(c_2^*,\beta^*)=r^*$.
\end{enumerate}

For Configuration 1, a $2$-center $(c_1^*, c_2^*)$ restricted to $(e_i, e_j)$ can be
computed in $O(n)$ time because $c_1^*$ is the 1-center of $\subpolygon{v_{i+1}}{v_{j}}$
and $c_2^*$ is the 1-center of $\subpolygon{v_{j+1}}{v_i}$. 
Thus we only focus on Configurations 2 and 3.

In this section, we present an algorithm for computing a 2-center
restricted to a given candidate edge pair $(e_i,e_j)$.
We apply the parametric searching technique \cite{parametric} to 
extend the decision algorithm
in Section~\ref{sec:decision algorithm} into an optimization algorithm.
We use the decision algorithm for two different purposes.
We simulate the decision algorithm 
with the optimal solution $r_{ij}^*$ (without explicitly computing $r_{ij}^*$.)
While simulating the decision algorithm with $r_{ij}^*$, we use the decision
algorithm as a subprocedure with an explicit input $r$.

In the following, we show how to simulate the decision algorithm
with the optimal solution $r_{ij}^*$, and finally compute the optimal solution.

\subsection{Constructing the intersections of geodesic disks}
\label{sec:constructing-intersections}
\begin{figure}
  \begin{center}
    \includegraphics[width=0.6\textwidth]{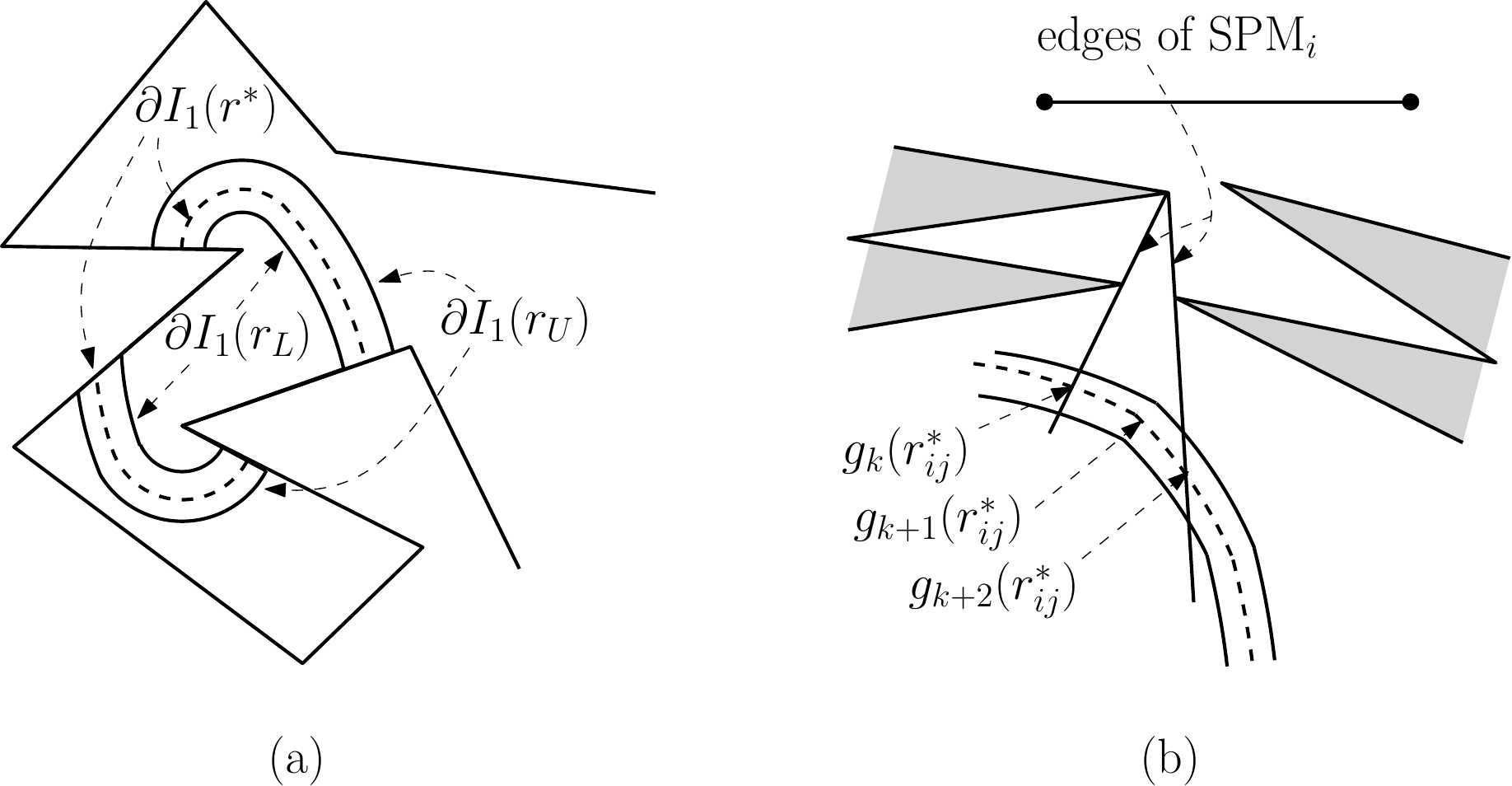}
    \caption{\small 
      (a) For all $r_L \leq r \leq r_U$, the combinatorial structures of $\partial I_1(r)$ are the same.
	 (b) The points 
	 $g_k(\radrestricted)$ and $g_{k+2}(\radrestricted)$ are intersections of $\bd I_1(\radrestricted)$
	 and edges of $\spm{i}$. The point $g_{k+1}(\radrestricted)$ is 
	 an endpoint of a finer arc of $\bd I_1(\radrestricted)$.
         \label{fig:parametric searching}}
     \end{center}
   \end{figure}

   We compute the farthest-point geodesic Voronoi diagrams, denoted by
   $\mathsf{FV}_1$ and $\mathsf{FV}_2$, of the vertices in
   $\subchain{v_{i+1}}{v_j}$ and $\subchain{v_{j+1}}{v_i}$ in
   $O(n\log\log n)$ time, respectively, as we did in the decision
   algorithm (refer to Section~\ref{sec:decision_intersection}).
   Let $I_1(r) = \cap_{k=i+1}^{j} {D_{r}(v_k)}$ and $I_2(r) =
   \cap_{k={j+1}}^{i} {D_{r}(v_k)}$.  Instead of computing
   $\partial I_1(\radrestricted)$ and $\partial I_2(\radrestricted)$
   explicitly, we compute the combinatorial structures of $\partial
   I_1(\radrestricted)$ and $\partial I_2(\radrestricted)$.  Here, the
   combinatorial structures of $\partial I_1(\radrestricted)$ and
   $\partial I_2(\radrestricted)$ are the cyclic sequences of edges of
   the farthest-point geodesic Voronoi diagram intersecting $\partial
   I_1(\radrestricted)$ and $\partial I_2(\radrestricted)$ in
   clockwise order, respectively.

   For each vertex $v_f$ in $\mathsf{FV}_1$ and $\mathsf{FV}_2$, we
   compute $d(v_f, t)$ for all sites $t$ of the cells incident to
   $v_f$.  Let $\mathcal{R}$ be the set of these distances.  Then we
   have $|\mathcal{R}| = O(n)$.  We sort those distances in increasing
   order and apply binary search on $\mathcal{R}$ to find the largest value $r_U\in \mathcal{R}$
   and the smallest values $r_L \in \mathcal{R}$ satisfying $\radrestricted\in [r_L,r_U]$
   using the decision algorithm.  
   We have already constructed $\mathsf{FV}_1$ and $\mathsf{FV}_2$.
   And we compute $\mathcal{L}_i$ and $\mathcal{L}_j$, which are
   independent of $r$. Then the
   decision algorithm takes linear time. Thus, this procedure takes
   $O(n \log n)$ time.
   
   Then, for any radius $r\in [r_L, r_U]$, the combinatorial structure
   of $\partial I_1(r)$ is the same.  See Figure~\ref{fig:parametric
     searching}(a).  Thus, by computing $\partial I_1(r_L)$, we can
   obtain the combinatorial structure of $\partial
   I_1(\radrestricted)$.  Note that each endpoint of the arcs of $\bd
   I_1(r)$ can be represented as algebraic functions of $r$ with
   constant degree for $r \in [r_L, r_U]$.

\subsection{Subdividing the intersections of geodesic disks}
In the following, we let $t$ be $1$ or $2$.  We subdivide $\partial
I_t(\radrestricted)$ by overlaying $\spm{i}$, $\spm{i+1}$, $\spm{j}$,
and $\spm{j+1}$ with $\partial I_t(\radrestricted)$.  Instead of
computing it explicitly, we compute the combinatorial structure of the
subdivision of $\partial I_t(\radrestricted)$.

First we compute the shortest path map $\spm{i}$.  The annulus
$I_t(r_U) \setminus \mathrm{int}(I_t(r_L))$ does not contain any
vertex of the shortest path map.  Thus the edges of $\spm{i}$
intersecting the curve $\partial I_t(r)$ can be
ordered 
along the curve $\partial I_t(r_U)$ in clockwise fashion for any $r
\in [r_L,r_U]$.  Moreover, the order of these edges 
is the same for any $r \in [r_L,r_U]$.  Thus, this order can be computed in linear time by
traversing $\partial I_t(r_U)$ and $\spm{i}$ as we did in
Section~\ref{sec:finding the local extrema}.  We do this also for
$\spm{i+1}$, $\spm{j}$, and $\spm{j+1}$.

Consider an edge $e$ of $\spm{i}$ intersecting $\bd I_t(r)$ for $r \in
[r_L,r_U]$.  Then the intersection can be represented as an algebraic
function of $r$.  We compute such an algebraic function for each edge
of $\spm{i}$, $\spm{i+1}$, $\spm{j}$, and $\spm{j+1}$ and merge them
with the set of the endpoints of the arcs in $\partial I_t(r)$ for $r
\in [r_L,r_U]$ computed in
Section~\ref{sec:constructing-intersections}.  Let $\mathcal{A}$
denote the merged set. There are $O(n)$ elements in $\mathcal{A}$ each
of which is an algebraic function of the variable $r$. 
They can be
sorted in clockwise order along $\partial I_t(\radrestricted)$
in $O(n \log^2 n)$ time by Lemma~\ref{lemma:sorting}.  See
Figure~\ref{fig:parametric searching}.  Let $\mathcal{A} = \{g_1(r),
\ldots, g_{m}(r)\}$.  The elements are the endpoints of \emph{finer}
arcs of $\partial I_t(r)$.  Let $\{g'_1(r), \ldots, g'_{m'}(r)\}$ be
the sorted list of the endpoints of finer arcs of $\partial I_2(r)$
for $r \in [r_L, r_U]$.

\begin{lemma}
  \label{lemma:sorting}
  The sorted list $\mathcal{A}$ can be computed
  in $O(n \log^2 n)$ time.
\end{lemma}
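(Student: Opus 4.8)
The plan is to sort a set $\mathcal{A}$ of $O(n)$ algebraic functions of the single variable $r$ — each giving the position of an endpoint of a finer arc of $\partial I_t(r)$ — into their common cyclic order along $\partial I_t(r)$, valid for every $r$ in the interval $[r_L,r_U]$ that was fixed in Section~\ref{sec:constructing-intersections}. The key structural fact, which was already established, is that the combinatorial structure of $\partial I_t(r)$ is constant on $[r_L,r_U]$: no vertex of $\mathsf{FV}_t$ and no vertex of the four shortest path maps lies in the annulus $I_t(r_U)\setminus\mathrm{int}(I_t(r_L))$. Hence the cyclic order of the elements of $\mathcal{A}$ is the same for all $r\in[r_L,r_U]$; it suffices to determine that one order.

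First I would run any comparison-based sorting algorithm, say merge sort, on $\mathcal{A}$, which issues $O(n\log n)$ comparisons. A single comparison asks, for two elements $g_a(r)$ and $g_b(r)$, which comes first in clockwise order along $\partial I_t(r)$ — and this answer is the same for every $r\in[r_L,r_U]$, so it is a well-posed question. To answer it, observe that $g_a(r)$ and $g_b(r)$ swap their relative cyclic position precisely at a value of $r$ where the two endpoints coincide or where one of them crosses the reference point $w_t$; since each $g$ is an algebraic function of constant degree, the set of such critical values of $r$ has constant size and can be computed in $O(1)$ time by solving constant-degree equations. To resolve the comparison we must then decide on which side of these $O(1)$ critical values the true optimum $\radrestricted$ lies, which we do by invoking the decision algorithm of Section~\ref{sec:decision algorithm} a constant number of times; each call runs in $O(n)$ time once $\mathsf{FV}_1,\mathsf{FV}_2,\mathcal{L}_i,\mathcal{L}_j$ are available (and they are). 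Once we know where $\radrestricted$ sits relative to the critical values, the outcome of the comparison is fixed, and we record it.

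The main subtlety — and the step I expect to be the genuine obstacle — is that naively spending $O(n)$ time per comparison yields $O(n^2\log n)$, which is too slow; we need $O(n\log^2 n)$. The remedy is the standard Megiddo-style batching used in parametric search: instead of resolving comparisons one at a time, at each level of merge sort we collect all $\Theta(n)$ pending comparisons simultaneously, extract the $O(n)$ critical $r$-values they generate, sort those values, and binary-search for the location of $\radrestricted$ among them using the decision algorithm. This costs $O(\log n)$ decision-algorithm calls, i.e. $O(n\log n)$ time, per level; with $O(\log n)$ levels of merge sort this gives $O(n\log^2 n)$ time in total for all comparisons, plus the $O(n\log n)$ bookkeeping for the sorting itself. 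One detail to handle carefully is that the comparison functions are \emph{cyclic} orderings rather than linear ones, so each pairwise comparison must be phrased relative to the fixed reference point $w_t$ (so that "before" is unambiguous), and the transitivity needed for merge sort to be correct follows from the fact that, for the fixed value $r=\radrestricted$, the points genuinely lie in a well-defined cyclic order on a simple closed curve $\partial I_t(\radrestricted)$, broken into a linear order at $w_t$. Putting the batched binary search together with the per-level cost bound yields the claimed $O(n\log^2 n)$ running time.
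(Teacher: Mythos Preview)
Your proposal is correct and follows essentially the same approach as the paper: both apply Megiddo's parametric search using a parallel merge sort with $O(\log n)$ rounds, batching the $O(n)$ independent comparisons in each round, extracting their $O(n)$ critical $r$-values, and locating $\radrestricted$ among them by $O(\log n)$ calls to the linear-time decision procedure, for $O(n\log n)$ per round and $O(n\log^2 n)$ overall. The only cosmetic difference is that the paper invokes Cole's parallel merge sort explicitly to guarantee that the comparisons in each round are independent, whereas you phrase this as ``each level of merge sort''; just make sure ``level'' means a parallel round (as in Cole) rather than a recursion level of sequential merge sort, since comparisons inside a single sequential merge are not independent.
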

\begin{proof}
  As described above, we can compute all elements of $\mathcal{A}$ 
  in linear time.  In the following, we show that they can be
  sorted in $O(n\log^2 n)$ time.  Sorting $O(n)$ elements can
    be done in $O(T_\textnormal{c}\log n)$ time using $O(n)$
    processors \cite{parallel-sorting}, where $T_\textnormal{c}$ is
    the time for comparing two elements.  To compare two
  elements, that is, to determine the order for the two elements along 
  $\bd I_t(\radrestricted)$ with respect to the reference point for $\bd I_t(\radrestricted)$,
  we do the followings.
  Let $h_1(\radrestricted)$ and $h_2(\radrestricted)$ be two elements of $\mathcal{A}$.
  Here, $h_1(r)$ and $h_2(r)$ are algebraic functions of variable $r \in \mathbb{R}$,
  and we want to determine the order for them when $r=\radrestricted$.
  We first find the roots of $h_1(r)=h_2(r)$.
  Let $c$ be the number of roots, which is a
  constant.  We apply the decision algorithm $c$ times with the roots.
  Then we can compare $h_1(\radrestricted)$ and $h_2(\radrestricted)$
  in $O(n)$ time, which is the time for applying the decision
  algorithm $c$ times.  In other words, $T_\textnormal{c}=O(n)$ for
  our case.  Note that we have already constructed the farthest-point
  geodesic Voronoi diagrams.

  We apply parametric search~\cite{parametric} with this parallel
  sorting algorithm.  In each iteration of the sorting, we need to do
  $O(n)$ comparisons, which are done in different processors in the
  parallel sorting algorithm. That means each of them is independent
  of the others.  Thus, in each iteration, we find $O(n)$ roots for
  all functions.  Then we sort them and apply binary search on $O(n)$
  roots using the decision algorithm.  Each iteration takes $O(n \log
  n)$ time, thus the total time complexity is $O(n \log^2 n)$ time.
\end{proof}

This section can be summarized as follows.

\begin{lemma}
  The set of endpoints of the finer arcs of $\partial I_1(r)$ and
  $\partial I_2(r)$ can be computed in $O(n \log ^2 n)$ time for $r
  \in [r_L,r_U]$.
\end{lemma}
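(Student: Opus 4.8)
The plan is to assemble this final statement as an immediate consequence of the two preceding subsections, tracking the time spent on each ingredient. First I would recall what we must deliver: for $r \in [r_L, r_U]$, the complete sorted list of endpoints of the finer arcs of both $\partial I_1(r)$ and $\partial I_2(r)$, where ``finer arcs'' means the arcs obtained after overlaying $\partial I_t(r)$ with the four shortest path maps $\spm{i}, \spm{i+1}, \spm{j}, \spm{j+1}$. The combinatorial structure of $\partial I_t(\radrestricted)$ itself — i.e., the cyclic sequence of farthest-point geodesic Voronoi edges it crosses — was already fixed in Section~\ref{sec:constructing-intersections}: once $r_L$ and $r_U$ are found by binary search on $\mathcal{R}$ using the decision algorithm (the annulus $I_t(r_U)\setminus\interior{I_t(r_L)}$ contains no Voronoi vertex, so the combinatorics is invariant on $[r_L,r_U]$), each endpoint of an arc of $\partial I_t(r)$ is a fixed algebraic function of $r$ of constant degree. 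That step costs $O(n\log n)$.

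Next I would account for the overlay with the shortest path maps. Each of $\spm{i}, \spm{i+1}, \spm{j}, \spm{j+1}$ is computed in $O(n)$ time. The key structural observation, already argued in the text, is that the annulus between $I_t(r_L)$ and $I_t(r_U)$ contains no vertex of $\spm{i}$, so the set of edges of $\spm{i}$ crossed by $\partial I_t(r)$ and their cyclic order along the curve is the same for every $r\in[r_L,r_U]$; hence it can be read off in $O(n)$ time by a single traversal of $\partial I_t(r_U)$ together with $\spm{i}$, exactly as in Section~\ref{sec:finding the local extrema}. Doing this for all four maps and all $t\in\{1,2\}$ is still $O(n)$. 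For each crossed edge we write down the intersection point as an algebraic function of $r$ of constant degree; merging these with the Voronoi-arc endpoint functions from the previous subsection yields the set $\mathcal{A}$ (for $t=1$) and its analogue for $t=2$, each of size $O(n)$ and each element an explicit constant-degree algebraic function of $r$. All of this is linear time.

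The only super-linear cost, and the step I expect to be the real content, is sorting $\mathcal{A}$ (and its $I_2$ analogue) in clockwise order along $\partial I_t(\radrestricted)$ — that is, evaluating and comparing constant-degree algebraic functions of $r$ at the \emph{implicitly known} value $r=\radrestricted$. This is precisely Lemma~\ref{lemma:sorting}, which gives the bound $O(n\log^2 n)$ via parametric search: the parallel sorting network of \cite{parallel-sorting} runs in $O(\log n)$ rounds on $O(n)$ processors, each round requires $O(n)$ independent comparisons, each comparison reduces to finding the $O(1)$ roots of $h_1(r)=h_2(r)$ and running the $O(n)$-time decision algorithm a constant number of times; batching the $O(n)$ roots of a round and binary-searching them against $\radrestricted$ with the decision algorithm costs $O(n\log n)$ per round, so $O(n\log^2 n)$ overall. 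Since the two sorts dominate the $O(n)$ and $O(n\log n)$ costs of all other steps, the total is $O(n\log^2 n)$, which is the claimed bound. The proof is therefore just: invoke Section~\ref{sec:constructing-intersections} for the combinatorial structure of $\partial I_t(\radrestricted)$, overlay the four shortest path maps in linear time using the no-vertex-in-the-annulus property, and apply Lemma~\ref{lemma:sorting} to sort the resulting $O(n)$ algebraic functions.
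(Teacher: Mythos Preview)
Your proposal is correct and follows essentially the same approach as the paper: the lemma is a summary of the surrounding subsection, and you assemble it from exactly the same ingredients in the same order---the combinatorial structure of $\partial I_t(\radrestricted)$ from Section~\ref{sec:constructing-intersections}, the linear-time overlay with the four shortest path maps via the no-vertex-in-the-annulus observation, and the $O(n\log^2 n)$ sort of the resulting $O(n)$ algebraic functions using Lemma~\ref{lemma:sorting}.
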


\subsection{Computing the coverage function values}

Recall that the function $\phi_1(p)$ maps a point $p \in P$ to the
infimum of the numbers which represent the points in $D_r(p) \cap
e_i$.
We find the subedge that contains $\phi_1(g_k(\radrestricted))$ for
each index $k \in [1,m]$ in $O(n \log ^2 n)$ time.  In the decision
algorithm, this can be done in $O(n)$ time.  However, each comparison
in the decision algorithm depends on the results of the comparisons in
the previous steps, thus this algorithm cannot be parallelized.
Therefore, we devise an alternative algorithm which can be
parallelized by allowing more comparison steps.

\begin{lemma}
  For an index $k \in [1,m]$, the subedge of $e_i$ containing
  $\phi_1(g_k(\radrestricted))$ can be computed in $O(n \log n)$ time.
\end{lemma}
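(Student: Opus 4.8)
The plan is to compute, for a fixed index $k$, the subedge of $e_i$ that contains $\phi_1(g_k(\radrestricted))$ by a binary search over the $O(n)$ points of $\mathcal{L}_i$, but structured so that the whole search is a \emph{parallel} process amenable to parametric search on $\radrestricted$. Recall from Lemma~\ref{computing function value} that $\phi_1(x)$ is found by binary search over $\mathcal{L}_i$ using the test ``$d(p_{\mathrm{med}},x)>r$?''; the difficulty is that here $x=g_k(\radrestricted)$ is itself an algebraic function of the unknown $\radrestricted$, and in the decision algorithm the outcome of one comparison dictates the next one, so the search is inherently sequential. To break this dependence I would, at each of the $O(\log n)$ rounds of the binary search, not commit to a single median: instead I replace the single comparison by evaluating the relevant predicate for \emph{all} candidate split points that could be reached at that round, so that each round issues $O(n)$ independent comparisons. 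Each such comparison is of the form ``is $d(p, g_k(r)) \le r$?'' where $p$ ranges over $\mathcal{L}_i$ and $g_k(r)$ is a constant-degree algebraic function of $r$; since $g_k$ lies in a known finer arc and $p$ lies in a known subedge, the combinatorial structure of $\pi(p, g_k(r))$ is fixed over $r\in[r_L,r_U]$, so $d(p,g_k(r))$ is itself a constant-degree algebraic function of $r$, and the equation $d(p,g_k(r)) = r$ has $O(1)$ roots.

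The key steps, in order, are: (1) for each of the $O(n)$ candidate pairs $(p, g_k)$, form the constant-degree algebraic equation $d(p, g_k(r)) = r$ and collect its $O(1)$ roots, giving $O(n)$ critical values of $r$ in total; (2) sort these $O(n)$ values and binary-search among them using the decision algorithm of Section~\ref{sec:decision algorithm}, which runs in $O(n)$ time per call, to locate $\radrestricted$ between two consecutive critical values; (3) once $\radrestricted$ is thus localized, the sign of every predicate $d(p,g_k(\radrestricted)) \lessgtr r$ used in the binary search of Lemma~\ref{computing function value} is determined, so we can carry out that search and read off the subedge of $e_i$ containing $\phi_1(g_k(\radrestricted))$. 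Counting costs: step (1) takes $O(n)$ time since each geodesic-distance function is obtained in $O(\log n)$ time (or $O(1)$ time reusing the combinatorial data already maintained), step (2) takes $O(n\log n)$ time ($O(\log n)$ decision calls of $O(n)$ time each, plus $O(n\log n)$ to sort), and step (3) takes $O(\log^2 n)$ time. The bottleneck is step (2), so the whole computation is $O(n\log n)$ time, as claimed.

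The main obstacle I expect is justifying that all predicates needed by the (conceptually sequential) binary search of Lemma~\ref{computing function value} are resolved once $\radrestricted$ is pinned between two consecutive critical values — in other words, that the $O(n)$ root values collected in step (1) genuinely cover \emph{every} comparison the search might make, regardless of which branch it takes. This is true because the search over $\mathcal{L}_i$ only ever compares $r$ against $d(p,g_k(\radrestricted))$ for $p\in\mathcal{L}_i$, and we have enumerated the roots of $d(p,g_k(r))=r$ for all such $p$; between two consecutive critical values the sign of $d(p,g_k(r)) - r$ is invariant for every $p$ simultaneously, so the entire branch of the search is determined. The other point requiring care is that $g_k(r)$ may cross the boundary of a finer arc as $r$ varies, which would change the combinatorial structure of $\pi(p,g_k(r))$; but the endpoints of finer arcs were themselves included in $\mathcal{A}$ and hence their crossing radii are among the critical values of the previous subsection, so within $[r_L,r_U]$ (already refined so that the combinatorial structure of $\partial I_t(r)$ is constant) each $g_k(r)$ stays in a fixed finer arc, and $d(p,g_k(r))$ is a single constant-degree algebraic function throughout.
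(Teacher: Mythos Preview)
Your proposal is correct and arrives at the same $O(n\log n)$ bound, but it takes a more elaborate route than the paper. The paper's argument is simpler: it keeps the binary search over $\mathcal{L}_i$ \emph{sequential}. At each of the $O(\log n)$ steps there is a single median $p_{\mathrm{med}}$; since $g_k(r)$ stays in a fixed cell of $\spm{i}$ and $\spm{i+1}$ for $r\in[r_L,r_U]$, the function $r-d(p_{\mathrm{med}},g_k(r))$ is a single constant-degree algebraic expression, whose $O(1)$ roots are fed one by one into the $O(n)$-time decision algorithm to determine its sign at $r=\radrestricted$. Thus each step costs $O(n)$ and the total is $O(n\log n)$, with no sorting and no batching of roots. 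Your approach instead precomputes the roots of $d(p,g_k(r))=r$ for \emph{all} $p\in\mathcal{L}_i$, sorts the resulting $O(n)$ values, and localizes $\radrestricted$ among them with $O(\log n)$ decision calls; once localized, every predicate is resolved and the binary search is read off. This is valid, but your stated motivation---that the sequential dependence of the comparisons forces parallelization---is not actually the issue for a single $k$: sequentiality is harmless here because each comparison is already resolvable in $O(n)$ time. (In the paper, parallelization enters only in the paragraph \emph{after} this lemma, where the computation is done for all indices $k$ simultaneously.) What your batching buys is that the interval containing $\radrestricted$ is refined further in one shot, which could be reused; what the paper's approach buys is simplicity and avoidance of an $O(n\log n)$ sort.
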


\begin{proof}
  For a fixed $r$, we can compute $\phi_1(g_k(r))$ in $O(\log^2 n)$
  time by Lemma~\ref{computing function value}.  However, since we do
  not know the exact value $\radrestricted$, we cannot apply the
  algorithm of Lemma~\ref{computing function value}.  Instead, we
  again apply parametric search.

  The first step of the algorithm for finding $\phi_1(x)$ in
  Lemma~\ref{computing function value} is to compute
  $d(p_{\mathrm{med}}, x)$, where $p_{\mathrm{med}}$ is the median of
  $\mathcal{L}_i$.  Since $g_k(r)$ is contained in the same cell of
  $\spm{i}$ and in the same cell of $\spm{i+1}$ for all $r \in [r_L,
  r_U]$, the combinatorial structures of $\pi(p_{\mathrm{med}},
  g_k(r))$ are the same for all $r \in [r_L, r_U]$.  To determine
  whether $p_{\mathrm{med}}$ comes before or after
  $\phi_1(g_k(\radrestricted))$ from $v_i$ in $\mathcal{L}_i$, we
  check the sign of the value $\radrestricted-d(p_{\mathrm{med}},
  g_k(\radrestricted))$.  If it is positive, then $p_{\mathrm{med}}$
  comes before $\phi_1(g_k(\radrestricted))$.  If it is negative, then
  $p_{\mathrm{med}}$ comes after $\phi_1(g_k(\radrestricted))$.
  Otherwise, $p_{\mathrm{med}}$ equals $\phi_1(g_k(\radrestricted))$.
  We compute the roots of $r-d(p_{\mathrm{med}}, g_k(r))=0$ and apply
  the decision algorithm in Section~\ref{sec:decision algorithm} to
  the roots.  Since the farthest-point geodesic Voronoi diagrams have
  already been constructed, the running time for each call of the
  decision algorithm is $O(n)$.  Then we can determine the sign of the
  value in $O(n)$ time even though we still do not know the exact
  value $\radrestricted$.
  
  After repeating this step $O(\log n)$ times, we can finally find the
  subedge of $e_i$ containing $\phi_1(g_k(\radrestricted))$ in $O(n
  \log n)$ time.
\end{proof}

Since the total number of indices is $m = O(n)$, we can compute the
subedge of $e_i$ containing $\phi_1(g_k(\radrestricted))$ for all
indices $k \in [1,m]$ in $O(n^2 \log n)$ time.  To compute them
efficiently, we parallelize this procedure using $O(n)$ processors.  A
processor is assigned to each index.  In each iteration, we compute
the roots of $r-d(p_k,g_k(r))=0$ for all indices $k \in [1,m]$ and sort them,
where $p_k$ is the median of the current search space of an index $k$.
There are $O(n)$ roots. We apply binary search on the roots using our
decision algorithm and find the interval that containing
$\radrestricted$.  Then in $O(n \log n)$ time, we can finish the
comparisons in each iteration as we did in the proof of
Lemma~\ref{lemma:sorting}.  We need $O(\log n)$ iterations to find the
subedges, so we can compute the subedges for all indices in $O(n
\log^2 n)$ time.  We do this also for $\psi_1(x)$.  Similarly, we
compute the subedges containing the function values $\phi_2(x),
\psi_2(x)$ for all endpoints $x$ in $\partial I_2(\radrestricted)$
in $O(n \log^2 n)$ time.

Then, we compute the algebraic functions $\phi_1(g_k(r))$ and
$\phi_2(g'_{k'}(r))$ for all indices $k \in [1,m]$ and all indices $k' \in [1,m']$.
Then we sort the points in $\mathcal{L}_i$ and the points
$\phi_1(g_k(\radrestricted)), \phi_2(g'_{k'}(\radrestricted))$ for all indices
$k \in [1,m]$ and all indices $k' \in [1,m']$ in $O(n \log^2 n)$ time using
a way similar to the algorithm in Lemma~\ref{lemma:sorting}.

\begin{lemma}
  \label{sorting endpoints}
  The points $\phi_1(g_k(\radrestricted))$ and
  $\phi_2(g'_{k'}(\radrestricted))$ for all indices $k \in [1,m]$ and
  $k' \in [1,m']$, and the points in $\mathcal{L}_i$ can be sorted in
  $O(n \log^2 n)$ time.
\end{lemma}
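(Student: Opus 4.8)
The plan is to reuse the parametric-search machinery of Lemma~\ref{lemma:sorting}. We have $O(n)$ points to place in sorted order along $e_i$: the points $\phi_1(g_k(\radrestricted))$ and $\phi_2(g'_{k'}(\radrestricted))$ for all $k\in[1,m]$ and $k'\in[1,m']$, together with the points of $\mathcal{L}_i$ (which are already sorted from a previous step). By the preceding computation we know, for each $k$ and $k'$, the subedge of $e_i$ containing $\phi_1(g_k(\radrestricted))$ (resp.\ $\phi_2(g'_{k'}(\radrestricted))$) as well as an algebraic function $\phi_1(g_k(r))$ (resp.\ $\phi_2(g'_{k'}(r))$) of constant degree describing the position of that point on $e_i$ for all $r\in[r_L,r_U]$; we also regard each point of $\mathcal{L}_i$ as a constant function of $r$. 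Thus every point to be sorted is represented by an algebraic function of $r$ of constant degree, and the task is to sort their values at the unknown parameter $r=\radrestricted$.

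First I would set up a parallel sorting algorithm, e.g.\ the sorting network that sorts $O(n)$ items in $O(\log n)$ rounds using $O(n)$ processors~\cite{parallel-sorting}. One comparison asks, for two of the above functions $h_1(r)$ and $h_2(r)$, whether $h_1(\radrestricted)\le h_2(\radrestricted)$, i.e.\ which of the two corresponding points precedes the other along $e_i$. To answer it, compute the $O(1)$ roots of $h_1(r)=h_2(r)$ and invoke the decision algorithm of Section~\ref{sec:decision algorithm} at each root to locate $\radrestricted$ among them; since the farthest-point geodesic Voronoi diagrams and $\mathcal{L}_i,\mathcal{L}_j$ have already been constructed, each such call runs in $O(n)$ time, so one comparison costs $O(n)$ time. (When one or both of the functions is a constant coming from $\mathcal{L}_i$, the comparison is cheaper, but the $O(n)$ bound still applies.)

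Then I would apply parametric search~\cite{parametric} with this parallel sorting algorithm exactly as in Lemma~\ref{lemma:sorting}. In each of the $O(\log n)$ rounds, the parallel sorter prescribes $O(n)$ comparisons that are mutually independent; I collect the $O(n)$ candidate roots they produce, sort them, and perform a single binary search over them with the decision algorithm to pin down the interval of $\mathbb{R}$ containing $\radrestricted$. This resolves all $O(n)$ comparisons of a round in $O(n\log n)$ time, and over $O(\log n)$ rounds the whole sort takes $O(n\log^2 n)$ time. At the end the sorter has produced the desired order along $e_i$ of the points $\phi_1(g_k(\radrestricted))$, $\phi_2(g'_{k'}(\radrestricted))$ and the points of $\mathcal{L}_i$.

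The main obstacle is exactly the one that already arises in Lemma~\ref{lemma:sorting}: ensuring that the comparisons issued in a single round of the parallel sorter are independent of one another, so their critical values can be batched and resolved with one binary search using the decision algorithm rather than sequentially. This is precisely what parametric search with a (data-oblivious) sorting network provides, so nothing new is required here; one only needs to observe that every comparison value is an algebraic function of $r$ of constant degree (hence contributes $O(1)$ roots) and that the decision algorithm of Section~\ref{sec:decision algorithm} is a valid sign oracle for $\radrestricted$, both of which have already been established.
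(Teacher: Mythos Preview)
Your proposal is correct and follows exactly the approach the paper intends: the paper does not give a separate proof of this lemma but simply states (immediately before it) that the sort is done ``using a way similar to the algorithm in Lemma~\ref{lemma:sorting},'' and your write-up spells out precisely that parametric-search-with-parallel-sorting argument.
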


\subsection{Constructing quadruples consisting of two cells and two
  subedges}
Consider a quadruple $(x_1,x_2,y_1,y_2)$, where $x_t$ is a finer arc of
$\partial I_t(\radrestricted)$ for $t=1,2$, and $y_1$ and $y_2$ are
subedges in $e_i$ and $e_j$, respectively.  We say the quadruple
$(x_1,x_2,y_1,y_2)$ is \emph{optimal} if there is a $2$-center
$(c_1,c_2)$ such that $c_1 \in x_1, c_2 \in x_2$ and $\pu \in y_1, \pw
\in y_2$ for some point-partition $(\pu,\pw)$ with respect to $(c_1,
c_2,\radrestricted)$.  
Given an optimal quadruple, we can
compute $c_1$ and $c_2$ in constant time by the following lemma.

\begin{lemma}
  Given an optimal $4$-tuple $(x_1,x_2,y_1,y_2)$, a $2$-center
  $(c_1,c_2)$ restricted to the candidate edge pair $(e_i,e_j)$ can be
  computed in constant time.
\end{lemma}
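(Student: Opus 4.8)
The plan is to turn the computation into the solution of a system of $O(1)$ algebraic equations of constant degree. Because $x_1$ and $x_2$ are \emph{finer} arcs, the combinatorial structure of the geodesic path from any point of $x_1$ (resp.\ $x_2$) to any vertex of $P$ is fixed; because $y_1$ and $y_2$ are subedges, the same holds for geodesic paths from any point of $y_1$ or $y_2$ to any vertex. Hence, parametrizing $c_1$ by its position on the arc $x_1$ (which, for a fixed radius $r$, is a circular arc about a known anchor, or a segment of $\partial P$), $c_2$ by its position on $x_2$, $\alpha$ by its position on $y_1$, $\beta$ by its position on $y_2$, and keeping $r=\radrestricted$ as an unknown, every distance that appears below --- $d(c_1,\alpha)$, $d(c_1,\beta)$, $d(c_2,\alpha)$, $d(c_2,\beta)$, and $d(c_1,v)$, $d(c_2,v)$ for the few relevant vertices $v$ of $P$ --- is an algebraic function of constant degree in these at most five real parameters. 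In particular, $c_1\in x_1\subseteq\partial I_1(\radrestricted)$ forces $d(c_1,v_k)=r$, where $v_k$ is the site of $\mathsf{FV}_1$ whose arc contains $x_1$, and likewise $d(c_2,v_\ell)=r$.

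Next I would branch on the configuration of the witnessing $2$-center; Configuration~1 has already been handled in $O(n)$ time, so assume Configuration~2 or~3. In Configuration~3 the two endpoints $\alpha,\beta$ of $b(c_1,c_2)$ satisfy $d(c_1,\alpha)=d(c_2,\alpha)=d(c_1,\beta)=d(c_2,\beta)=r$; combining these with $d(c_1,v_k)=r$, $d(c_2,v_\ell)=r$, and --- since $\radrestricted$ is the minimum radius of a $2$-set restricted to $(e_i,e_j)$ --- the stationarity of $r$ along the variety cut out by these equalities, gives a constant-size system of constant-degree polynomial equations in the at most five unknowns, whose real solutions form a finite set enumerable in $O(1)$ time. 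In Configuration~2, exactly one of $\alpha,\beta$ --- say $\alpha$ --- satisfies $d(c_1,\alpha)=d(c_2,\alpha)=r$, while $c_1$ is the $1$-center of $\subpolygon{\alpha}{\beta}$ and $c_2$ of $\subpolygon{\beta}{\alpha}$; using that a geodesic $1$-center is determined by at most three points equidistant from it --- here $\alpha$ together with the at most two relevant extreme vertices of $\subchain{v_{i+1}}{v_j}$ (resp.\ $\subchain{v_{j+1}}{v_i}$), whose identities are read off from the combinatorial data --- one again gets a determined constant-size system of constant-degree equations, whose real solutions are enumerable in $O(1)$ time.

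Finally, for each of the $O(1)$ candidate tuples $(c_1,c_2,r)$ produced by these systems I would verify validity: that $c_1\in x_1$, $c_2\in x_2$, and that the induced $(\alpha,\beta)\in y_1\times y_2$ is a genuine point-partition with respect to $(c_1,c_2,r)$ with the correct radius; then output the pair $(c_1,c_2)$ that passes. Since the number of candidates is a constant, this costs $O(1)$ time, and because the $4$-tuple $(x_1,x_2,y_1,y_2)$ is optimal at least one candidate is a $2$-center restricted to $(e_i,e_j)$, so the output is correct.

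The main obstacle I expect is the case analysis of the second step: writing down \emph{exactly} the defining equations in each configuration and each sub-case (which endpoint is tight, whether each of $x_1,x_2$ is a circular arc or lies on $\partial P$, whether $\alpha$ or $\beta$ is a vertex, and how many and which vertices co-determine each $1$-center), and verifying in every case that the resulting system is both of constant size and of constant degree --- which is precisely what makes solving it and sieving the $O(1)$ candidates take only $O(1)$ time, all of it resting on the fact that the chosen finer arcs and subedges fix every combinatorial structure involved.
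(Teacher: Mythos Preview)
Your approach is workable but takes a more circuitous route than the paper. The paper does not branch on the three configurations at all, and it does not introduce $r$ as a separate unknown. Instead, it observes that since $x_t$ lies in a single cell of $\mathsf{FV}_t$, the farthest vertex from any point of $x_t$ among the vertices of the relevant subchain is a fixed known site $f_t$; hence for any $w_1\in x_1$, $w_2\in x_2$, $w_3\in y_1$, $w_4\in y_2$, the radius of the induced pair is exactly
\[
g(w_1,w_2,w_3,w_4)=\max\{d(w_1,f_1),d(w_1,w_3),d(w_1,w_4),d(w_2,f_2),d(w_2,w_3),d(w_2,w_4)\}.
\]
Because the combinatorial structure of every geodesic path appearing here is fixed by the finer arcs and subedges, $g$ is the maximum of six constant-degree algebraic functions in the four parameters, and its minimum is found in $O(1)$ time. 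This single min--max formulation subsumes your configuration analysis: the KKT conditions at the minimizer of $g$ are precisely the equalities you set up case by case. Your route buys explicitness about which equalities are tight, at the cost of a longer case list; the paper's route buys brevity by packaging everything into one optimization, at the cost of glossing over exactly how that minimization is carried out. One small imprecision in your write-up: at most \emph{one} vertex of the subchain (namely $f_t$) can be tight for $c_t$ when $x_t$ lies strictly inside a Voronoi cell, not ``at most two'' as you allow --- the second and third determiners of each $1$-center must come from $\{\alpha,\beta\}$.
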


\begin{proof}
  Let $(c_1^*, c_2^*)$ be the $2$-center corresponding to the given
  $4$-tuple.  Consider the subdivision $\mathcal{M}_t$ which is the
  overlay of the graph obtained by extending the edges of the shortest
  path trees rooted at ${v_i}$, ${v_{i+1}}$, ${v_j}$, ${v_{j+1}}$ in
  both directions with $\mathsf{FV}_t$ for $t=1,2$.  (We do not
  construct $\mathcal{M}_1$ and $\mathcal{M}_2$ explicitly.  We
  introduce these subdivisions to make it easier to understand the
  analysis of our algorithm.)  By the construction of finer arcs and
  subedges, 
  $x_t$ is contained in a cell of $\mathcal{M}_t$ for $t=1,2$.
  Moreover, each endpoint of a finer arc lies either on an edge of a
  shortest path map or on an edge of a farthest-point geodesic Voronoi
  diagram.  Let $f_t$ be the site of the cell of $\mathsf{FV}_t$
  containing $x_t$.   Let $w_1^*,w_2^*,w_3^*$ and $w_4^*$ be
  the points that minimize the following function
  $g(w_1,w_2,w_3,w_4)$ for $w_1 \in x_1, w_2 \in x_2, w_3 \in y_1$,
  and $w_4 \in y_2$.
  \begin{linenomath}
    \begin{equation*}
 	g(w_1,w_2,w_3,w_4) =\max\{d(w_1,f_1),d(w_1,w_3),d(w_1,w_4),d(w_2,f_2),d(w_2,w_3),d(w_2,w_4)\}
     \end{equation*}  
  \end{linenomath}
  Then $(w_1^*,w_2^*)$ is the $2$-center restricted to $(e_i,e_j)$.

  Since each element of the quadruple is fully contained in some cell of
  $\mathcal{M}_1$ or $\mathcal{M}_2$, $g(w_1,w_2,w_3,w_4)$ is the maximum of
  the six algebraic functions of constant degree.  Thus we can
  find the minimum of $g(w_1,w_2,w_3,w_4)$ in constant time.
\end{proof}

However, there are more than quadratic number of such
quadruples. 
Instead of considering all of them, we construct a set of quadruples
with size $O(n)$ containing at least one optimal quadruple as follows.

We first compute the event points on $\partial I_t(\radrestricted)$.
An event point on $\partial I_t(\radrestricted)$ is a point $x$ with
$d(x, p) = \radrestricted$ for some point $p \in \mathcal{L}_i$.  We
do not need to compute the exact positions for event points.  Instead,
we compute their relative positions, i.e., we implicitly maintain those
points sorted in clockwise order along $\partial I_t(\radrestricted)$.
We can do this in $O(n \log^2 n)$ time as we did in the proof of
Lemma~\ref{sorting endpoints}.

Now we have all the event points that subdivide the
chain $\partial I_1(\radrestricted)$ and $\partial
I_2(\radrestricted)$ into $O(n)$ subarcs.  Moreover, we have
subdivided the edges $e_i$ and $e_j$ into $O(n)$ subedges (refer to
Section~\ref{sec:computing a $2$-center for a pair of subchains}).
Then, we construct the set of quadruples $(x_1,x_2,y_1,y_2)$ such that
there are event points $p(r) \in x_1$ and $q(r) \in x_2$, which are
indeed algebraic functions, satisfying $\phi_1(p(\radrestricted)) \in
y_1$, $\psi_1(p(\radrestricted)) \in y_2$, $\phi_2(q(\radrestricted))
\in y_1$ and $\psi_2(q(\radrestricted)) \in y_2$.  Note that each
event point $p(r)$ lies in the same cell of $\mathcal{M}_t$ for all $r
\in [r_L, r_U]$ by the construction.  Using the information we
computed before, we can construct the set of those quadruples in time
linear to the size of the set, which is $O(n)$,

Moreover, since we consider all quadruples one of which is an optimal
$4$-tuple, we can find a $2$-center restricted to $(e_i, e_j)$ using
the procedure in this section in $O(n \log^2 n)$ time.  The following
lemma and theorem summarize the our result.

\begin{lemma}
  A $2$-center restricted to a candidate edge pair can be computed in
  $O(n \log ^2 n)$ time.
\end{lemma}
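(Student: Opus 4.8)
The plan is to collect the running times of the stages constructed in this section and to observe that the bottleneck is the repeated use of parametric search, each instance of which costs $O(n\log^2 n)$. Before anything that depends on the unknown optimum $\radrestricted$, I would perform all the $r$-independent preprocessing: build the two farthest-point geodesic Voronoi diagrams $\mathsf{FV}_1,\mathsf{FV}_2$ of the vertices of $\subchain{v_{i+1}}{v_j}$ and of $\subchain{v_{j+1}}{v_i}$ in $O(n\log\log n)$ time, build the four shortest path maps $\spm{i},\spm{i+1},\spm{j},\spm{j+1}$ in $O(n)$ time, and form and sort the sets $\mathcal{L}_i$ and $\mathcal{L}_j$ along $e_i$ and $e_j$ in $O(n\log n)$ time. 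Once this is done, each call to the decision algorithm of Section~\ref{sec:decision algorithm} runs in $O(n)$ time, so $O(n)$ is the unit cost of a single comparison in every parametric search below. A candidate $2$-set for Configuration~1 (where $c_1$ and $c_2$ are just the $1$-centers of $\subpolygon{v_{i+1}}{v_j}$ and $\subpolygon{v_{j+1}}{v_i}$) can be computed directly in $O(n)$ time, so it remains to handle Configurations~2 and~3.

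Next I would localize $\radrestricted$: by binary search over the $O(n)$ distances in $\mathcal{R}$ using the decision algorithm, in $O(n\log n)$ time, I obtain an interval $[r_L,r_U]$ containing $\radrestricted$ on which the combinatorial structures of $\partial I_1(r)$ and $\partial I_2(r)$ are fixed; hence every endpoint of a finer arc, and every intersection of $\partial I_t(r)$ with an edge of the four shortest path maps, is an algebraic function of $r$ of constant degree on $[r_L,r_U]$. Using this I would carry out, each via parametric search with a parallel generic algorithm: sorting the $O(n)$ endpoints of the finer arcs in clockwise order along $\partial I_t(\radrestricted)$ in $O(n\log^2 n)$ time by Lemma~\ref{lemma:sorting}; computing, for each such endpoint $g_k(\radrestricted)$, the subedge of $e_i$ containing $\phi_1(g_k(\radrestricted))$ (and likewise for $\psi_1$ and for $\phi_2,\psi_2$ on $\partial I_2(\radrestricted)$) in $O(n\log n)$ time per endpoint and $O(n\log^2 n)$ time in total by running the $O(n)$ searches in lockstep with $O(n)$ processors; sorting the points $\phi_1(g_k(\radrestricted))$ and $\phi_2(g'_{k'}(\radrestricted))$ together with $\mathcal{L}_i$ along $e_i$ (and the $\psi$-values with $\mathcal{L}_j$ along $e_j$) in $O(n\log^2 n)$ time by Lemma~\ref{sorting endpoints}; and computing the $O(n)$ event points on $\partial I_t(\radrestricted)$, maintained only implicitly in sorted order, in $O(n\log^2 n)$ time in the same fashion.

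From these sorted sequences I would construct, in time linear in its size, the set of $O(n)$ quadruples $(x_1,x_2,y_1,y_2)$ of two finer arcs and two subedges for which some event point $p(\radrestricted)\in x_1$ and some event point $q(\radrestricted)\in x_2$ satisfy $\phi_1(p(\radrestricted))\in y_1$, $\psi_1(p(\radrestricted))\in y_2$, $\phi_2(q(\radrestricted))\in y_1$, and $\psi_2(q(\radrestricted))\in y_2$; by the construction of the event points and subedges this set contains an optimal quadruple. By the preceding lemma each quadruple yields a candidate $2$-set in $O(1)$ time, so taking, among the Configuration-1 candidate and the $O(n)$ candidates coming from the quadruples, one of minimum radius gives a $2$-center restricted to $(e_i,e_j)$, since by construction one of the quadruples is optimal. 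This final selection costs $O(n)$, so the whole computation is dominated by the parametric-search steps and runs in $O(n\log^2 n)$ time.

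I expect the main obstacle to be the parametric-search bookkeeping rather than any geometric fact: for each sorting or searching subroutine one must exhibit a generic, parallel version all of whose comparisons are sign tests of constant-degree polynomials in $r$, so that in one parallel round the $O(n)$ roots produced by the $O(n)$ processors can be sorted and resolved by a single binary search that calls the $O(n)$-time decision algorithm $O(\log n)$ times, giving $O(n\log n)$ per round and $O(\log n)$ rounds. Making this rigorous relies throughout on the invariant that none of the relevant combinatorial structures---$\partial I_1(r)$, $\partial I_2(r)$, and their overlay with $\spm{i},\spm{i+1},\spm{j},\spm{j+1}$---changes on $[r_L,r_U]$, which is precisely what licenses treating every quantity in sight as a fixed algebraic function of $r$ of bounded degree.
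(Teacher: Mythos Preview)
Your proposal is correct and follows essentially the same route as the paper: preprocess the $r$-independent structures, localize $\radrestricted$ in $[r_L,r_U]$ by binary search over $\mathcal{R}$, carry out the finer-arc sorting, the $\phi/\psi$ subedge location, and the event-point computation by parametric search at $O(n\log^2 n)$ each, assemble the $O(n)$ quadruples, and read off the restricted $2$-center in constant time per quadruple. This is exactly the structure of Section~5, and the lemma in the paper is stated as a summary of that section rather than given a separate proof.
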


\begin{theorem}
 For a simple polygon $P$ with $n$ vertices, a $2$-center of $P$
 can be computed in $O(n^2 \log ^2 n)$ time.
\end{theorem}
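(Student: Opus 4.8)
The plan is to assemble the theorem from the two halves of the paper: the enumeration of a small set of candidate edge pairs, and the per-pair optimization routine. First I would invoke Lemma~\ref{optimal partition}: among all candidate edge pairs there is an optimal edge-partition, i.e. a pair $(e_i,e_j)$ such that the restricted $2$-center it induces — the pair of $1$-centers of $\subpolygon{\alpha^*}{\beta^*}$ and $\subpolygon{\beta^*}{\alpha^*}$, where $(\alpha^*,\beta^*)$ minimizes $\maxrad{x}{y}$ over $e_i\times e_j$ — is a geodesic $2$-center of $P$ with radius $r^*$. By Lemma~\ref{the numbero f candidate edge pairs} there are only $O(n)$ candidate edge pairs.

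Next I would argue that taking a minimum is legitimate: for \emph{every} candidate edge pair, the restricted $2$-center $(c_1,c_2)$ it produces is a feasible $2$-set for $P$. Indeed, writing $r=\radrestricted$ for that pair, the geodesic-convexity argument from the proof of Lemma~\ref{partition} gives $\subpolygon{\alpha^*}{\beta^*}\subseteq D_r(c_1)$ and $\subpolygon{\beta^*}{\alpha^*}\subseteq D_r(c_2)$, and these two subpolygons cover $P$; hence $r\ge r^*$. Combined with the previous paragraph, the smallest radius achieved by a restricted $2$-center over all candidate edge pairs equals $r^*$, and the corresponding $2$-set is a geodesic $2$-center of $P$. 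So the algorithm is: compute all candidate edge pairs, run the optimization algorithm of the previous section on each of them, and return the restricted $2$-center of minimum radius.

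For the running time: computing the set of all candidate edge pairs takes $O(n^2\log n)$ time by the lemma preceding Section~\ref{sec:decision algorithm}. For each of the $O(n)$ candidate edge pairs $(e_i,e_j)$ we first build the per-pair data that are independent of the radius — the two farthest-point geodesic Voronoi diagrams of the boundary chains in $O(n\log\log n)$ time and the sorted sets $\mathcal{L}_i,\mathcal{L}_j$ in $O(n\log n)$ time — and then run the optimization algorithm in $O(n\log^2 n)$ time by the last lemma of the previous section. This is $O(n\log^2 n)$ per pair, hence $O(n^2\log^2 n)$ over all $O(n)$ pairs; adding the $O(n^2\log n)$ spent on the enumeration yields the claimed $O(n^2\log^2 n)$ bound.

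The main obstacle is not this bookkeeping but ensuring the per-pair optimization routine is correct for whichever candidate edge pair turns out to be optimal: one must check that Configurations~1, 2 and~3 are exhaustive for an optimal $2$-center, and that the $O(n)$-size family of quadruples $(x_1,x_2,y_1,y_2)$ built from the event points always contains an optimal quadruple, so that minimizing the constant-degree function $g$ over that quadruple returns exactly $(c_1^*,c_2^*)$. Given the lemmas of the previous section — feasibility of restricting the centers to $\sarcs_1$ and $\sarcs_2$ (Lemma~\ref{boundary is sufficient}), monotonicity of the coverage functions on each subchain (Corollary~\ref{cor:extreme}), and correctness of the parametric-search simulation of the decision algorithm — this reduces to routine verification, and the theorem follows.
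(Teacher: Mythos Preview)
Your proposal is correct and follows exactly the approach the paper intends: the theorem is stated without proof as a summary, and you have correctly assembled it from Lemma~\ref{optimal partition}, Lemma~\ref{the numbero f candidate edge pairs}, the $O(n^2\log n)$ enumeration lemma, and the $O(n\log^2 n)$ per-pair optimization lemma. Your explicit feasibility argument (that every restricted $2$-center has radius at least $r^*$) is a useful addition that the paper leaves implicit.
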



\begin{thebibliography}{10}
	
	\bibitem{1-center-SoCG}
	Hee-Kap Ahn, Luis Barba, Prosenjit Bose, Jean-Lou {De Carufel}, Matias Korman,
	and Eunjin Oh.
	\newblock A linear-time algorithm for the geodesic center of a simple polygon.
	\newblock In {\em 31st International Symposium on Computational Geometry (SoCG
		2015)}, 2015.
	
	\bibitem{1-center}
	Hee-Kap Ahn, Luis Barba, Prosenjit Bose, Jean-Lou {De Carufel}, Matias Korman,
	and Eunjin Oh.
	\newblock A linear-time algorithm for the geodesic center of a simple polygon.
	\newblock {\em Discrete \& Computational Geometry}, 56(4):836--859, 2016.
	
	\bibitem{FVD}
	Boris Aronov, Steven Fortune, and Gordon Wilfong.
	\newblock The furthest-site geodesic voronoi diagram.
	\newblock {\em Discrete \& Computational Geometry}, 9(1):217--255, 1993.
	
	\bibitem{1-center-first}
	Tetsuo Asano and Godfried~T. Toussaint.
	\newblock Computing geodesic center of a simple polygon.
	\newblock Technical Report SOCS-85.32, McGill University, 1985.
	
	\bibitem{geodesic-disk}
	Magdalene~G. Borgelt, Marc Van~Kreveld, and Jun Luo.
	\newblock Geodesic disks and clustering in a simple polygon.
	\newblock {\em International Journal of Computational Geometry \&
		Applications}, 21(06):595--608, 2011.
	
	\bibitem{Euclidean-2-center}
	Timothy~M. Chan.
	\newblock More planar two-center algorithms.
	\newblock {\em Computational Geometry}, 13(3):189--198, 1999.
	
	\bibitem{weakly}
	Hsien-Chih Chang, Jeff Erickson, and Chao Xu.
	\newblock Detecting weakly simple polygons.
	\newblock In {\em Proceedings of the Twenty-Sixth Annual ACM-SIAM Symposium on
		Discrete Algorithms (SODA 2015)}, pages 1655--1670, 2015.
	
	\bibitem{parallel-sorting}
	Richard Cole.
	\newblock Parallel merge sort.
	\newblock {\em SIAM Journal on Computing}, 17(4):770--785, 1988.
	
	\bibitem{perturb}
	Herbert Edelsbrunner and Ernst~Peter M\"{u}cke.
	\newblock Simulation of simplicity: A technique to cope with degenerate cases
	in geometric algorithms.
	\newblock {\em ACM Transactions on Graphics}, 9(1):66--104, 1990.
	
	\bibitem{FederGreene88}
	Tom\'as Feder and Daniel~H. Greene.
	\newblock Optimal algorithms for approximate clustering.
	\newblock In {\em Proceedings of 20th ACM Symposium on Theory of Computing
		(STOC 1988)}, pages 434--444, 1988.
	
	\bibitem{shortest-path-tree}
	Leonidas Guibas, John Hershberger, Daniel Leven, Micha Sharir, and RobertE.
	Tarjan.
	\newblock Linear-time algorithms for visibility and shortest path problems
	inside triangulated simple polygons.
	\newblock {\em Algorithmica}, 2(1-4):209--233, 1987.
	
	\bibitem{Euclidean-k-center}
	R.Z. Hwang, R.C.T. Lee, and R.C. Chang.
	\newblock The slab dividing approach to solve the {E}uclidean {$p$}-center
	problem.
	\newblock {\em Algorithmica}, 9(1):1--22, 1993.
	
	\bibitem{clustering-reviw}
	A.~K. Jain, M.~N. Murty, and P.~J. Flynn.
	\newblock Data clustering: a review.
	\newblock {\em ACM Computing Surveys}, 31(3):264--323, 1999.
	
	\bibitem{convex-center}
	Sung~Kwon Kim and Chan{-}Su Shin.
	\newblock Efficient algorithms for two-center problems for a convex polygon.
	\newblock In {\em Proceedings of the 6th International Computing and
		Combinatorics Conference ({COCOON} 2000)}, pages 299--309, 2000.
	
	\bibitem{parametric}
	Nimrod Megiddo.
	\newblock Applying parallel computation algorithms in the design of serial
	algorithms.
	\newblock {\em Journal of ACM}, 30(4):852--865, 1983.
	
	\bibitem{DBLP:journals/siamcomp/Megiddo83a}
	Nimrod Megiddo.
	\newblock Linear-time algorithms for linear programming in {$\mathbb{R}^3$} and
	related problems.
	\newblock {\em {SIAM} J. Comput.}, 12(4):759--776, 1983.
	
	\bibitem{megiddo-minmax}
	Nimrod Megiddo.
	\newblock On the ball spanned by balls.
	\newblock {\em Discrete \& Computational Geometry}, 4(1):605--610, 1989.
	
	\bibitem{fvd_boundary}
	Eunjin Oh, Luis Barba, and Hee-Kap Ahn.
	\newblock The farthest-point geodesic {Voronoi} diagram of points on the
	boundary of a simple polygon.
	\newblock In {\em Proceedings of the 32nd International Symposium on
		Computational Geometry (SoCG 2016)}, pages 56:1--56:15, 2016.
	
	\bibitem{1-center1989}
	R.~Pollack, M.~Sharir, and G.~Rote.
	\newblock Computing the geodesic center of a simple polygon.
	\newblock {\em Discrete \& Computational Geometry}, 4(1):611--626, 1989.
	
	\bibitem{Vigan2013}
	Ivo Vigan.
	\newblock Packing and covering a polygon with geodesic disks, 2013.
	\newblock {http://arxiv.org/abs/arXiv:1311.6033}.
	
\end{thebibliography}
\end{document}